\newtheorem{theorem}{Theorem}
\newtheorem{lemma}{Lemma}
\newtheorem{assumption}{Assumption}
\theoremstyle{definition}
\newtheorem{definition}{Definition}
\newtheorem{example}{Example}
\newtheorem{remark}{Remark}
\newcommand{\Red}{{\rm red}}
\newcommand{\Blue}{{\rm blue}}
\newcommand{\diverge}{\to\infty}
\newcommand{\reals}{{\mathbb{R}}}
\newcommand{\eexp}{{\rm e}}
\newcommand{\diff}{{\rm d}}
\newcommand{\Expect}{\mathbb{E}}
\newcommand{\expect}[1]{\mathbb{E}\left[ #1 \right]}
\newcommand{\prob}[1]{ \mathbb{P}\left\{ #1 \right\} }
\newcommand{\Bern}{{\rm Bern}}
\newcommand{\Pois}{{\rm Pois}}
\newcommand{\ie}{i.e.\xspace}
\newcommand{\pth}[1]{\left( #1 \right)}
\newcommand{\qth}[1]{\left[ #1 \right]}
\newcommand{\sth}[1]{\left\{ #1 \right\}}
\newcommand{\iprod}[2]{\left \langle #1, #2 \right\rangle}
\newcommand{\Iprod}[2]{\langle #1, #2 \rangle}
\newcommand{\indc}[1]{{\mathds{1}_{\left\{{#1}\right\}}}}
\newcommand{\calB}{{\mathcal{B}}}
\newcommand{\calN}{{\mathcal{N}}}
\newcommand{\calX}{{\mathcal{X}}}
\DeclareMathAlphabet{\varmathbb}{U}{bbold}{m}{n}
\newcommand{\argmax}{\mathrm{argmax}}
\newcommand{\argmin}{\mathrm{argmin}}
\renewcommand{\hat}{\widehat}
\renewcommand{\tilde}{\widetilde}
\newcommand{\ER}{Erd\H{o}s-R\'{e}nyi\xspace}
\title{Consistent recovery threshold of hidden nearest neighbor graphs \\
}
\author{Jian Ding, Yihong Wu, Jiaming Xu, and Dana Yang\thanks{
J.\ Ding is with Department of Statistics, The Wharton School, University of Pennsylvania, Philadelphia, USA, \texttt{dingjian@wharton.upenn.edu}.
Y.\ Wu is with Department of Statistics and Data Science, Yale University, New Haven, USA, \texttt{yihong.wu@yale.edu}.
J.\ Xu and D.\ Yang are with The Fuqua School of Business, Duke University, Durham NC, USA, \texttt{\{jx77,xiaoqian.yang\}@duke.edu}.
}}
\date{\today}
\begin{document}

\maketitle

\begin{abstract}

Motivated by applications such as discovering strong ties in social networks
and assembling genome subsequences in biology, we study the problem of recovering a hidden $2k$-nearest neighbor (NN) graph in an $n$-vertex complete graph,
whose edge weights are independent and distributed according to $P_n$ for edges in the hidden $2k$-NN graph 
and $Q_n$ otherwise. The special case of Bernoulli distributions corresponds to a variant of the Watts-Strogatz small-world graph.
We focus on two types of asymptotic recovery guarantees as $n\to \infty$: 
(1) exact recovery: all edges are classified correctly with probability tending to one;
(2) almost exact recovery: the expected number of misclassified edges is $o(nk)$. 
We show that the maximum likelihood estimator achieves (1) exact recovery
for $2 \le k \le n^{o(1)}$ if $ \liminf \frac{2\alpha_n}{\log n}>1$; 
(2) almost exact recovery for $ 1 \le k \le o\left( \frac{\log n}{\log \log n} \right)$ 
if 
$
 \liminf \frac{kD(P_n||Q_n)}{\log n}>1,
$
where $\alpha_n \triangleq -2 \log \int \sqrt{d P_n d Q_n}$ is the R\'enyi divergence of order $\frac{1}{2}$
and $D(P_n||Q_n)$ is the Kullback-Leibler divergence. 
Under mild distributional assumptions, these conditions are shown to be information-theoretically necessary for any algorithm to succeed. 
 A key challenge in the analysis is the enumeration of $2k$-NN graphs that
 differ from the hidden one by a given number of edges.

\end{abstract}

\section{Introduction}
\label{sec:intro}

The strong and weak ties are essential for information diffusion, social cohesion, and community organization in social networks~\cite{granovetter1977strength}.
The strong ties  between close friends are responsible for forming 
tightly-knit groups,
while the weak ties  between acquaintances are crucial for binding groups of strong ties together~\cite{easley2010networks}. 
The celebrated Watts-Strogatz small-world graph~\cite{watts1998collective} is a simple network model that exhibits both strong and weak ties. It posits that $n$ nodes are located on a ring and
starts with a $2k$-nearest neighbor (NN) graph of strong ties, where each node is connected to its $2k$ nearest neighbors ($k$ on the left and $k$ on the right) on the ring. Then to generate weak ties, for every node, each of its strong ties is rewired with probability $\epsilon$
to a node chosen uniformly at random. As $\epsilon$ varies from $0$ to $1$, the  graph interpolates
between a ring lattice and an \ER random graph; for intermediate values of $\epsilon$,
the graph is a small-world network: highly clustered with many triangles,
yet with a small diameter. 

The Watts-Strogatz small-world graph and its variants, albeit 
simple, have been extensively studied and widely used in various disciplines to model real networks beyond social networks, such as
academic collaboration network~\cite{newman2001structure}, metabolic networks~\cite{wagner2001small}, 
brain networks~\cite{bassett2006small}, and word co-occurrence networks in language modeling~\cite{cancho2001small,
motter2002topology}.  Most of the previous work
focuses on studying the structures of small-world graphs~\cite{newman1999scaling} and their algorithmic consequences~\cite{Kleinberg00,moore2000epidemics,saramaki2005modelling}.
However, in many practical applications, it is also of interest to distinguish strong ties 
from weak ones~\cite{marsden1984measuring,gilbert2009predicting,gilbert2012predicting,rotabi2017detecting}. For example, in Facebook~\cite{facebook} or Twitter network~\cite{FM2317}, identifying the close ties among a user's potentially 
hundreds of friends provides valuable information for marketing and ad placements. 
Even when additional link attribute information (such as the communication time in who-talks-to-whom networks~\cite{onnela2007structure}) are available to be used to measure the strength of the tie, the task of discovering strong ties could still be challenging, as the link attributes are potentially noisy or only partially observed, obscuring the inherent tie strength. Therefore, it is of fundamental importance, in both theory and practice, to understand
when and how we can infer strong ties from the noisy and partially observed network data. 
In this paper, we address this question in the following statistical model:
\begin{definition}[Hidden $2k$-NN graph recovery] \label{def:2knnhidden}\hfill

\noindent
{\bf Given}: $n \ge 1$, and two distributions $P_n$ and $Q_n$, parametrized by $n$.\\
{\bf Observation}: A randomly weighted, undirected complete graph $w$
 with a hidden $2k$-NN graph $x^*$ on $n$ vertices, such that 
for each edge $e$, the edge weight $w_e$ is distributed as $P_n$ if $e$ is an edge in $x^*$ and as $Q_n$ otherwise. \\
{\bf Inference Problem:} Recover the hidden $2k$-NN graph  $x^*$ from the observed random graph.
\end{definition}

\begin{figure}[ht]%
        \centering
        \begin{subfigure}{0.3\textwidth}
            \centering
            \includegraphics[width=\textwidth]{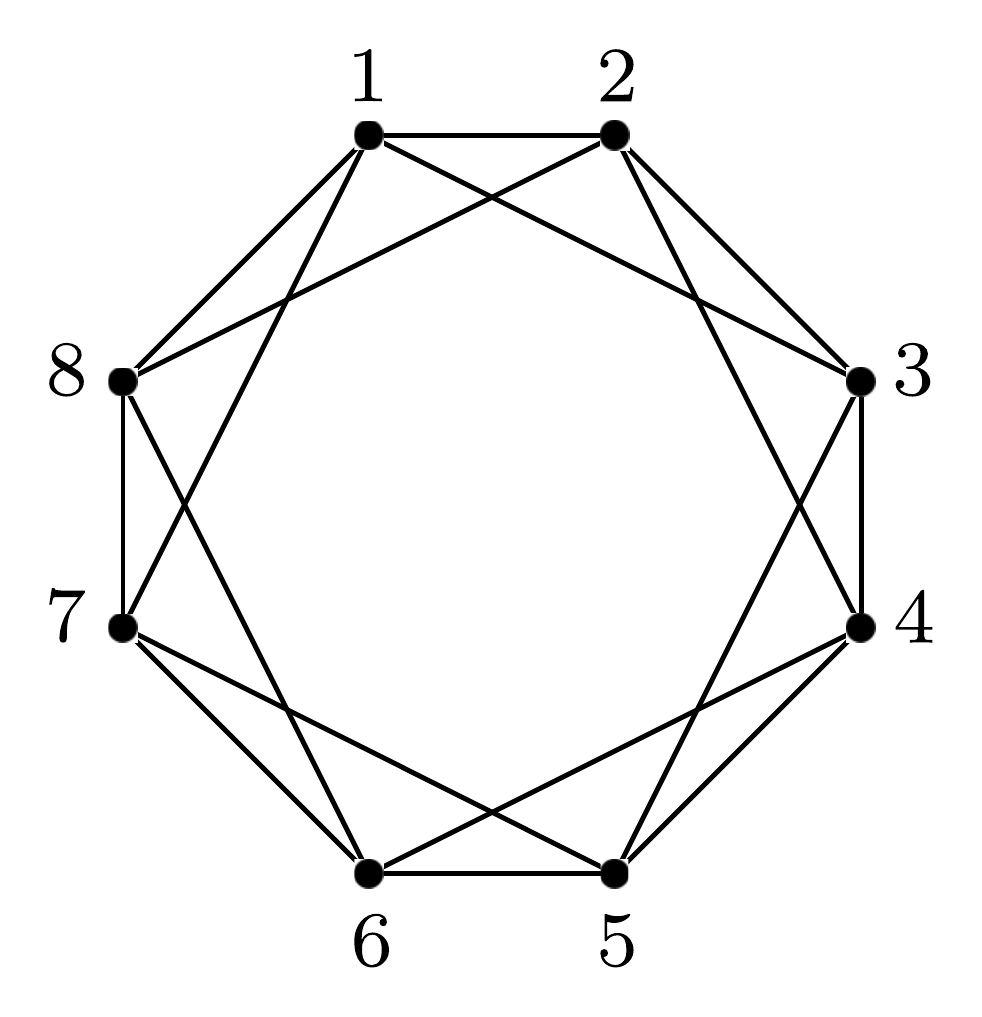} 
            \caption{The $2k$-NN graph corresponding to the Hamiltonian cycle $(1,2,3,4,5,6,7,8,1)$.}
            \label{fig:2kNN.a}
        \end{subfigure}
        \quad
        \begin{subfigure}{0.3\textwidth}  
            \centering 
            \includegraphics[height=\textwidth]{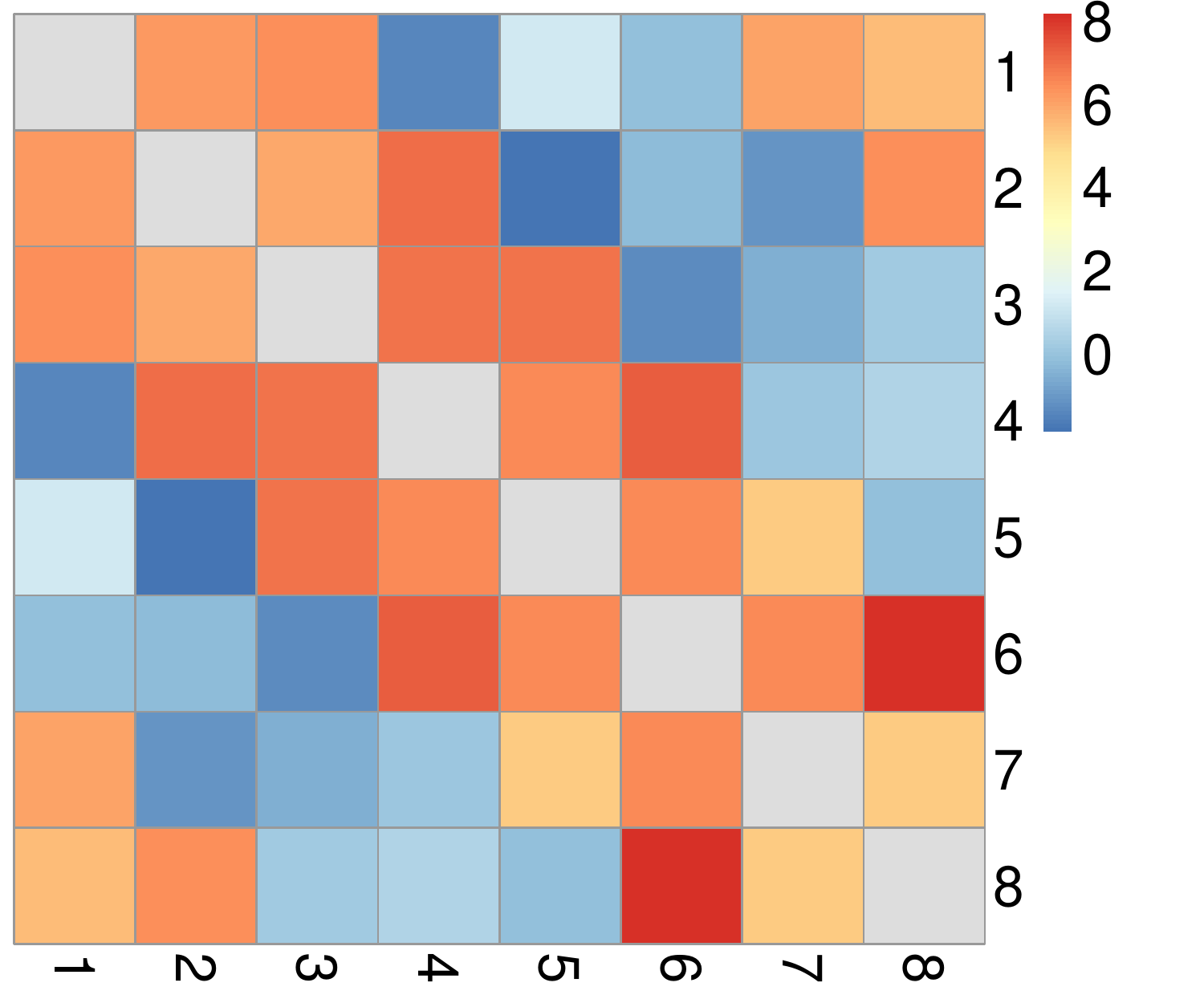} 
            \caption{Heatmap of one realization of $w$ with the hidden $2k$-NN graph in (a).}
            \label{fig:2kNN.b}
        \end{subfigure}
        \vskip\baselineskip
        \begin{subfigure}{0.3\textwidth}   
            \centering 
            \includegraphics[width=\textwidth]{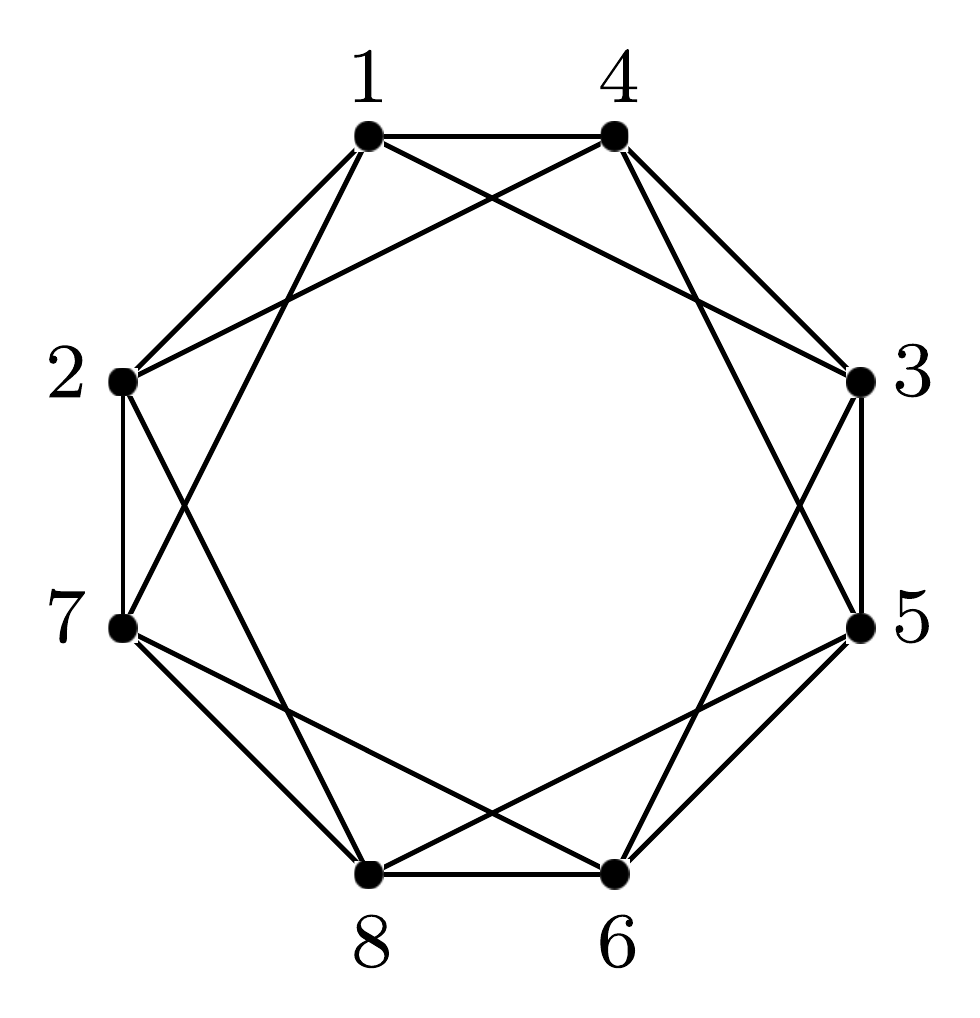}
            \caption{The $2k$-NN graph corresponding to the Hamiltonian cycle $(1,4,3,5,6,8,7,2,1)$.}
            \label{fig:2kNN.c}
        \end{subfigure}
        \quad
        \begin{subfigure}{0.3\textwidth}   
            \centering 
            \includegraphics[height=\textwidth]{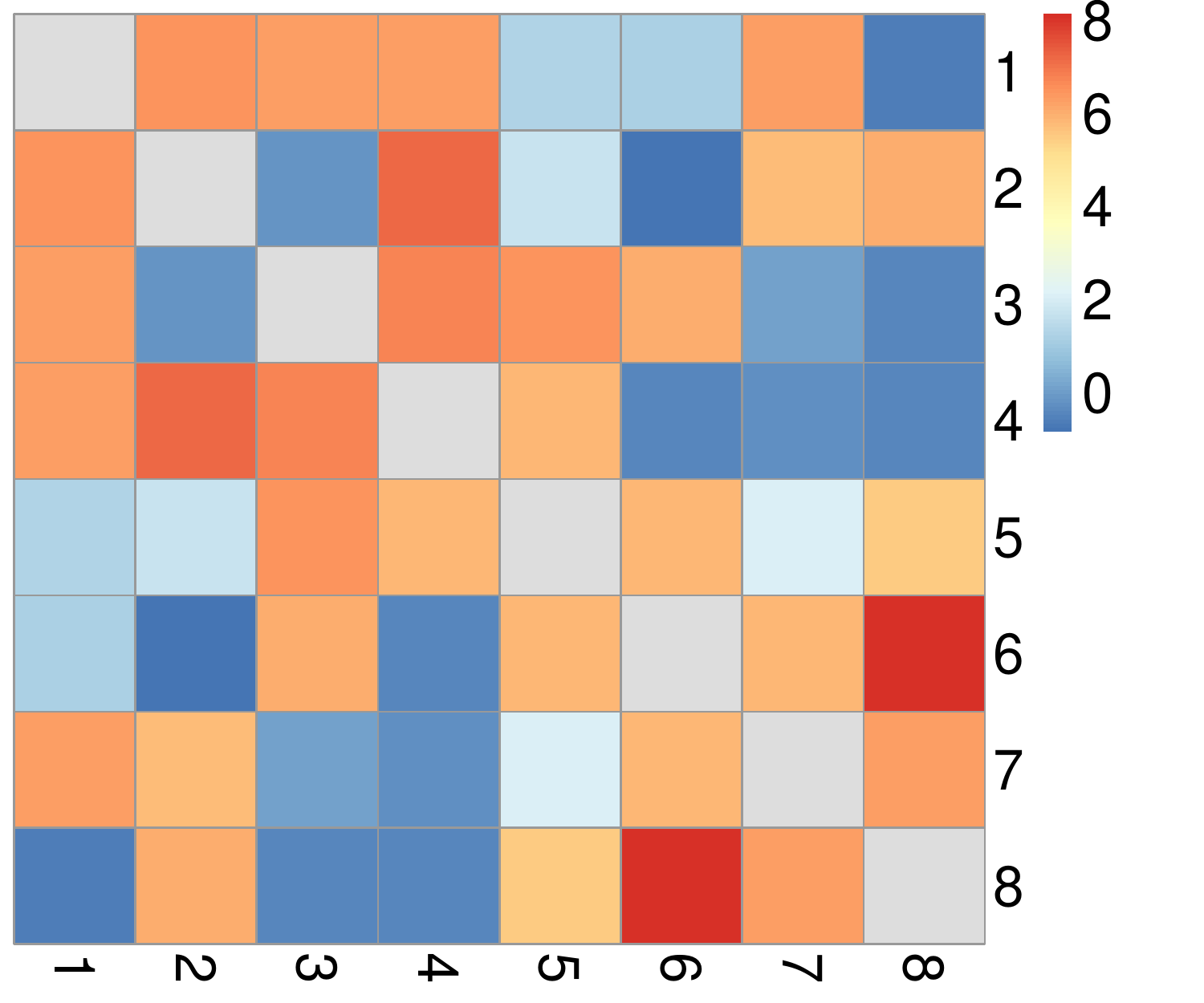}  
            \caption{Heatmap of one realization of $w$ with the hidden $2k$-NN graph in (c).}
            \label{fig:2kNN.d}
        \end{subfigure}
\caption{Examples of $2k$-NN graphs for $n=8$ and $k=2$. The edge weight $w_e$ is distributed $P_n=\mathcal{N}(6, 1)$ if $e$ is an edge in the $2k$-NN graph. Otherwise $w_e\sim Q_n=\mathcal{N}(0,1)$.
}
\label{fig:2knn}%
\end{figure}

\begin{figure}[ht]%
      \centering
      \begin{subfigure}{0.4\textwidth}
      \centering
       \includegraphics[width=.9\textwidth]{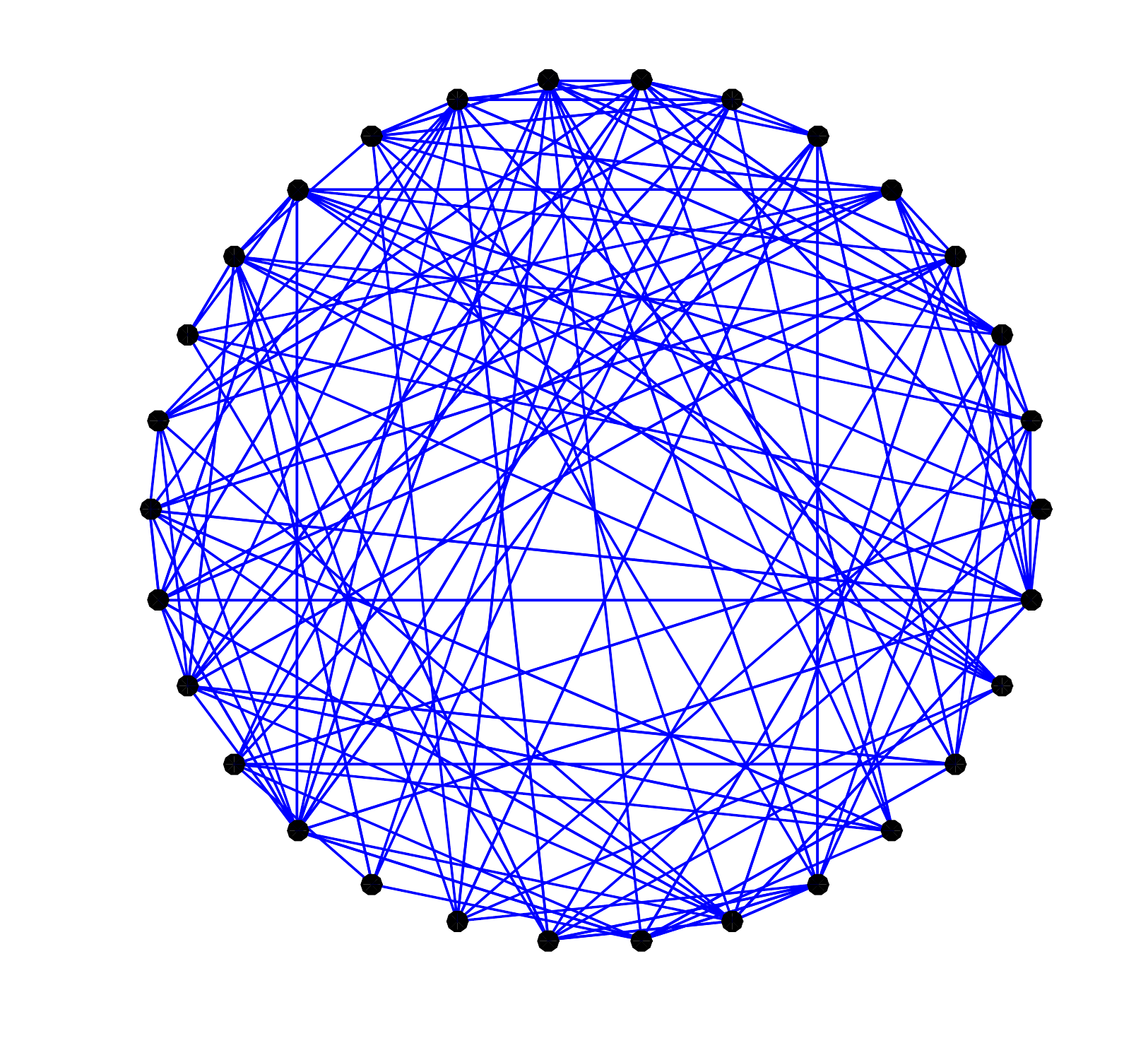}
       \end{subfigure}
       \hskip -4ex
      { \Large $\xrightarrow[\text{recovery}]{\text{$2k$-NN}}  $}
      \hskip -2ex
      \begin{subfigure}{.4\textwidth}
      \centering
       \includegraphics[width=.9\textwidth]{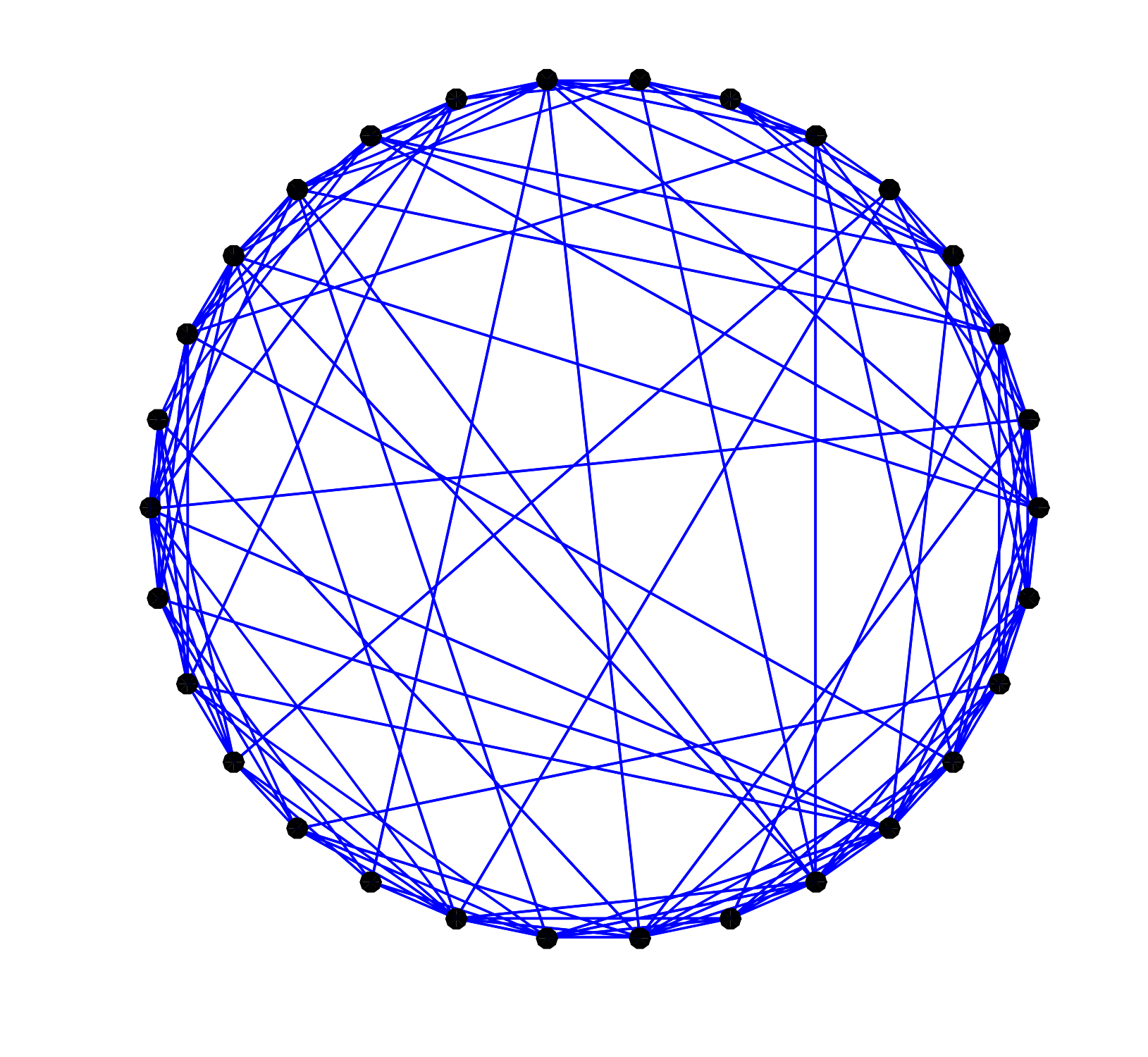}
       \end{subfigure}
       \caption{Left: An observed graph generated by the hidden $2k$-NN graph model 
       with $n=30$ vertices, $k=4$, $P_n=\Bern(0.8)$, and $Q_n=\Bern(0.09)$;
       Right: the observed graph with vertices rearranged according to the latent $2k$-NN graph.}
       \label{fig:small_world}
\end{figure}


See Figs.~\ref{fig:2knn} and~\ref{fig:small_world} for graphical illustrations of the model and the reconstruction problem.
Note that every $2k$-NN graph $x$ can be described by a permutation $\sigma$ on $[n]$ as follows: first, construct a Hamiltonian cycle $(\sigma(1), \sigma(2), \ldots, \sigma(n), \sigma(1))$, then connect pairs of vertices that are at distance at most $k$ on the cycle (cf.~Figs.~\ref{fig:2kNN.a} and \ref{fig:2kNN.c}). 

Our model encompasses the case of partially observed networks. This can be accomplished by considering $P_n = \epsilon \delta_* + (1-\epsilon) P_n'$ and $Q_n = \epsilon \delta_* + (1-\epsilon) Q_n'$ where $*$ is a special symbol outside of the support of $P_n'$ and $Q_n'$ indicating those edge weights that are unobserved. 
When $P_n$ and $Q_n$ are Bernoulli distributions with corresponding success probabilities $p_n>q_n$, we arrive at a variant of the Watts-Strogatz small-world graph.  

The problem of recovering a hidden NN graph is also motivated by 
\emph{de novo genome assembly}, the reconstruction of an organism's
long sequence of $A,G,C,T$ nucleotides from fragmented sequencing data,
which is one of the most pressing challenges in genomics~\cite{gnerre2011high,chapman2011meraculous}. 
A key obstacle 
of the current high-throughput genome assembly technology is 
\emph{genome scaffolding}, that is, 
extending genome subsequences (so-called contigs) to the whole genome by ordering them according to their positions on the genome.
Thanks to recent
advances in sequencing technology~\cite{lieberman2009comprehensive, putnam2016chromosome}, this process is aided by long-range linking information between contigs in the form of randomly sampled Hi-C reads,
where a much larger concentration of Hi-C reads exist between nearby contigs on the genome than those that are far apart.
By representing each contig as a node, the underlying true ordering of contigs on the genome 
as the hidden Hamiltonian cycle, and the counts of the Hi-C reads linking the contigs
as edge weights, the previous work~\cite{bagaria2018hidden} casts genome scaffolding as a hidden Hamiltonian cycle recovery problem with $P_n = \Pois(\lambda_n)$ and $Q_n = \Pois(\mu_n)$ with $\lambda_n \ge \mu_n$, where $\lambda_n$ and $\mu_n$ are the average number of Hi-C reads between adjacent and non-adjacent contigs respectively; 
this is a special case of our model for $k=1$. 
However, this hidden Hamiltonian cycle model only takes into account of the signal -- an elevated mean number of Hi-C reads -- in the immediately adjacent contigs on the genome; in reality, nearby contigs (e.g.~two-hop neighbors) also demonstrate stronger signal than those that are far apart. By considering $k>1$, our general $2k$-NN graph model is a closer approximation to the real data, capturing the large Hi-C counts observed between near contigs which can be used to better assemble the genome. 
Indeed, as our theory later suggests, the information provided by multi-hop neighbors strictly improves the recovery threshold.

Note that in the aforementioned applications we often have $k \ll n$; thus in this 
paper we focus on the regime of $k=n^{o(1)}$ and study the following two types of recovery guarantees. 
Let $x^* \in \{0,1\}^{\binom{n}{2}}$ denote the adjacency vector of the hidden $2k$-NN graph, where
$x^*_e=1$ for every edge $e$ in the hidden $2k$-NN graph and $x^*_e=0$ otherwise.
Let $\calX$ denote the collection of adjacency vectors of all $2k$-NN graphs with vertex set $[n]$.

\begin{definition}[Exact recovery]
An estimator $\widehat{x}=\hat{x}(w) \in \{0,1\}^{\binom{n}{2}}$ 
achieves exact recovery if, as $n\to\infty$,
\[
\sup_{x^* \in \calX}\mathbb{P}\left\{ \hat{x} \neq x^*\right \}=o(1).
\]
where $w$ is distributed according to the hidden $2k$-NN graph model in \prettyref{def:2knnhidden} with hidden $2k$-NN graph $x^*$. 
\end{definition}

Depending on the applications, we may not be able to reconstruct the hidden $2k$-NN graph $x^*$ perfectly;
instead, we may consider correctly estimating all but a small number of edges, which is required to be $o(nk)$, since a $2k$-NN graph contains $kn$ edges. 
In particular, let $d(x, \widehat{x})$ be the Hamming distance
$$
d(x^*, \widehat{x}) = \sum_{e\in\binom{[n]}{2}} \indc{ x^*_e\neq \hat{x}_e}.
$$


\begin{definition}[Almost exact recovery]
An estimator $\widehat{x}=\hat{x}(w) \in \{0,1\}^{\binom{n}{2}}$ achieves almost exact recovery if, as $n\to \infty$, 
\[
\sup_{x^* \in \calX}\expect{d( x^*, \hat{x} )}=o(nk).
\]  
\end{definition}

Intuitively, for a fixed network size $n$ and a fixed number $k$ of nearest neighbors,
as the distributions $P_n$ and $Q_n$ get closer, the recovery problem becomes harder. 
This leads to an immediate question: 
\emph{From an information-theoretic perspective, computational considerations aside, what are the
fundamental limits of recovering the hidden $2k$-NN graph?}
To answer this question, we derive necessary and sufficient conditions in terms of
the model parameters $(n,k,P_n, Q_n)$ under which the hidden $2k$-NN graph can be exactly
or almost exactly recovered. These results serve as benchmarks
for evaluating practical algorithms and aid us in understanding
the performance limits of polynomial-time algorithms.

Specifically, we discover that the following two information measures characterize the sharp thresholds for exact and almost exact recovery,
respectively. Define the R\'enyi divergence of order $1/2$:\footnote{It is also related to the so-called 
Battacharyya distance $B(P_n,Q_n)$ via $\alpha_n = 2 B(P_n,Q_n)$.}
\begin{equation}
\alpha_n=-2\log \int \sqrt{\diff P_n \diff Q_n};
\label{eq:alpha}
\end{equation}
and the Kullback-Leibler divergence:
\[
D(P_n\| Q_n)=\int \diff P_n \log \frac{\diff P_n}{\diff Q_n}.
\]
Under some mild assumptions on $P_n$ and $Q_n$, 
we show that the necessary and sufficient conditions are as follows:
\begin{itemize}
\item Exact recovery ($2 \le k \le n^{o(1)}$): 
\begin{equation}
\liminf_{n\rightarrow \infty}\frac{2\alpha_n}{\log n}>1;
\label{eq:exact-threshold}
\end{equation}
\item Almost exact recovery $ \left( 1 \le k \le o\left( \frac{\log n}{\log \log n} \right) \right)$:
\begin{equation}
\liminf_{n\rightarrow\infty} \frac{kD(P_n||Q_n)}{\log n}>1.
\label{eq:approx-threshold}
\end{equation}
\end{itemize}
The conditions for exact recovery and almost exact recovery are characterized by two different distance measures $\alpha_n$ and $D(P_n\| Q_n)$. This comes from the large deviation analysis for different regimes of $d(x^*,\hat{x})$. See remark~\ref{rmk:measures} for a detailed explanation.
For the special case of $k=1$ (Hamiltonian cycle), the exact recovery condition was shown to be
$\liminf_{n\rightarrow \infty}\frac{\alpha_n}{\log n}>1$~\cite{bagaria2018hidden}.
Comparing this with \prettyref{eq:exact-threshold} for $k\ge 2$,
we find that, somewhat surprisingly, the exact recovery threshold is halved when $k$ increases from $1$ to $2$,
and then stays unchanged as long as $k$ remains $n^{o(1)}$.
In contrast, the almost exact recovery threshold decreases inversely proportional to $k$ over the range of $[1, o(\log n/\log\log(n))]$. 
The sharp thresholds of exact recovery for $k \ge n^{\Omega(1)}$ and 
almost exact recovery for $k=\Omega(\log n/\log \log n)$ are open.

For the Bernoulli distribution (in other words, unweighted graphs) with $P_n=\Bern(p)$ and $Q_n=\Bern(q)$,
we have the explicit expressions of 
 $$
 \alpha_n=- 2 \log \left(\sqrt{pq} + \sqrt{(1-p)(1-q)} \right)  \quad \text{ and } \quad
 D(P_n\| Q_n)=p \log \frac{p}{q} + (1-p) \log \frac{1-p}{1-q}.
 $$
As an interesting special case, consider the parametrization
\begin{equation}
p=1-\epsilon + \frac{2\epsilon k}{n-1} \quad \text{and}  \quad q= \frac{2\epsilon k}{n-1},
\label{eq:smallworld}
\end{equation}
so that the mean number of edges in the observed graph stays at  $nk$ for all $\epsilon \in [0,1]$. 
This can be viewed as an approximate version of the Watts-Strogatz small-world graph, 
in which we start with a $2k$-NN graph, then rewire each edge with probability $\epsilon$
independently at random. In this case, our main results specialize to:
\begin{itemize} 
      \item Exact recovery  is possible if and only if 
\begin{equation}
\epsilon=
\begin{cases}
o(1/n)  &  k=1 \\
o(1/\sqrt{n}) & 2 \le k \le n^{o(1)}
\end{cases}
\label{eq:ws-exact}
\end{equation}
\item Almost exact recovery is possible if and only if 
\begin{equation}
k(1-\epsilon)\ge 1+o(1) \quad \text{ for } 1 \le k \le o(\log n/\log \log n).
\label{eq:ws-almost}
\end{equation}
\end{itemize}

In the related work~\cite{cai2017detection}, a similar case of Bernoulli distributions  has been studied.\footnote{To be precise,
the previous work~\cite{cai2017detection} considers Bernoulli distributions 
under a slightly different parameterization:
 $p=1-\epsilon + \frac{2\epsilon^2 k}{n-1}$ and $q=\frac{2\epsilon k}{n-1}$. 
In addition to exact recovery and approximate recovery, 
 a hypothesis testing problem between the small-world graph and \ER random graph 
 is studied. 
} It is shown in~\cite{cai2017detection} that exact recovery is impossible if
$1-\epsilon=o\left( \sqrt{ \frac{\log n}{n}} \vee \frac{\log n}{k} \frac{1}{\log\frac{n\log n}{k^2}} \right)$.
In particular, this impossibility result requires $\epsilon \to 1$, which 
is highly suboptimal compared to the sharp exact recovery condition \prettyref{eq:ws-exact}. 
It is also shown in~\cite{cai2017detection} that almost exact recovery
can be achieved efficiently via thresholding on the number of common neighbors when
$1-\epsilon=\omega\left( (\frac{\log n}{n})^{1/4} \vee (\frac{\log n}{k})^{1/2} \right)$
and via spectral ordering when $1-\epsilon= \omega \left(\frac{n^{3.5}}{k^4}\right)$;
these sufficient conditions, however, are very far from being optimal.

Finally, we remark that our sharp exact and almost exact recovery thresholds are achieved by
the maximum  likelihood estimator (MLE)  for the hidden $2k$-NN graph problem, which is computationally
intractable in the worst case. 
For the special case $k=1$, the exact recovery threshold 
is shown to be achieved efficiently in polynomial-time via
a linear programming (LP) relaxation of the MLE (namely, the fractional 2-factor LP)~\cite{bagaria2018hidden}; moreover, we show that 
the almost exact recovery threshold can be achieved efficiently in polynomial-time
via a simple thresholding procedure. For $k\ge 2$, however, it remains open whether the exact recovery threshold or the almost exact recovery threshold
can be achieved efficiently in polynomial-time (see~\prettyref{sec:discussion}
for a more detailed discussion).

\section{Exact recovery}\label{sec:exact}

	
	

The maximum likelihood estimator  for the hidden $2k$-NN graph problem is equivalent to finding the \emph{max-weighted $2k$-NN subgraph} with weights given by the log likelihood ratios.
Specifically, assuming that $dP_n/dQ_n$ is well-defined, for each edge $e \in \binom{[n]}{2}$, let $L_e=\log \frac{\diff P_n}{\diff Q_n}(w_e)$. Then the MLE is the solution to the following combinatorial optimization problem:
\begin{align}
\widehat{x}_{\rm ML} = \arg \max_{x \in \calX}  & \;  \iprod{ L} {x},   \label{eq:mle} 
\end{align}
where we recall that $\calX$ denotes the collection of adjacency vectors of all $2k$-NN graphs on $[n]$.
Note that in the Poisson, Gaussian or Bernoulli 
model where the log likelihood ratio is an affine function of the edge weight, we can simply replace $L$ in \prettyref{eq:mle} by the edge weights $w$.

Recall that $\alpha_n=-2\log \int \sqrt{dP_n dQ_n}$. We show that if $2\leq k\leq n^{o(1)}$, then the condition $\liminf_{n\rightarrow\infty} (2\alpha_n/\log n)>1$ is sufficient for $\hat{x}_{\text{ML}}$ to achieve exact recovery. This condition is also necessary, with the following additional assumption (which is fulfilled by a wide class of weight distributions including Poisson, Gaussian and Bernoulli distributions \cite{bagaria2018hidden}):  


\begin{assumption}[{\cite[Lemma~1]{bagaria2018hidden}}] 
\label{assump:exact.lower}
Let $X=\log \frac{\diff P_n}{\diff Q_n}(\omega_x)$ for some $\omega_x\sim P_n$ and $Y=\log\frac{\diff P_n}{\diff Q_n}(\omega_y)$ for some $\omega_y\sim Q_n$. Assume that
\[
\sup_{\tau\in\mathbb{R}}\left(\log \mathbb{P}\left\{Y\geq \tau\right\}+\log \mathbb{P}\left\{X\leq \tau\right\}\right)\geq -(1+o(1))\alpha_n+o(\log n).
\]
\end{assumption}

The following is our main result regarding exact recovery.

\begin{theorem}[Exact recovery]
\label{thm:exact}
Let $k \ge 2$.
\begin{itemize}
	\item Suppose 
\begin{align}
\alpha_n-\frac{1}{2} (\log n+17\log k) \to \infty. \label{eq:exact_recovery_suff}
\end{align}
Then the MLE \prettyref{eq:mle} achieves exact recovery: $\mathbb{P}\left\{\hat{x}_{\mathrm{ML}}\neq x^*\right\} \to 0$. In particular, this holds if $k=n^{o(1)}$ and 
\[
\liminf_{n\rightarrow \infty}\frac{2\alpha_n}{\log n}>1.
\]
 
\item
Conversely, assume that $k< n/12$ and \prettyref{assump:exact.lower} holds. If exact recovery is possible, then
\[
\liminf_{n\rightarrow \infty}\frac{2\alpha_n}{\log n}\geq 1.
\]
\end{itemize}
\end{theorem}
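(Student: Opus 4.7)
My plan is to handle the two directions separately, both built around a Bhattacharyya/Chernoff control on competitor-confusion events.

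For \textbf{achievability}, I stratify the MLE error event by Hamming distance. Any two elements of $\calX$ have the same edge count, so $d(x,x^*)=2m$ means $x^*\setminus x$ and $x\setminus x^*$ each contain $m$ edges, whose weights are i.i.d.\ $P_n$ and i.i.d.\ $Q_n$ respectively. The Chernoff bound at parameter $1/2$ then gives
\[
\mathbb{P}\{\iprod{L}{x-x^*}\ge 0\}\ \le\ \Big(\textstyle\int\sqrt{\diff P_n\,\diff Q_n}\Big)^{2m}\ =\ e^{-m\alpha_n}.
\]
Letting $N_m=|\{x\in\calX:d(x,x^*)=2m\}|$, a union bound yields $\mathbb{P}\{\hat x_{\mathrm{ML}}\ne x^*\}\le\sum_{m\ge 2} N_m\,e^{-m\alpha_n}$. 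Under hypothesis \eqref{eq:exact_recovery_suff}, it suffices to prove a combinatorial bound of the form $N_m\le (Cn^{1/2}k^{17/2})^m$ for an absolute constant $C$ and all $m\ge 2$. I would attack this by parametrizing each competitor via a Hamiltonian cycle $\sigma$, identifying the ``moving'' vertices whose cyclic positions differ from $x^*$, and enumerating their displacements and local rearrangements while accounting for the $\ge 2n$ Hamiltonian cycles that realize each $2k$-NN graph (from rotations and reversal).

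For the \textbf{converse}, I work along a subsequence on which $2\alpha_n\le(1-\delta)\log n$. By the vertex-relabelling symmetry of $\calX$, the minimax MLE error equals its error at any fixed $x^*$, so it is enough to show that MLE fails at one fixed $x^*$ with probability bounded away from $0$. Fix such an $x^*$ with Hamiltonian cycle $\sigma$ and, for each $i\in[n]$, let $x^{(i)}$ be the competitor obtained by transposing positions $i,i+1$ on $\sigma$. A direct computation shows $d(x^*,x^{(i)})=4$, with symmetric difference $\{\sigma(i),\sigma(i-k)\},\,\{\sigma(i+1),\sigma(i+k+1)\}$ removed and $\{\sigma(i+1),\sigma(i-k)\},\,\{\sigma(i),\sigma(i+k+1)\}$ added. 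With $X,Y$ as in \prettyref{assump:exact.lower}, the event $A_i=\{\iprod{L}{x^{(i)}}\ge\iprod{L}{x^*}\}$ is exactly $Y_1+Y_2\ge X_1+X_2$ for i.i.d.\ copies of each; the estimate $\mathbb{P}(A_i)\ge \mathbb{P}(Y\ge\tau)^2\mathbb{P}(X\le\tau)^2$ together with \prettyref{assump:exact.lower} at the optimal $\tau$ gives $\mathbb{P}(A_i)\ge e^{-2\alpha_n(1+o(1))}$. When $|i-j|\ge 2k+2$ the four-edge sets of $A_i$ and $A_j$ are disjoint and hence independent; setting $S=\sum_i \1_{A_i}$ and splitting pairs into close/far, one gets $\mathbb{E} S^2\le n^2 e^{-4\alpha_n}+O(nk)\,e^{-2\alpha_n}$, and Paley--Zygmund yields
\[
\mathbb{P}\Bigl(\bigcup_{i=1}^n A_i\Bigr)\ \gtrsim\ \frac{n e^{-2\alpha_n}/k}{1+n e^{-2\alpha_n}/k},
\]
which is bounded away from $0$ along the subsequence (and tends to $1$ in the sharp regime $k=n^{o(1)}$), contradicting exact recovery.

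\textbf{Main obstacle.} The combinatorial enumeration $N_m\le(Cn^{1/2}k^{17/2})^m$, flagged in the abstract as the key technical challenge, is the crux. Many distinct Hamiltonian cycles produce nearly identical $2k$-NN graphs, so one cannot count cycles directly; moreover, one must show that long-range ``2-opt''-style cycle surgeries cannot produce competitors at anomalously small Hamming distance. Establishing the rigidity imposed by $k\ge 2$ --- precisely why the exact threshold $\alpha_n>\tfrac12\log n$ is half of the $k=1$ value $\alpha_n>\log n$, where $\Theta(n^2)$ 2-opt moves at distance $4$ are available --- is where most of the work will lie. On the converse side, the main subtlety is extending \prettyref{assump:exact.lower} from a single edge to the 4-edge swap, which the decoupling into independent $(X,Y)$ pairs above accomplishes.
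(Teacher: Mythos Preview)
Your overall architecture matches the paper's proof on both sides: the achievability is exactly the paper's stratify-by-$\Delta$ plus Chernoff-at-$\lambda=1/2$ argument, reducing everything to the bound $|\calX_\Delta|\le(Ck^{17}n)^{\Delta/2}$ (the paper's \prettyref{lmm:Xdeltacard}); and the converse uses the same adjacent-transposition competitors $x^{(i)}$ with the same four-edge symmetric difference.

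On the achievability side, your sketch for the counting lemma (``moving vertices,'' accounting for the $\ge 2n$ cycles per graph) does not capture the mechanism the paper actually uses, and I doubt it would get you there. The paper decomposes $|\calX_\Delta|\le |\calE_{\Red}(\Delta)|\cdot \max_{E_{\Red}}|\calX(E_{\Red})|$ and proves two separate lemmas. The key structural fact driving the $n^{\Delta/2}$ (rather than $n^\Delta$) is \prettyref{lmm:2k}: for $k\ge 2$, every red edge in the difference graph has another red edge within $d_{x^*}$-distance $2k$. This lets you enumerate red edges in clusters of size $\ge 2$, paying $n$ only once per cluster, yielding $|\calE_{\Red}(\Delta)|\le (96k^2)^\Delta\binom{kn}{\lfloor\Delta/2\rfloor}$. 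The second piece, bounding the number of compatible blue-edge sets by $2(32k^3)^{2\Delta}\Delta^{\Delta/k}$ with no $n$-dependence, is a delicate sequential enumeration along the candidate cycle. Your ``2-opt surgery'' intuition is right about \emph{why} $k\ge2$ differs from $k=1$, but you should be aware that the proof route is via this red-edge pairing, not via counting moving vertices.

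On the converse, your Paley--Zygmund route is a legitimate alternative to the paper's method (the paper instead extracts a set $D$ of $\Theta(n)$ indices on which the events $A_i$ are \emph{exactly} independent, then uses $(1-p)^{|D|}\le e^{-|D|p}$). However, your independence criterion $|i-j|\ge 2k+2$ is too conservative: a direct check shows the four-edge sets of $x^{(i)}$ and $x^{(j)}$ intersect only when $|i-j|\in\{k,k+1\}$, so there are $O(n)$ dependent pairs, not $O(nk)$. With your $O(nk)$ count, the Paley--Zygmund bound $\frac{np}{np+O(k)}$ is \emph{not} bounded away from $0$ for general $k<n/12$ (take $k=n^{0.9}$ on a subsequence with $2\alpha_n\sim 0.9\log n$, so $np\approx n^{0.1}\ll k$); your parenthetical ``tends to $1$ in the sharp regime $k=n^{o(1)}$'' is correct but does not cover the full range claimed in the theorem. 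Either sharpen the dependence count to $O(n)$, or adopt the paper's subset-of-independent-indices construction, and the gap closes.
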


When $k=1$, as shown in~\cite{bagaria2018hidden} the sharp threshold for exact recovery is $\liminf_{n\rightarrow \infty} \frac{\alpha_n}{\log n}>1$, which is 
stronger than the condition in Theorem~\ref{thm:exact} by a factor of 2. In other words, from $k=1$ to $k\geq 2$ there is a strict decrease in the required level of signal. 
A simple explanation is that the hidden
$2k$-NN graph $x^*$ contains more edges when $k\geq 2$,
and the elevated weights on these edges provide extra signal for determining the latent permutation $\sigma^*$.
However, this extra information ceases to help as $k$ increases from $2$ to $n^{o(1)}$, which can be attributed to the following fact: when we swap any pair of adjacent vertices on $\sigma^*$, we always get a $2k$-NN graph $x$ which differ from $x^*$ by $4$ edges,
regardless of how large $k$ is. 
This argument results in the $k$-independent necessary condition $\liminf_{n\rightarrow \infty}\frac{2\alpha_n}{\log n}\ge1$ for the MLE to succeed
 (see~\prettyref{sec:exact.lower} for details).

In the introduction, we have discussed the implications of this result when the edge weights are distributed as Bernoulli.
\prettyref{thm:exact} can be applied to a wide range of continuous and discrete weight distributions.
In particular, our result implies that when $2\leq k\leq n^{o(1)}$,
\begin{itemize}
\item for $P_n=\mathcal{N}(\mu_n,1)$ and $Q_n=\mathcal{N}(\nu_n,1)$, the sharp threshold for exact recovery is
\[
\liminf_{n\rightarrow \infty}\frac{(\mu_n-\nu_n)^2}{2\log n}>1;
\]
\item for $P_n=\Pois(\mu_n)$ and $Q_n=\Pois(\nu_n)$, the sharp threshold for exact recovery is
\[
\liminf_{n\rightarrow\infty}\frac{2\left(\sqrt{\mu_n}-\sqrt{\nu_n}\right)^2}{\log n}>1.
\]
\end{itemize}

\subsection{Proof of correctness of MLE for exact recovery}
\label{sec:exact-suff}

To analyze the MLE, we first introduce the notion of \emph{difference graph}, which encodes the difference between a proposed $2k$-NN graph and the ground truth.
Given $x,x^*\in \{0,1\}^{{n\choose 2}}$, let $G=G(x)$ be a bi-colored simple graph on $[n]$ whose adjacency vector is $x-x^*\in \{0,\pm 1\}^{{n\choose 2}}$, in the sense that 
each pair $(i,j)$ is connected by a blue (resp.~red) edge if $x_{ij}-x^*_{ij} = 1$ (resp.~$-1$).
See \prettyref{fig:diff} for an example.
By definition, red edges in $G(x)$ are true edges in $x^*$ that are missed by the proposed solution $x$, and blue edges correspond to spurious edges that are absent in the ground truth.

\begin{figure}[H]%
\centering
\includegraphics[width=0.3\columnwidth]{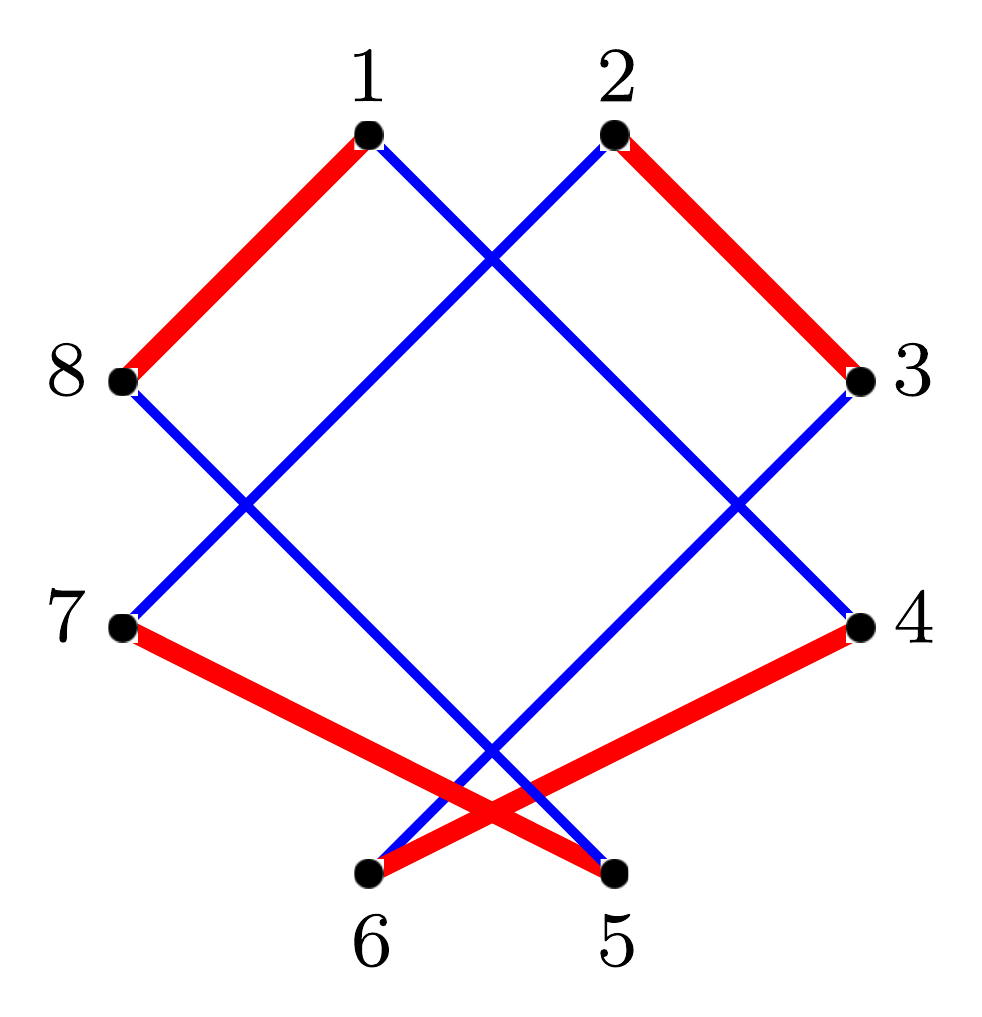}%
\caption{An example for a difference graph $G$. Here $G$ is obtained by letting $x^*$ (resp.~$x$) be the $2k$-NN graph in Fig.~\ref{fig:2kNN.a} (resp.~\ref{fig:2kNN.c}), and then taking the difference $x-x^*$. The red (thick) edges stand for edges that in $x^*$ but not $x$, while the blue (thin) edges are in $x$ but not $x^*$.}
\label{fig:diff}%
\end{figure}

A key property of difference graphs is the following: 
Since $2k$-NN graphs are $2k$-regular, the difference graph $G$ is \emph{balanced}, in the sense that for each vertex, its red degree (the number of incident red edges) coincides with its blue degree. Consequently, $G$ has equal number of red edges and blue edges, and the number of red (or blue) edges measures the closeness of $x$ to the truth $x^*$. 
Denote
\begin{equation}
\mathcal{X}_\Delta = \left\{x\in\mathcal{X}: d(x, x^*)=2\Delta\right\}=\left\{x\in \mathcal{X}: G(x)\text{ contains exactly }\Delta \text{ red edges}\right\}.
\label{eq:XDelta}
\end{equation}
In particular, $\{\calX_\Delta: \Delta \geq 0\}$ partitions the feasible set $\calX$.
The analysis of the MLE relies crucially on bounding the size of $\mathcal{X}_\Delta$.
To see this, note that by the definition of the MLE, 
\begin{align}
\mathbb{P}\left\{ \hat{x}_\text{ML} \neq x^* \right\}
& \le \mathbb{P}\left\{\exists x\in \mathcal{X}: \langle L, x-x^*\rangle \ge 0 \right\} \nonumber \\
& \overset{(a)}{\leq}  \sum_{\Delta\geq 1}  \sum_{x \in \mathcal{X}_\Delta } \mathbb{P}\left\{\langle L,x-x^*\rangle \ge 0\right\} 
\nonumber \\
&  \overset{(b)}{\leq} \sum_{\Delta\geq 1}  \left| \calX_\Delta \right| \exp\left(-\alpha_n \Delta\right),
\label{eq:MLE_intuition}
\end{align}
where $(a)$ follows from the union bound and $(b)$ follows from the Chernoff bound. It remains to derive a tight bound
to the size of $\mathcal{X}_\Delta$, which turns out to be much more challenging when $k\ge 2$ than $k=1$.  

To provide some intuitions, let us first prove a simple bound 
\begin{equation}
|\mathcal{X}_\Delta| \leq (4kn)^\Delta,
\label{eq:Xdeltacard-simple}
\end{equation}
 by following similar arguments as in \cite[Sec.~4.2]{bagaria2018hidden} for analyzing the MLE for $k=1$ (Hamiltonian cycles). Substituting \prettyref{eq:Xdeltacard-simple} into 
\prettyref{eq:MLE_intuition} immediately yields that $\mathbb{P}\left\{ \hat{x}_\text{ML} \neq x^* \right\} \to 0$,
provided that $\alpha_n- \log (nk) \to +\infty$, which falls short of the desired sufficient condition~\eqref{eq:exact_recovery_suff} by roughly a factor of $2$ when $k\ge2$.

For each $x\in\calX_\Delta$, suppose its difference graph $G$ consists of $m$ connected components $G_1,...,G_m$. Then each connected component is also a balanced bi-colored graph.
Let $\Delta_i\geq 1$ denote the number of red edges in $G_i$. There are at most $2^\Delta$ configurations for the sequence $(\Delta_1,...,\Delta_m)$
since $\sum_{i\leq m}\Delta_i=\Delta$. 
From~\cite[Lemma~1]{bagaria2018hidden}, every connected balanced bi-colored graph has an \emph{alternating Eulerian circuit},
{\it i.e.}~a circuit with colors alternating between red and blue that passes through every edge exact once.
To bound the total number of configurations for a connected component $G_i$ with $\Delta_i$ red edges,
it suffices to count the number of such alternating Eulerian circuits, which is upper bounded by
the number of length-$2\Delta_i$ path $(v_0,v_1,..., v_{2\Delta_i-1})$, such that
$(v_i,v_{i+1})$ is a red edge if $i$ is even, and a blue edge if $i$ is odd. To complete the circuit, 
$(v_{2\Delta_i-1},v_0)$
must be a blue edge.

We now sequentially enumerate $v_0$ to $v_{2\Delta_i-1}$: given $v_0$, which takes $n$ values, 
there are only $2k$ possibilities for $v_1$, because $(v_0,v_1)$ is a red edge so that $v_1$ must belong to the neighborhood of 
$v_0$ in the true $2k$-regular graph $x^*$.
Overall, we conclude that the path $(v_0,\ldots,v_{2\Delta_i-1})$ can take at most $(2kn)^{\Delta_i}$ possible values.
Summing over the connected components, we have
\begin{align}
\left|\mathcal{X}_\Delta \right|\leq \sum_{(\Delta_1,...,\Delta_m): \sum \Delta_i=\Delta} \left(\prod_{i\leq m}(2kn)^{\Delta_i}\right)\leq 2^\Delta(2kn)^\Delta=(4kn)^\Delta. \label{eq:Xdeltacard_simple}
\end{align}
It turns out this bound is only tight for $k=1$. 
For $k\ge 2$, the following combinatorial lemma (proved in \prettyref{sec:counting}) gives a much better bound on 
the cardinality of $\mathcal{X}_\Delta$, which, together with
\prettyref{eq:MLE_intuition}, immediately leads to the desired sufficient condition~\eqref{eq:exact_recovery_suff}.
\begin{lemma}
\label{lmm:Xdeltacard}	
There exists an absolute constant $C$ such that for any $\Delta \geq 0$ and any $2 \leq k \leq n$ 
	\begin{equation}
|\calX_\Delta| \leq 2 \left(C k^{17}n\right)^{\Delta/2}.	
	\label{eq:Xdeltacard}
	\end{equation}
\end{lemma}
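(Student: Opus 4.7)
The plan is to refine the alternating Eulerian circuit bound $|\calX_\Delta|\le(4kn)^\Delta$ from \eqref{eq:Xdeltacard_simple} by exploiting the fact that the proposed graph $x$ is \emph{itself} a $2k$-NN graph. This forces the blue edges of the difference graph to be local with respect to some new permutation $\sigma$, and will allow us to replace the cost-$n$ enumeration at each blue step by a $\mathrm{poly}(k)$ cost, producing the square-root improvement in the final exponent.

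First, I would decompose $G(x)$ into connected components $G_1,\ldots,G_m$ with $\Delta_i$ red edges each (and the same number of blue edges), so $\sum_i \Delta_i = \Delta$. A short case check rules out $\Delta_i=1$: any connected balanced bi-colored simple graph with one red and one blue edge would force either a forbidden red-blue double edge on two vertices, or an odd-degree vertex whose red and blue degrees disagree. Hence $\Delta_i\ge 2$ for every $i$, and consequently $m\le \Delta/2$. This factor-of-two reduction in the number of components is precisely the combinatorial source of the square root appearing in the exponent of \eqref{eq:Xdeltacard}.

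Second, I would establish the per-component estimate
\[
\#\{\text{placements of a fixed connected }G_i\text{ in }[n] \text{ as the difference graph}\}\le n\cdot (C'k^{17/2})^{\Delta_i}
\]
for an absolute constant $C'$, using the alternating Eulerian circuit $v_0,v_1,\ldots,v_{2\Delta_i-1}$: the anchor $v_0$ contributes $n$ choices and each red step $(v_{2j},v_{2j+1})$ contributes at most $2k$ choices as before. The key improvement replaces the cost-$n$ at each blue step by a $\mathrm{poly}(k)$ cost: the blue edge $(v_{2j-1},v_{2j})$ must lie within cycle-distance $k$ in the new permutation $\sigma$ realizing $x$, and the red segments already traversed force a partial embedding of the visited vertices into $\sigma$, confining $v_{2j}$ to a $\mathrm{poly}(k)$-sized cyclic window around already-visited vertices. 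Aggregating via $m\le\Delta/2$, summing over isomorphism types of $G_i$ (at most $2^{O(\Delta)}$, absorbed into $C'$) and over partitions $(\Delta_1,\ldots,\Delta_m)$ of $\Delta$, then yields
\[
|\calX_\Delta|\le n^{\Delta/2}\prod_i(C'k^{17/2})^{\Delta_i}\cdot 2^{O(\Delta)}\le 2(Ck^{17}n)^{\Delta/2}.
\]

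The main obstacle is the per-component estimate itself, namely showing that blue-edge extensions in the Eulerian circuit are genuinely $\mathrm{poly}(k)$-local even though the underlying permutation $\sigma$ depends on $x$ and is not fixed in advance. The technical core will be an inductive argument along the circuit in which the $2k$-NN structure of both $x$ and $x^*$, coupled with the red-blue alternation, forces the emerging partial permutation of visited vertices to be determined up to $\mathrm{poly}(k)$ choices, so each new blue neighbor is drawn from a $\mathrm{poly}(k)$-sized candidate set rather than all of $[n]$. Nailing down the explicit exponent $17$ (rather than a loose $\mathrm{poly}(k)$) will require a careful case analysis of how the blue and red segments of the circuit interlock to constrain the local permutation structure.
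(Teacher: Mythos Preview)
Your proposal contains a genuine gap: the per-component estimate
\[
\#\{\text{placements of }G_i\}\le n\cdot (C'k^{17/2})^{\Delta_i}
\]
is false, even for $k\ge 2$. Take the segment-reversal permutation $\sigma=(\ldots,i,j,j-1,\ldots,i+1,j+1,\ldots)$ from Example~1 with $k=2$. A direct computation (cf.~Fig.~5c) gives six red edges near $i$ and $j$, and the blue edges (e.g.~$(i,j)$) connect the two clusters, so the difference graph $G(x)$ is a \emph{single} connected component with $\Delta_1=6$. Yet there are $\Theta(n^2)$ such reversals (one for each pair $(i,j)$), so the number of placements of this one component is $\Theta(n^2)$, not $n\cdot\mathrm{poly}(k)$. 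The flaw in your Eulerian-circuit argument is at the first blue step: after the anchor $v_0$ and one red step to $v_1$, the blue endpoint $v_2$ is constrained only by being within $\sigma$-distance $k$ of $v_1$; but $\sigma$ is unknown at this stage, and nothing prevents $v_2$ from being $\Theta(n)$ away from $v_0,v_1$ in the $x^*$ cycle. Your claim that ``the red segments already traversed force a partial embedding of the visited vertices into $\sigma$'' has no content after a single red edge.

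The paper's approach avoids this by working not with connected components of $G(x)$ but with a meta-graph whose \emph{vertices} are the red edges themselves, two red edges being adjacent if they are within $x^*$-distance $2k$ (Definition~\ref{def:closeby}). The key structural fact (Lemma~\ref{lmm:2k}, which genuinely uses $k\ge 2$) is that this meta-graph has no isolated vertices, hence at most $\lfloor\Delta/2\rfloor$ components; this yields $|\calE_{\Red}(\Delta)|\le(96k^2)^\Delta\binom{kn}{\lfloor\Delta/2\rfloor}$ (Lemma~\ref{lmm:red}). In the reversal example above the meta-graph has \emph{two} components (three red edges near $i$, three near $j$), correctly accounting for the $n^2$. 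Then, given $E_{\Red}$, the paper enumerates compatible permutations $\sigma$ directly by walking along the Hamiltonian cycle and bounding the choices for $\sigma(i{+}k{+}1)$ at each step (Lemma~\ref{lmm:blue}); the Eulerian-circuit idea is explicitly noted there to be insufficient, yielding only $4^\Delta\Delta!$ rather than the needed $n$-free bound $2(32k^3)^{2\Delta}\Delta^{\Delta/k}$.
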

In comparison with the simple bound \prettyref{eq:Xdeltacard_simple}, 
\prettyref{lmm:Xdeltacard} improves the dependency on $n$ from $n^\Delta$ to $n^{\Delta/2}$.
We have already seen that the red edges play an important role in the proof of \prettyref{eq:Xdeltacard_simple}, 
as the entropy of red edges is much lower than that of blue edges. 
The improved bound \prettyref{eq:Xdeltacard} 
is obtained by further exploiting the structural properties of red edges in the difference graph $G$. 
In particular, we find that for each red edge in $G$, there is at least another red edge ``close'' to it, which allows us to count red edges in groups and further reduce their entropy.
 The precise notion of closeness 
will be given in \prettyref{sec:counting}, but let us illustrate with a simple example.

\begin{example}
\sloppy
Recall that each $2k$-NN graph $x$ can be identified with a permutation
$(\sigma(1), \sigma(2), \ldots, \sigma(n))$. By connecting adjacent nodes on $\sigma$ and connecting $\sigma(n)$ with $\sigma(1)$, $\sigma$ determines a Hamiltonian cycle, from which one can connect pairs of vertices whose distance is at most $k$ to construct a $2k$-NN graph. Suppose the true $2k$-NN graph $x^*$ is identified with the identity permutation $\sigma^* = (1,2,...,n)$. Consider the alternative graph $x$ identified
with a permutation that traverses part of the vertices in the opposite direction, {\it i.e.} $\sigma= (1,2,..., i, j, j-1,...,i+1, j+1,j+2,...,n)$ for some $i,j$ that are far apart (see~\prettyref{fig:ex_perm}). The corresponding difference graphs in the $k=1$ and $k=2$ cases are illustrated in~\prettyref{fig:ex_kis1},~\ref{fig:ex_kis2} respectively. 

\begin{figure}[H]
    \centering
    \begin{subfigure}[b]{0.3\textwidth}
    \begin{centering}
        \includegraphics[height=.8\linewidth]{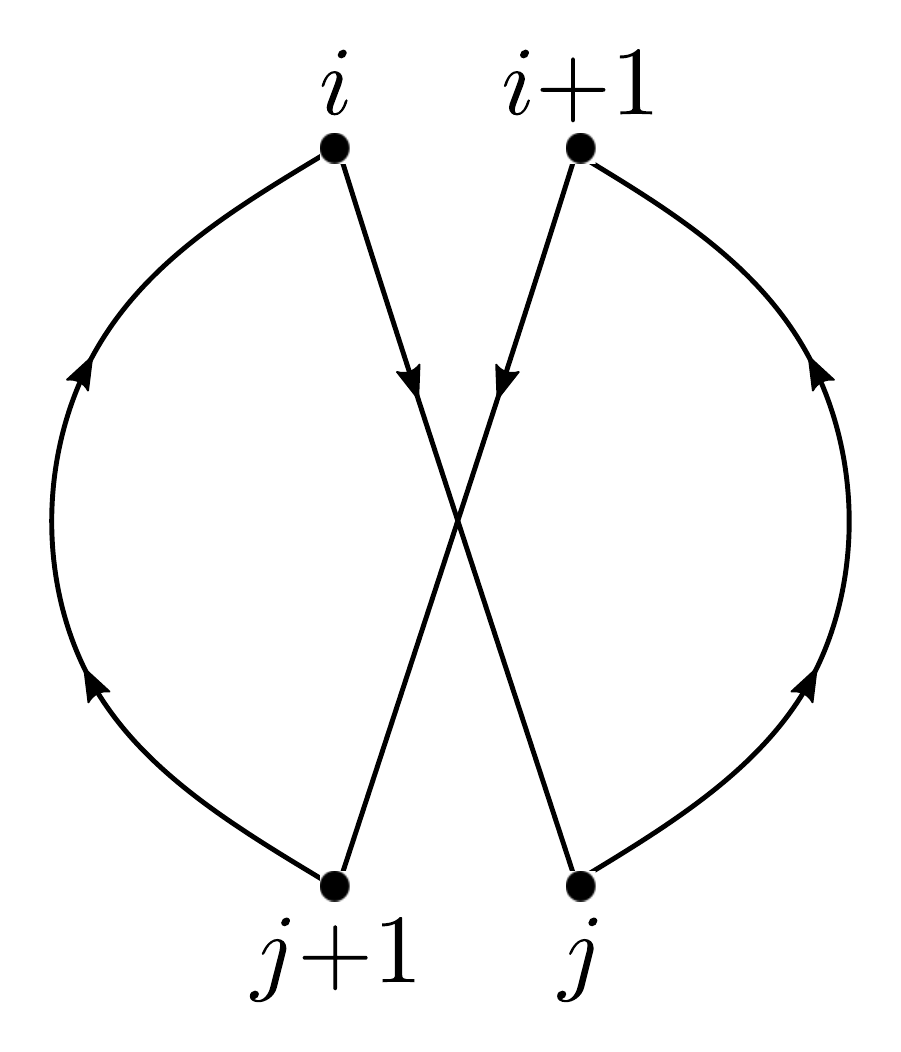}
        \caption[justification   = centering]{}
        \label{fig:ex_perm}
     \end{centering}
    \end{subfigure}
    \begin{subfigure}[b]{0.32\textwidth}
    \begin{centering}
        \includegraphics[height=.8\linewidth]{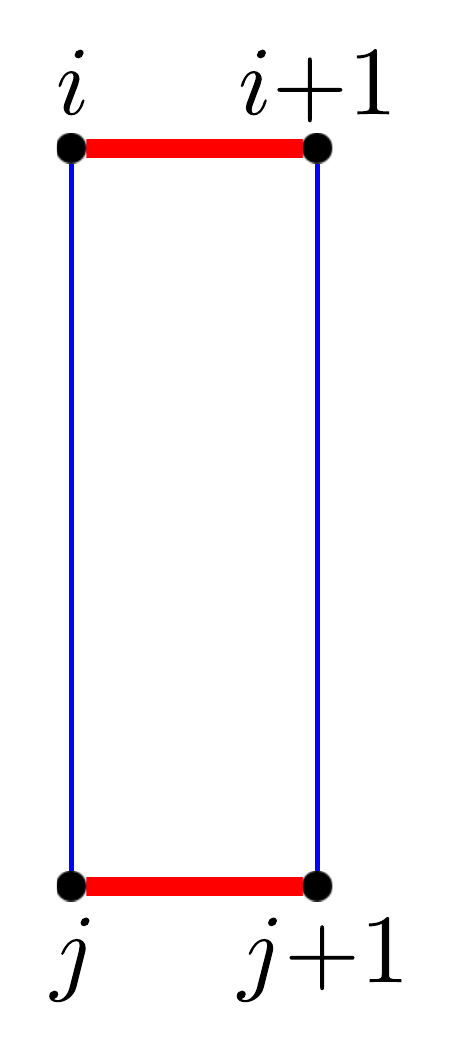}
        \caption[justification   = centering]{}
        \label{fig:ex_kis1}
    \end{centering}
    \end{subfigure}
    \begin{subfigure}[b]{0.35\textwidth}
    \begin{centering}
        \includegraphics[height=.8\linewidth]{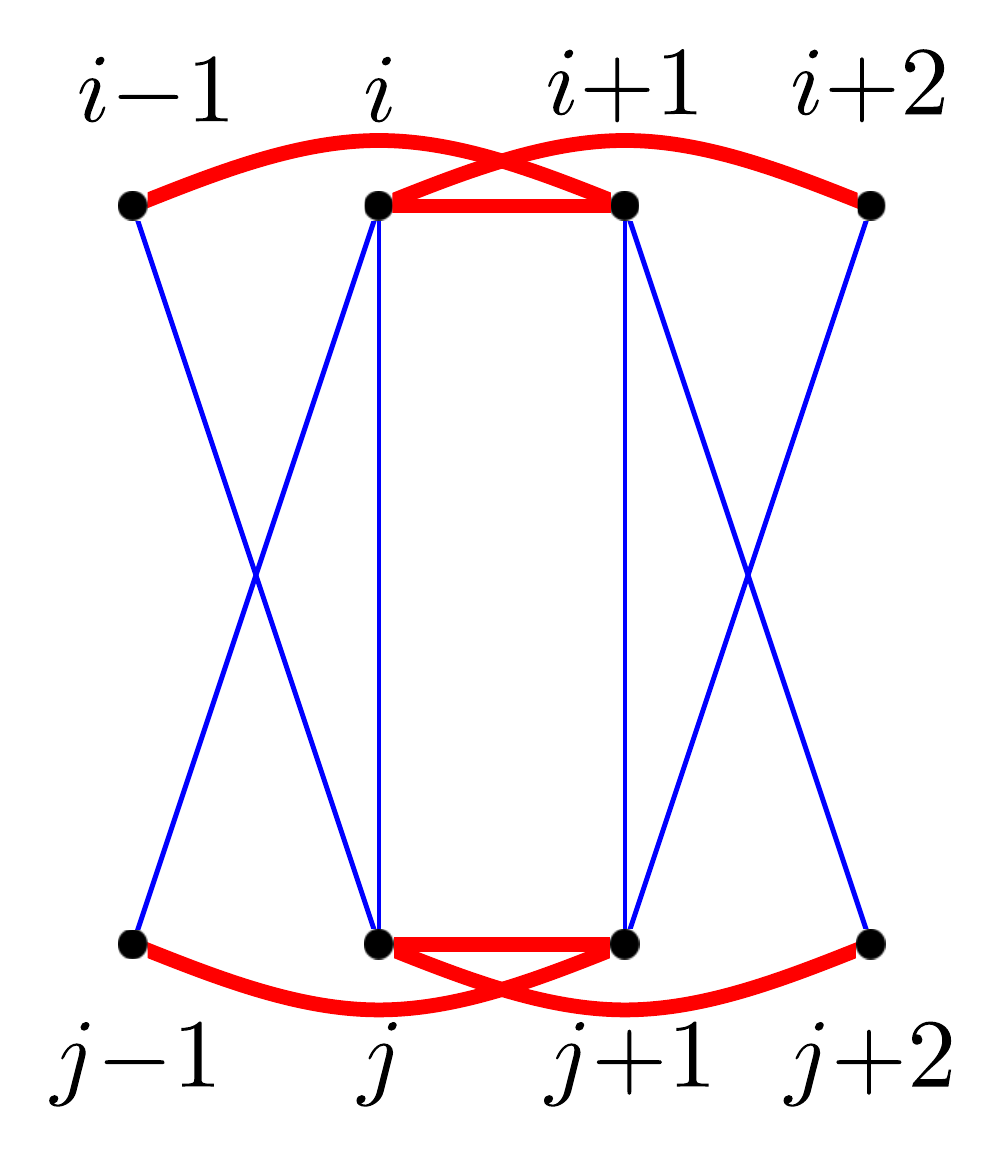}
        \caption[justification   = centering]{}
        \label{fig:ex_kis2}
    \end{centering}
    \end{subfigure}
    \caption{From left to right, (a): Hamiltonian cycle associated with the permutation $\sigma= (...,i,j,j-1,...,i+1,j+1,j+2,...)$; (b): the difference graph $G(x)$ when $k=1$, where $x$ is the $2k$-NN graph identified by $\sigma$; (c): the difference graph $G(x)$ when $k=2$.}
    \label{fig:ex_ij}
\end{figure}

The crucial observation is that when $k\geq 2$, there is more structure in the set of red edges in the sense that red edges do not appear in isolation. For example in~\prettyref{fig:ex_kis2}, the indices of the three red edges $(i,i+1), (i-1,i+1)$ and $(i,i+2)$ are all close to each other; in particular, this triple can only take $n$ values in total. We find that when $k\geq 2$, this observation holds in greater generality. As a result, 
each red edge can help determine at least one other red edge, allowing us to enumerate the red edges in bundles. This is the main reason why when upper bounding $|\mathcal{X}_\Delta|$, we can reduce the exponent on $n$ from $\Delta$ to $\Delta/2$.
This structural property, however, 
is specific to $k\geq 2$: 
as shown by \prettyref{fig:ex_kis1} ($\Delta=2$), the simple bound \prettyref{eq:Xdeltacard_simple} is tight for $k=1$.

\end{example}

Let us also point out that the exponent $\Delta/2$ in \prettyref{eq:Xdeltacard} is tight for $k\geq 2$. It is easy to see that when the nodes $i,i+1$ in the permutation $\sigma^*$ are swapped, a difference graph $G(x)$ is formed with $\Delta=2$ red edges (see \prettyref{sec:exact.lower} and \prettyref{fig:lowerbound}  for a more extensive discussion of this example). Taking $i=1,...,n$ yields $n$ distinct difference graphs that are all members of $\mathcal{X}_2$, meaning the size of $\mathcal{X}_\Delta$ is no smaller than $n^{\Delta/2}$, at least when $\Delta=2$.

Assuming \prettyref{lmm:Xdeltacard}, the proof of the correctness of $\hat{x}_{\text{ML}}$ follows from the Chernoff bound and the union bound. 
\begin{proof}[Proof of sufficiency part of \prettyref{thm:exact}] 
First partition $\mathcal{X}$ according to the value of $\Delta$:
\begin{equation}\label{eq:stratify}
\mathbb{P}\left\{\exists x\in \mathcal{X}: \langle L, x-x^*\rangle \ge 0 \right\}
\leq \sum_{\Delta\geq 1}\mathbb{P}\left\{\exists x\in \mathcal{X}_\Delta: \langle L,x-x^*\rangle \ge 0\right\}.
\end{equation}
Recall that $L_e=\log \frac{\diff P_n}{\diff Q_n}(w_e)$. Hence for each $x\in\mathcal{X}_\Delta$,
the law of $\langle L,x-x^*\rangle$ only depends on $\Delta$, which can be represented as follows:
\[
\langle L,x-x^*\rangle \stackrel{d}{=}\sum_{i\leq \Delta}Y_i-\sum_{i\leq \Delta} X_i,
\]
where $X_1,\ldots, X_\Delta$ are i.i.d.~copies of $\log \frac{\diff P_n}{\diff Q_n}$ under $P_n$, $Y_1,\ldots, Y_\Delta$ are i.i.d. copies of $\log \frac{\diff P_n}{\diff Q_n}$ under $Q_n$, and $\stackrel{d}{=}$ denotes equality in distribution. 
Applying the Chernoff bound yields
\begin{equation}
\mathbb{P}\left\{\sum_{i\leq \Delta}Y_i-\sum_{i\leq \Delta}X_i \ge 0\right\}
\leq \inf_{\lambda>0} \left\{ \exp\left(\Delta\left(\psi_Q(\lambda)+\psi_P(-\lambda)\right)\right) \right\},
\label{eq:chernoff-XY}
\end{equation}
where 
\begin{align}
 \psi_Q(\lambda) & \triangleq \log \Expect_{Q_n}\qth{\exp\pth{\lambda \log \frac{\diff P_n}{\diff Q_n}}}
= \log \int \diff P_n^{\lambda} \diff Q_n^{1-\lambda}, \label{eq:psiQ}\\
 \psi_P(\lambda) & \triangleq \log \Expect_{P_n}\qth{\exp\pth{\lambda \log \frac{\diff P_n}{\diff Q_n}}} 
= \log \int \diff P_n^{1+\lambda} \diff Q_n^{-\lambda}  
= \psi_Q(\lambda + 1 ), \label{eq:psiP}
\end{align}
denote the log moment generating functions (MGFs) of $\log \frac{\diff P_n}{\diff Q_n}$ under $P_n$ and $Q_n$, respectively.
In particular, the R\'enyi divergence in \prettyref{eq:approx-threshold} is given by
\begin{equation}
\alpha_n  = -2\psi_Q\left(\frac{1}{2}\right) = -2\psi_P\left(-\frac{1}{2}\right).
\label{eq:alpha-psi}
\end{equation}
Choosing $\lambda=1/2$ in \prettyref{eq:chernoff-XY} yields
\begin{equation}
\prob{\langle L,x-x^*\rangle \ge 0} =
\mathbb{P}\left\{\sum_{i\leq \Delta}Y_i-\sum_{i\leq \Delta}X_i \ge 0\right\}
\leq \exp\left(2\Delta\psi_Q\left(\frac{1}{2}\right)\right)=\exp\left(-\alpha_n \Delta\right).
\label{eq:chern}
\end{equation}
Combining \prettyref{eq:Xdeltacard} and \prettyref{eq:chern} and applying the union bound, we have
\begin{align*}
\mathbb{P}\left\{\exists x\in \mathcal{X}_\Delta: \langle L,x-x^*\rangle \ge 0\right\}\leq \sum_{x\in \mathcal{X}_\Delta}\mathbb{P}\left\{\langle L,x-x^*\rangle \ge 0\right\} \leq 2\exp\left(-\Delta \kappa_n \right),
\end{align*}
where $\kappa_n \triangleq \alpha_n-\frac{\log (Ck^{17}n)}{2} \to \infty$ by assumption. 
Finally, from \prettyref{eq:stratify},
\begin{align*}
\mathbb{P}\left\{\exists x\in \mathcal{X}: \langle L, x-x^*\rangle \ge 0 \right\}
\leq & \sum_{\Delta\geq 1}2\exp\left(-\Delta \kappa_n\right) = \frac{2\exp(-2\kappa_n) }{1-\exp(-\kappa_n)} \xrightarrow{n\diverge} 0.
\end{align*}
In other words, $\mathbb{P}\{\hat{x}_\text{ML}=x^*\}\rightarrow 1$ as $n\rightarrow\infty$.
\end{proof}

\subsection{Information-theoretic lower bound for exact recovery}\label{sec:exact.lower}
For the purpose of lower bound, consider the Bayesian setting where $x^*$ is drawn uniformly at random from the set $\calX$ of all $2k$-NN graphs. Then MLE maximizes the probability of success, which, by definition, can be written as follows:
\begin{equation}\label{eq:bayes.risk}
\mathbb{P}\left\{\widehat{x}_{\text{ML}}=x^*\right\}=\mathbb{P}\left\{\langle L,x-x^*\rangle<0,\;\;\; \forall x\neq x^*\right\}.
\end{equation}
Due to the symmetry of $\mathcal{X}$, the probabilities in~\eqref{eq:bayes.risk} are equal to the corresponding conditional probabilities, conditional on each $x^*\in \mathcal{X}$. WLOG, assume that $x^*$ is the $2k$-NN graph associated with the identity permutation $\sigma^*(i)=i$.
It is difficult to work with the intersection of dependent events in \prettyref{eq:bayes.risk}. The proof strategy is to select a subset of feasible solutions for which the events $\langle L,x-x^*\rangle<0$ are mutually independent. 

To this end, define $x^{(i)}$ to be the $2k$-NN graph corresponding to the permutation $\sigma$ that swaps $i$ and $i+1$, i.e., $\sigma(i)=i+1$, $\sigma(i+1)=i$, and $\sigma=\sigma^*$ everywhere else. It is easy to see that the difference graph $G(x^{(i)})$ contains four edges: (see \prettyref{fig:lowerbound})
\begin{align*}
\text{red edges:}\;\;\;& (i-k,i),\;\;\;(i+1,i+k+1);\\
\text{blue edges:}\;\;\;& (i-k,i+1),\;\;\;(i,i+k+1).
\end{align*}
\begin{figure}[H]
\centering
\includegraphics[width = 5 in]{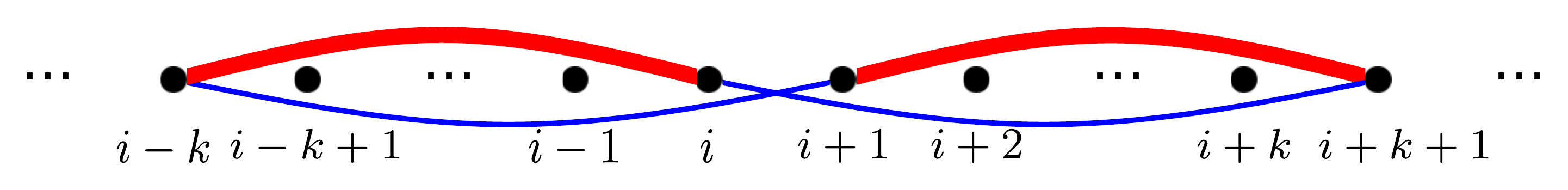}
\caption{The difference graph $G(x^{(i)})$.}
\label{fig:lowerbound}
\end{figure}
Furthermore, for two such graphs $x^{(i)}$ and $x^{(j)}$ with $k+1\leq i< j\leq n-k$, the edges sets $E\left(G(x^{(i)})\right)$ and $E\left(G(x^{(j)})\right)$ intersect if and only if $j-i\in \{k,k+1\}$. To avoid such pairs, we divide the $x^*$ cycle into blocks of $3k$, each further divided into three sections of length $k$, and only consider those $i$ which lies in the middle section of a block. Formally, define
\[
D=\left\{k+1,k+2,\ldots, 2k, 4k+1,\ldots, 5k, \ldots, 3k(\lfloor n/3k\rfloor-1)+k+1,\ldots, 3k(\lfloor n/3k\rfloor-1)+2k \right\}.
\]
Then for distinct $i$ and $j$ in $D$, the difference graph of $x^{(i)}$ and $x^{(j)}$ have disjoint edge sets.
This means all elements of $\left\{\langle L, x^{(i)}-x^*\rangle: i\in D\right\}$ are mutually independent.

For each $i\in D$, we have
\begin{align*}
& \mathbb{P}\{\langle L,x^{(i)}-x^*\rangle<0\} \\
= & \mathbb{P}\left\{L(i-k,i+1)+L(i,i+k+1)-L(i-k,i)-L(i+1,i+k+1)<0\right\}\\
= & \mathbb{P}\left\{Y_1+Y_2-X_1-X_2<0\right\},
\end{align*}
where $X_1$, $X_2$ are independent copies of $\log \frac{\diff P_n}{\diff Q_n}$ under $P_n$, 
and $Y_1$, $Y_2$ are independent copies of $\log \frac{\diff P_n}{\diff Q_n}$ under $Q_n$. Therefore
\begin{align}
\mathbb{P}\left\{\langle L,x-x^*\rangle<0,\;\;\; \forall x\neq x^*\right\}
\nonumber \leq & \mathbb{P}\left\{\langle L, x^{(i)}-x^*\rangle<0, \;\;\; \forall i\in D\right\}\\
\nonumber = & \left(\mathbb{P}\left\{Y_1+Y_2-X_1-X_2<0\right\}\right)^{|D|}\\
\leq & \exp\left(-|D| \mathbb{P}\left\{Y_1+Y_2-X_1-X_2\geq 0\right\}\right). \label{eq:union}
\end{align}
From the mutual independence of $X_1$, $X_2$, $Y_1$, $Y_2$, for any $\tau\in\reals$, we have
\[
\mathbb{P}\left\{Y_1\geq \tau\right\} \mathbb{P}\left\{Y_2\geq \tau\right\} \mathbb{P}\left\{X_1\leq \tau\right\} \mathbb{P}\left\{X_2\leq \tau\right\} \leq 
\mathbb{P}\left\{Y_1+Y_2-X_1-X_2\geq 0\right\}
\]
 and hence
\[
\log \mathbb{P}\left\{Y_1+Y_2-X_1-X_2\geq 0\right\}\geq 2\sup_{\tau\in\mathbb{R}}\left(\log \mathbb{P}\left\{Y_1\geq \tau\right\}+\log \mathbb{P}\left\{X_1\leq \tau\right\}\right).
\]
Since \eqref{eq:union} is an upper bound for $\mathbb{P}\{\widehat{x}_{\text{ML}}=x^*\}$, the success of the MLE must require that
\[
\log |D|+2\sup_{\tau\in\mathbb{R}}\left(\log \mathbb{P}\left\{Y_1\geq \tau\right\}+\log \mathbb{P}\left\{X_1\leq \tau\right\}\right)\rightarrow -\infty.
\]
On one hand, by \prettyref{assump:exact.lower},
\[
\sup_{\tau\in\mathbb{R}}\left(\log \mathbb{P}\left\{Y_1\geq \tau\right\}+\log \mathbb{P}\left\{X_1\leq \tau\right\}\right)\geq -(1+o(1))\alpha_n+o(\log n).
\]
On the other hand, by construction we have $|D| \geq n/3 -k\geq n/4$ under the assumption $k <n/12$, from which 
we conclude the necessity of $2\alpha_n\geq (1+o(1)) \log n$ for $\mathbb{P}\{\widehat{x}_{\text{ML}}=x^*\} \to1$.
%

\subsection{Counting difference graphs}
	\label{sec:counting}

To prove \prettyref{lmm:Xdeltacard}, we begin with some notations. For a $2k$-NN graph $x$, let $E_{\Red}(x)$ (resp.~$E_{\Blue}(x)$) be the set of red (resp.~blue) edges in $G(x)$. The proof strategy is as follows: First, in \prettyref{lmm:red} we count
\[
\mathcal{E}_{\Red}(\Delta)=\left\{E_{\Red}(x):x\in\mathcal{X}_\Delta \right\}. 
\]
Then for each $E_{\Red}\in\mathcal{E}_{\Red}(\Delta)$, \prettyref{lmm:blue} enumerates
\[
\mathcal{X}(E_{\Red}) = \left\{x\in \mathcal{X}_\Delta: E_{\Red}(x)=E_{\Red}\right\}.
\]
which contains all sets of blue edges that are compatible with $E_{\Red}$. This completely specifies the difference graph $G(x)$, and hence the $2k$-NN graph $x$.


For a given $2k$-NN graph $x$ associated with the permutation $\sigma$, let $\mathcal{N}_{x}(i)$ denote the set of neighbors of $i$ in $x$. Let $d_{x}(i,j)=\min\{|\sigma^{-1}(i)-\sigma^{-1}(j)|, n-|\sigma^{-1}(i)-\sigma^{-1}(j)|\}$, which is the distance between $i$ and $j$ on the Hamiltonian cycle defined by $\sigma$. It is easy to check that $d_{x}$ is a well-defined metric on $[n]$. For the hidden $2k$-NN graph $x^*$, define $\mathcal{N}_{x^*}(\cdot)$ and $d_{x^*}(\cdot,\cdot)$ accordingly. 

\begin{definition}\label{def:closeby}
In the $2k$-NN graph $x^*$, define the distance between two edges $e=(i,\tilde{i})$ and $f=(j,\tilde{j})$ as
\[
d(e,f)=\min\{d_{x^*}(i,j), d_{x^*}(i, \tilde{j}), d_{x^*}(\tilde{i}, j), d_{x^*}(\tilde{i},\tilde{j})\}.
\]
We say $e$ and $f$ are \emph{nearby} if $d(e,f)\leq 2k$.
\end{definition}

Since a $2k$-NN graph has a total of $kn$ edges, the cardinality of $\mathcal{E}_{\Red}(\Delta)$ is at most ${kn\choose \Delta}$. The following lemma provides additional structural information for elements of $\mathcal{E}_{\Red}(\Delta)$ that allows us to improve this trivial bound.

\begin{lemma}\label{lmm:2k}
Suppose $k\ge 2$. 
For each red edge $e$ in the difference graph $G$, there exists a nearby red edge $f$ in $G$ that is distinct from $e$.
\end{lemma}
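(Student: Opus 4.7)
The plan is to argue by contradiction. Relabel so that $x^*$ corresponds to the identity cyclic order $(1,2,\ldots,n)$, and suppose $e=(i,i+t)$ (indices mod $n$) is a red edge with $1 \le t \le k$ but no other red edge of $G$ lies within distance $2k$ of $e$. I will derive a contradiction by analyzing the $x^*$-block
\[
S = \{i-k,\, i-k+1,\, \ldots,\, i+k+t\}
\]
of $m = 2k+t+1$ consecutive vertices containing both endpoints of $e$.

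The hypothesis first yields a local rigidity statement. Any vertex $v \notin \{i, i+t\}$ at $x^*$-distance at most $2k$ from $i$ or $i+t$ cannot be incident to any red edge (such an edge would be nearby $e$); by balancedness of $G$, it has no incident blue edge either, so $\calN_x(v) = \calN_{x^*}(v)$. At $i$ and $i+t$ the red and blue degrees are each $1$, and the blue partners $j_i, j_{i+t}$ must lie outside the above neighborhood---in particular outside $S$---since every other vertex in that neighborhood has blue degree $0$. Consequently, the only red edge with both endpoints in $S$ is $e$ and both blue edges incident to $\{i,i+t\}$ leave $S$, yielding
\[
|E_x(S)| \;=\; |E_{x^*}(S)| - 1.
\]

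Since $|E_{x^*}(S)|$ equals the maximum number of edges any $2k$-NN graph on $[n]$ can induce on an $m$-vertex set---attained precisely when the set is cyclically consecutive---the above identity says that the $x$-positions of $S$ achieve this maximum minus one. The main combinatorial challenge, and the heart of the proof, is a sharp classification of such almost-optimal arrangements: for $k\ge 2$, the $x$-positions of $S$ must occupy $m$ out of $m+1$ consecutive slots on the $x$-cycle, with the missing slot located one position in from an endpoint. This can be shown by a \emph{loss analysis} tracking, for each pair of indices $1 \le l < l' \le m$ with $l'-l\le k$, whether the $x$-positions $p_l < p_{l'}$ satisfy $p_{l'}-p_l > k$; one verifies that any gap of size $\ge 2$ or any interior gap of size $1$ between consecutive $p_l$'s creates at least $k \ge 2$ such lost pairs, so the only way to lose exactly one edge is a single size-one gap adjacent to the boundary.

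With this classification, the contradiction is immediate. In the surviving arrangement, the unique vertex at the ``isolated'' endpoint has only $k-1$ elements of $S$ among its $x$-neighbors. However, every vertex $v\in S$ satisfies $\deg_x(v,S) \ge k$: when $v \notin \{i,i+t\}$, the identity $\calN_x(v) = \calN_{x^*}(v)$ gives $\deg_x(v,S) = \deg_{x^*}(v,S) \ge k$, with equality only at the two extremes $v = i-k$ and $v = i+k+t$; when $v \in \{i,i+t\}$, the unique missing $x^*$-neighbor lies in $S$ while the blue replacement lies outside $S$, so $\deg_x(v,S) = 2k-1 \ge k$. Since $k-1 < k$, no vertex of $S$ can occupy the isolated position, producing the desired contradiction and completing the proof.
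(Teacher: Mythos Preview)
Your strategy is sound but takes a very different route from the paper, and the crucial classification step is both imprecisely stated and not fully justified. You claim that any gap of size $\ge 2$, or any interior size-$1$ gap, between consecutive $x$-positions creates at least $k$ lost pairs. This is false: a size-$1$ gap at position $j_0$ with $2 \le j_0 \le k-1$ produces only $j_0$ lost pairs, and a size-$2$ gap at the boundary produces only $\min(k,2)=2$ lost pairs. Fortunately ``at least $2$'' is all you need, and that weaker bound does hold once you linearize by cutting at a gap of length $>k$. However, you never address the cyclic case in which \emph{every} gap has length $\le k$; this forces $n\le (3k+1)k$, which is excluded in the paper's main regime $k=n^{o(1)}$ but not by the lemma's hypotheses. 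So your classification, while essentially right for the relevant parameters, is stated more strongly than you prove and has a genuine boundary case left open.

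By contrast, the paper's argument is short and purely local, with no extremal combinatorics. It cases on the degree of the endpoint $i$ in $G$. If this degree exceeds $2$, a second red edge at $i$ is nearby for free. If the degree equals $2$, then $\calN_x(i)=\calN_{x^*}(i)\cup\{j\}\setminus\{\tilde i\}$ for the unique blue neighbor $j$; since $k\ge 2$, at least one $x$-cycle neighbor $j_1$ of $j$ lies in $\calN_{x^*}(i)$. Either $d_{x^*}(j,j_1)\le k$, whence $d_{x^*}(i,j)\le 2k$ and any red edge at $j$ is nearby $e$, or $(j,j_1)$ is a blue edge, so $j_1$ carries a red edge and $d_{x^*}(i,j_1)\le k$ makes it nearby. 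Your approach proves a much stronger local rigidity statement (pinning down $\calN_x$ on an entire block and then the block's embedding in the $x$-cycle) before extracting the contradiction; this yields more structural information, but at the price of a substantially harder intermediate lemma that the paper's three-case argument sidesteps entirely.
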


\begin{proof}
We divide the proof into two cases according to the degree of one of the endpoints of $e=(i,\tilde{i})$, say $i$, in the difference graph.
\begin{figure}[ht]
    \centering
    \begin{subfigure}[b]{0.3\textwidth}
    \begin{centering}
        \includegraphics[height=.8\linewidth]{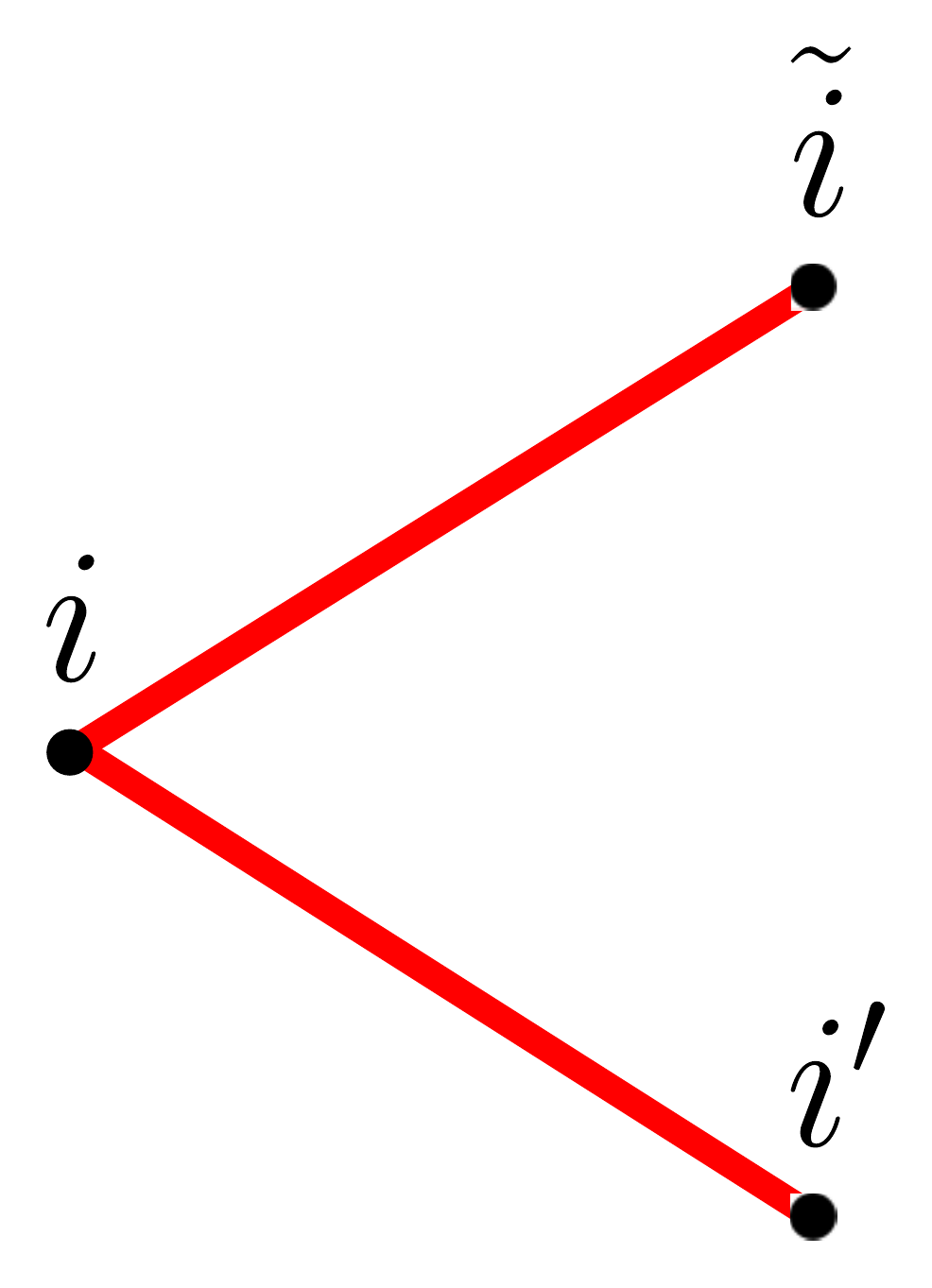}
        \caption[justification   = centering]{Case 1.}
        \label{fig:2k.case.1}
     \end{centering}
    \end{subfigure}
    ~ 
    \begin{subfigure}[b]{0.3\textwidth}
    \begin{centering}
        \includegraphics[height=.8\linewidth]{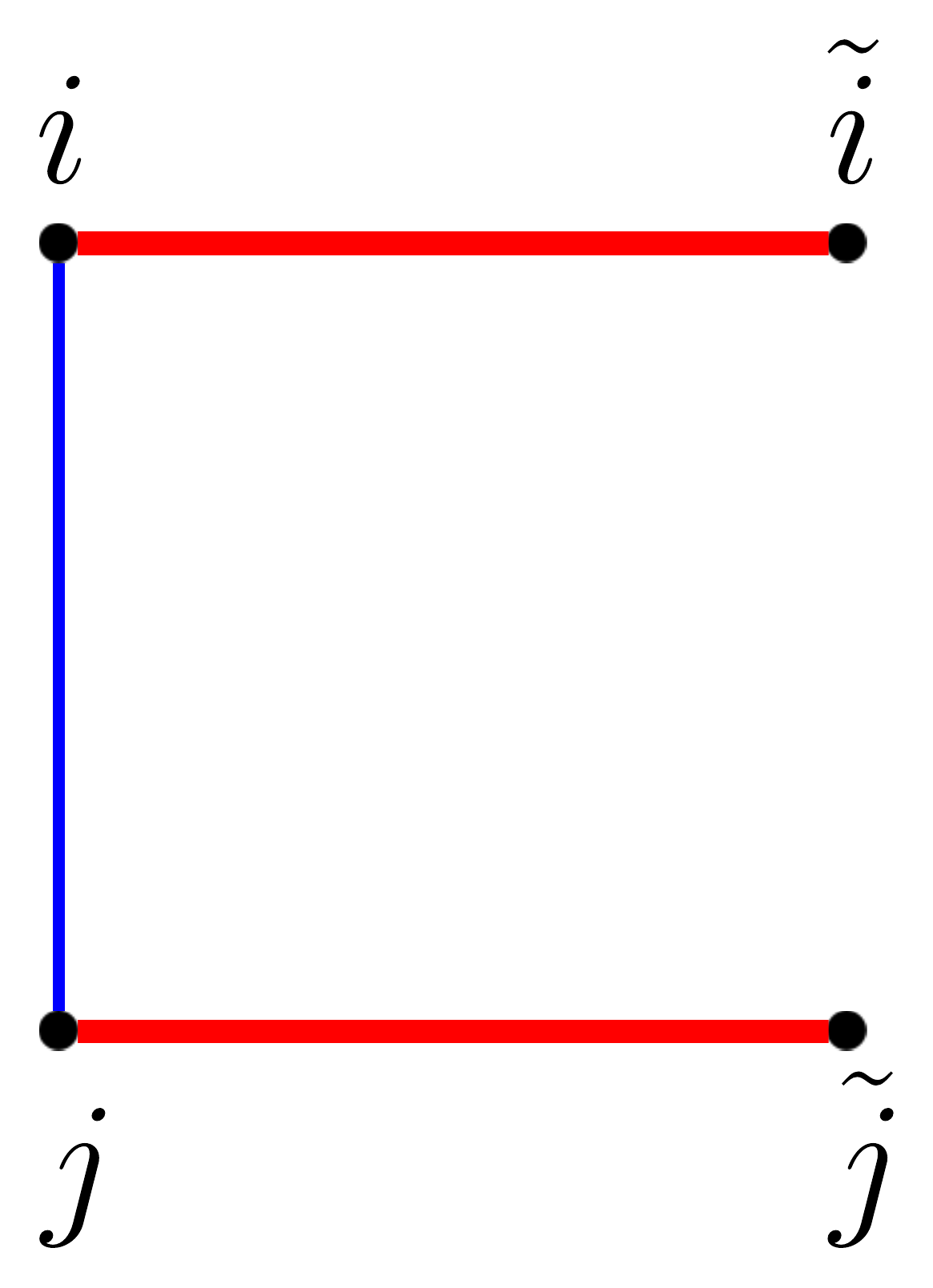}
        \caption[justification   = centering]{Case 2(a).}
        \label{fig:2k.case.2a}
    \end{centering}
    \end{subfigure}
    \hfill
    ~ 
    \begin{subfigure}[b]{0.3\textwidth}
    \begin{centering}
        \includegraphics[height=.8\linewidth]{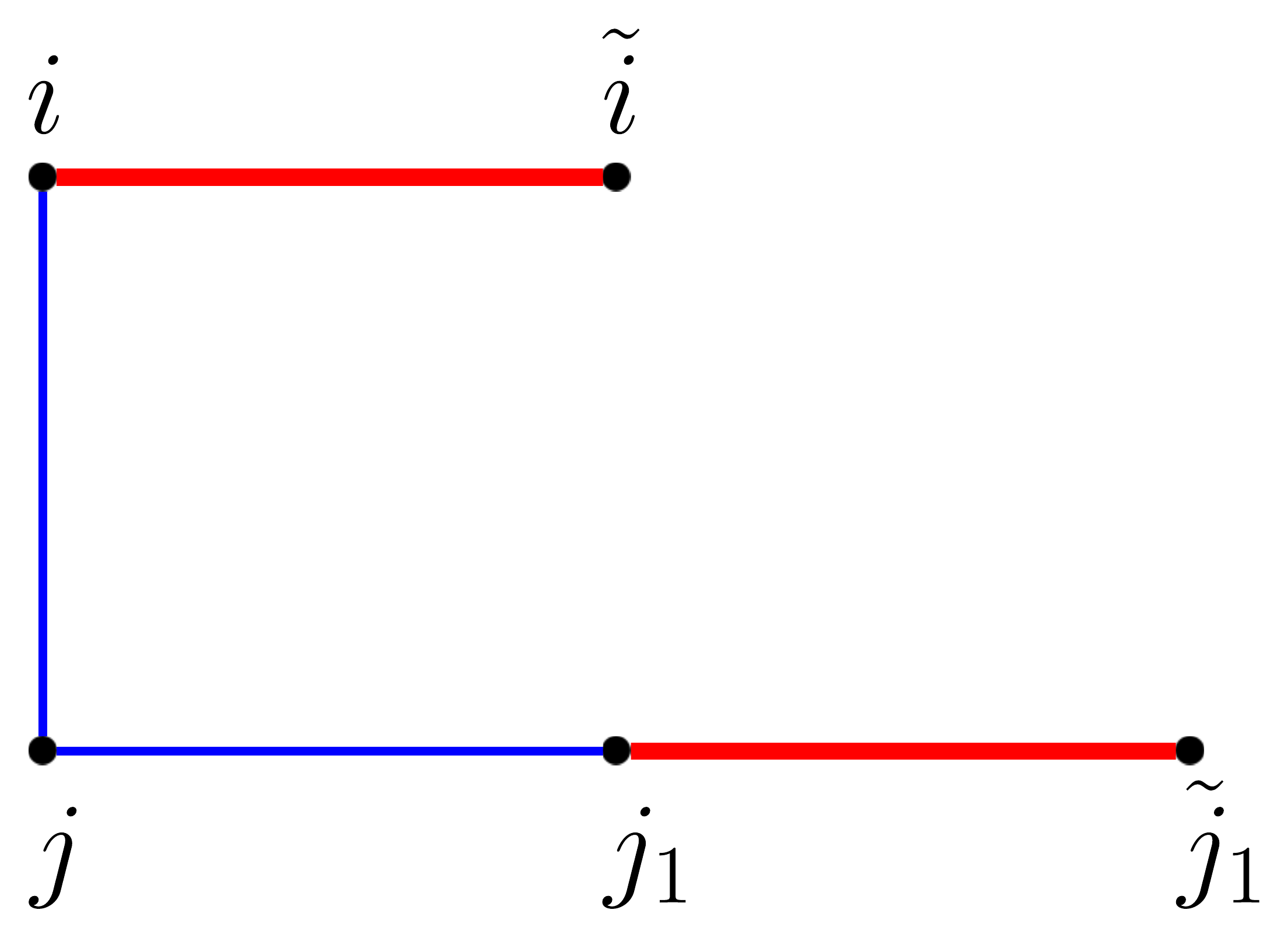}
        \caption[justification   = centering]{Case 2(b).}
        \label{fig:2k.case.2b}
    \end{centering}
    \end{subfigure}
    \caption{Three cases considered in the proof of \prettyref{lmm:2k}.}
    \label{fig:2k}
\end{figure}
\begin{enumerate}
\item 
The degree of $i$ is strictly larger than $2$. Then by balancedness the number of red edges attached to $i$ is at least 2. Other than $(i,\tilde{i})$, there must exist at least one other red edge $(i,i')$. By definition
\[
d((i,\tilde{i}), (i,i'))\leq d_{x^*}(i,i)=0<2k.
\]
That is, $(i,i')$ and $(i,\tilde{i})$ are nearby. See \prettyref{fig:2k.case.1}.
\item 
The degree of $i$ is equal to $2$. Then $i$ is only attached to one red edge and one blue edge in $G$. Denote the blue edge as $(i,j)$. Since the only red edge attached to $i$ is $(i,\tilde{i})$, we have that in the proposed solution $x$, the vertex $i$ is connected to all its old neighbors in $x^*$ except $\tilde{i}$. Thus we get that $\mathcal{N}_x(i)=\mathcal{N}_{x^*}(i)\cup \{j\}\backslash\{\tilde{i}\}$. As a result, 
when $k\geq 2$, out of the two vertices $j_1$, $j_2$ that are right next to $j$ in the $x$ cycle ($d_x(j,j_1)=d_x(j,j_2)=1$), at least one of them is an old neighbor of $i$. WLOG say $j_1\in\mathcal{N}_{x^*}(i)$. Consider these cases:
\begin{enumerate}
\item $d_{x^*}(j,j_1)\leq k$. By triangle inequality $d_{x^*}(j,i)\leq d_{x^*}(j,j_1)+d_{x^*}(i,j_1)\leq 2k$. Because $G$ is a balanced graph, there is at least one red edge $(j,\tilde{j})$ attached to $j$, and
\[
d((i,\tilde{i}), (j, \tilde{j}))\leq d_{x^*}(j,i)\leq 2k.
\]
In other words, $(j,\tilde{j})$ and $(i,\tilde{i})$ are nearby. See \prettyref{fig:2k.case.2a}.

\item $d_{x^*}(j,j_1)>k$. In this case $(j,j_1)$ appears in the difference graph as a blue edge. Therefore $j_1$ is one of the vertices in $G$ and attached to at least one red edge $(j_1,\tilde{j}_1)$. Recall that $j_1\in \mathcal{N}_{x^*}(i)$. Therefore
\[
d((i,\tilde{i}), (j_1,\tilde{j}_1))\leq d_{x^*}(i,j_1)\leq k.
\]
In other words, $(j_1,\tilde{j}_1)$ and $(i,\tilde{i})$ are nearby. See \prettyref{fig:2k.case.2b}.

\end{enumerate}
\end{enumerate}
\end{proof}

The following lemma gives an upper bound for the size of $\mathcal{E}_{\Red}(\Delta)$ and it is a direct consequence of \prettyref{lmm:2k}. 

\begin{lemma}\label{lmm:red}
\[
\left|\mathcal{E}_{\Red}(\Delta)\right|\leq (96k^2)^{\Delta}{kn\choose \lfloor\Delta/2\rfloor}.
\]
\end{lemma}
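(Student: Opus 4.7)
The plan is to enumerate each $E_{\Red}\in\mathcal{E}_{\Red}(\Delta)$ by first fixing a small ``dominating'' subset $S\subseteq E_{\Red}$ of size $m:=\lfloor\Delta/2\rfloor$ and then specifying the remaining $\lceil\Delta/2\rceil$ red edges as nearby edges of $S$; the factor $\binom{kn}{\lfloor\Delta/2\rfloor}$ in the target bound comes from the choice of $S$, while $(96k^2)^\Delta$ absorbs the cost of specifying $E_{\Red}\setminus S$. To build $S$, I form the auxiliary graph $H$ on vertex set $E_{\Red}$ where two red edges are joined iff they are nearby in the sense of \prettyref{def:closeby}. By \prettyref{lmm:2k}, $H$ has no isolated vertex, so Ore's classical bound $\gamma(G)\le |V(G)|/2$ for graphs with minimum degree one (proved, for instance, by passing to a spanning forest of $H$ and keeping the smaller bipartite color class inside each tree) furnishes a dominating set $S\subseteq E_{\Red}$ with $|S|\le m$. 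Padding $S$ with arbitrary further elements of $E_{\Red}$ only enlarges the dominated set, so I may take $|S|=m$ exactly.

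Next I count, for any fixed $e=(i,\tilde i)\in E(x^*)$, the number $N$ of edges of $x^*$ nearby $e$: such an edge $f=(j,\tilde j)$ must have some endpoint within $d_{x^*}$-distance $2k$ of $\{i,\tilde i\}$ (at most $2(4k+1)$ choices for that endpoint), and the remaining endpoint within $d_{x^*}$-distance $k$ of the first (at most $2k$ choices for the other), yielding $N\le (8k+2)(2k)\le 20k^2$. Since $E_{\Red}\setminus S\subseteq\bigcup_{a\in S}\mathcal{N}_{\mathrm{nb}}(a)$ and the right-hand side has cardinality at most $mN$, I obtain
\[
|\mathcal{E}_{\Red}(\Delta)|\;\le\;\binom{kn}{m}\binom{mN}{\lceil\Delta/2\rceil}\;\le\;\binom{kn}{\lfloor\Delta/2\rfloor}(eN)^{\lceil\Delta/2\rceil}\;\le\;\binom{kn}{\lfloor\Delta/2\rfloor}(20e\,k^2)^{\Delta},
\]
where I use the standard estimate $\binom{a}{b}\le (ea/b)^b$ with $a/b=mN/\lceil\Delta/2\rceil\le N$, followed by $\lceil\Delta/2\rceil\le\Delta$. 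Since $20e<96$, the right-hand side is at most $(96k^2)^\Delta\binom{kn}{\lfloor\Delta/2\rfloor}$, as claimed.

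The main obstacle is precisely securing the $\binom{kn}{\lfloor\Delta/2\rfloor}$ factor rather than the naive $\binom{kn}{\Delta}$, i.e.\ halving the exponent of $n$. This is exactly what \prettyref{lmm:2k} is designed to enable: since no red edge is alone in the nearby-graph $H$, roughly half of the red edges can be recovered ``for free'' as neighbors of the other half, at only a $\mathrm{poly}(k)$ cost per recovered edge. Formalizing ``roughly half'' as the precise threshold $\lfloor\Delta/2\rfloor$ rests on the Ore domination bound $\gamma(H)\le|V(H)|/2$; a weaker tool such as just a maximal matching in $H$ would produce a dominating set of size up to $|V(H)|$ in the worst case (e.g.\ when $H$ is a disjoint union of stars), costing us the critical factor-of-two improvement on the $n$-exponent and collapsing the bound back to the trivial $(C k^2\cdot n)^\Delta$ estimate from \prettyref{eq:Xdeltacard_simple}.
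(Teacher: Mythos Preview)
Your proof is correct and takes a genuinely different route from the paper's. Both arguments start from the same auxiliary ``nearby'' graph $H$ on vertex set $E_{\Red}$ and both rely on \prettyref{lmm:2k} to guarantee that $H$ has minimum degree at least one. From there the two proofs diverge. The paper enumerates $E_{\Red}$ component by component: since every component of $H$ has size at least two, there are at most $m\le\lfloor\Delta/2\rfloor$ components; the paper then chooses one root edge per component ($\binom{kn}{m}$ choices), fixes the component sizes via a composition count $\binom{\Delta-1}{m-1}$, enumerates an unlabeled rooted spanning tree of each component using Otter's bound $3^{\Delta_i}$, and finally labels each non-root vertex of the tree with one of at most $16k^2$ nearby edges. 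You instead invoke Ore's domination bound $\gamma(H)\le|V(H)|/2$ directly to obtain a dominating set $S$ of size exactly $\lfloor\Delta/2\rfloor$, and then recover all of $E_{\Red}\setminus S$ in one shot as a subset of the at most $|S|\cdot 20k^2$ edges of $x^*$ that are nearby some element of $S$.

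Your argument is more economical: it avoids the sum over $m$, the composition count, and the appeal to tree enumeration, and it even yields a slightly sharper constant ($20e<96$). The paper's tree-based encoding, on the other hand, is more explicitly constructive and mirrors in spirit the sequential enumeration used later in the proof of \prettyref{lmm:blue}. Conceptually, both proofs extract the same content from \prettyref{lmm:2k}: half of the red edges suffice to pin down the other half at $\mathrm{poly}(k)$ cost per edge, which is precisely what converts the naive $\binom{kn}{\Delta}$ into $\binom{kn}{\lfloor\Delta/2\rfloor}$.
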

\begin{proof}
To each member $E_{\Red}$ of $\mathcal{E}_{\Red}(\Delta)$, we associate an undirected graph $\tilde{G}(E_{\Red})$ with vertex set $E_{\Red}$ and edge set $\mathcal{E}(E_{\Red})$, such that for $e,f\in E_{\Red}$, $(e,f)\in\mathcal{E}(E_{\Red})$ if $e$ and $f$ are nearby per \prettyref{def:closeby}. It suffices to enumerate all $E_{\Red}$ for which $\tilde{G}(E_{\Red})$ is compliant with the structural property enforced by \prettyref{lmm:2k}. Our enumeration scheme is as follows:

\begin{enumerate}
\item Fix $m\in[\Delta]$ to be the number of connected components of $\tilde{G}(E_{\Red})$. Select $\{e_1,\ldots, e_m\}$ from the edge set of $x^*$. Since $x^*$ is a $2k$-NN graph with $kn$ edges, there are ${kn\choose m}$ ways to select this set.

\item Let $\Delta_1$, \ldots, $\Delta_m$ be the sizes of the connected components $\mathcal{C}_1,\ldots, \mathcal{C}_m$ of $\tilde{G}(E_{\Red})$. Since $\Delta_i\geq 1$ and $\sum \Delta_i=\Delta$, the total number of such $(\Delta_i)$ sequences is ${\Delta-1 \choose m-1}$, as each sequence can be viewed as the result of replacing $m-1$ of the ``+" symbols with ``," in the expression $\Delta=1+1+\ldots+1+1$. 

\item For each $\mathcal{C}_i$, there is at least one spanning tree $T_i$. Since $\mathcal{C}_i$ and $T_i$ share the same vertex set, it suffices to enumerate $T_i$. First enumerate the isomorphism class of $T_i$, that is, count the total number of unlabeled rooted trees with of $\Delta_i$ vertices. From~\cite{otter1948number}, there are at most $3^{\Delta_i}$ such unlabeled trees. 

\item For $i=1,\ldots, m$, let $e_i$ be the root of $T_i$. Enumerate the ways to label the rest of tree $T_i$. To start, label the vertices on the first layer of $T_i$, that is, the children of $e_i$. A red edge $f$ being a child of $e_i$ on $T_i$ means $f$ and $e_i$ are nearby, limiting the number of labels to at most $16k^2$. To see why, note that at least one endpoint of $f$ is of $d_{x^*}$ distance at most $2k$ from one of the endpoints of $e_i$. No more than $8k$ vertices fit this description. The other endpoint of $f$ can then only choose from $2k$ vertices because $f$ is in the edge set of $x^*$. 

The remaining layers of $T_i$ can be labeled similarly, with at most $16k^2$ possibilities to label each vertex. In total there are at most $(16k^2)^{\Delta_i-1}$ to label $T_i$.

\end{enumerate}

This enumeration scheme accounts for all members of $\mathcal{E}_{\Red}(\Delta)$. By \prettyref{lmm:2k}, $\tilde{G}$ does not contain singletons, {\it i.e.} $\Delta_i\geq 2$ for all $i$. Thus $m\leq \lfloor \Delta/2\rfloor$, and
\begin{align*}
\left|\mathcal{E}_{\Red}(\Delta)\right| \leq& \sum_{m\leq \lfloor \Delta/2\rfloor} {kn\choose m}{\Delta-1\choose m-1}\prod_{i\leq m}3^{\Delta_i}(16k^2)^{\Delta_i-1}\\
\leq& {kn\choose \lfloor\Delta/2\rfloor}2^{\Delta-1}3^\Delta(16k^2)^\Delta\leq (96k^2)^\Delta {kn\choose \lfloor\Delta/2\rfloor}.
\end{align*}

\end{proof}

The following lemma controls the number of $2k$-NN graphs that are compatible with a fixed set of red edges. 
A key observation is that the bound does not depend on $n$.

\begin{lemma}\label{lmm:blue}
For each $E_{\Red}\in\mathcal{E}_{\Red}(\Delta)$,
\begin{align}
\left|\mathcal{X}(E_{\Red})\right|\leq 2(32k^3)^{2\Delta}\Delta^{\Delta/k}. \label{eq:blue_factor}
\end{align}
\end{lemma}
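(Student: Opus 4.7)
The plan is to enumerate $\mathcal{X}(E_{\Red})$ by analyzing the blue edges in the difference graph $G(x)$ given the fixed red edge set, and exploiting the $2k$-NN structure to show that the underlying permutation $\sigma$ of $x$ is essentially local. Let $V_{\Red}$ denote the set of vertices incident to $E_{\Red}$, so that $|V_{\Red}| \le 2\Delta$. By the balancedness of $G(x)$, every blue edge lies within $V_{\Red}$ and the blue degree at each vertex equals its red degree, a quantity determined by $E_{\Red}$ alone. In particular, every vertex $v \notin V_{\Red}$ retains its full $x^*$-neighborhood in $x$. Combined with the local triangle structure present in $2k$-NN graphs for $k \ge 2$, this forces the cyclic order of any run of consecutive vertices of $\sigma^*$ that avoid $V_{\Red}$ to be preserved (up to a global dihedral symmetry) in the new cycle $\sigma$. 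Thus reconstructing $x$ amounts to threading the $V_{\Red}$ vertices back into the cycle within a bounded window around each run of $I_{\Red} = (\sigma^*)^{-1}(V_{\Red})$.

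The second step is to bound the number of such local threadings. For each red edge $(u,\tilde u) \in E_{\Red}$, its matching blue ``counterpart'' must connect two vertices that were at cyclic distance greater than $k$ in $\sigma^*$ but become cyclically close in $\sigma$; this leaves $O(k)$ choices for the direction of the shift and $O(k^2)$ choices for the blue partner. Aggregating over all $\Delta$ red edges yields a bound of $(32k^3)^{2\Delta}$ on the number of such local rewirings. Morally, this step is a refined version of the path enumeration already used for the simple bound \prettyref{eq:Xdeltacard_simple}, but it exploits both endpoints of each blue edge simultaneously to suppress the factor of $n$ in favor of a factor depending only on $k$.

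The additional factor $\Delta^{\Delta/k}$ should come from accounting for long coordinated reversals: reversing a block of length $\ell \ge k$ in $\sigma^*$ simultaneously generates $\Theta(\ell k)$ red and blue edges, so at most $\Delta/k$ such long reversals can be present. Enumerating the positions at which these long reversals sit among the $\Delta$ red edges then contributes a factor of at most $\binom{\Delta}{\lfloor \Delta/k\rfloor} \le \Delta^{\Delta/k}$, producing the final term in the claimed bound.

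The main obstacle is organizing the enumeration so that the short-range local swaps (contributing the $k^{O(\Delta)}$ factor) and the long-range block reversals (contributing the $\Delta^{O(\Delta/k)}$ factor) are counted without overlap. A single connected component of $G(x)$, via its alternating Eulerian circuit, may mix swaps and reversals, and separating these contributions while maintaining the $n$-independence of the bound and matching the precise exponents in $k$ and $\Delta$ is the principal technical hurdle.
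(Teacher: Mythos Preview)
Your setup is sound: the blue edges lie within $V_{\Red}$, vertices outside $V_{\Red}$ keep their $x^*$-neighborhoods, and the goal is an $n$-free bound on the number of compatible permutations $\sigma$. The gap is in how you organize the enumeration and, in particular, in your explanation of the $\Delta^{\Delta/k}$ factor.

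Your proposed decomposition into ``short-range local swaps'' versus ``long coordinated reversals'' does not correspond to a valid decomposition of general $\sigma$. An arbitrary permutation need not factor into block reversals and adjacent swaps in any canonical way, so there is no well-defined count of ``long reversals'' to bound by $\Delta/k$. Even for a single block reversal your edge count is off: reversing a segment of length $\ell\gg k$ produces $\Theta(k^2)$ red/blue edges localized at the two boundaries (cf.\ \prettyref{fig:ex_kis2}), not $\Theta(\ell k)$; so the argument ``each long reversal costs $\Theta(k)$ blue edges, hence at most $\Delta/k$ of them'' does not follow. Relatedly, the claim that each red edge has a blue ``counterpart'' with only $O(k^2)$ choices is exactly what fails in general --- a blue edge can join two vertices of $V_{\Red}$ that are arbitrarily far apart in $\sigma^*$, giving $O(\Delta)$ rather than $O(k^2)$ choices.

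The paper avoids any decomposition of $\sigma$ into moves. Instead it enumerates $\sigma$ \emph{sequentially}: after fixing $\sigma(n-k+1),\ldots,\sigma(k+1)$ (Steps~1--2, costing $2(4k)^{2\Delta}$), it determines $\sigma(i+k+1)$ one $i$ at a time. If $\sigma(i+1)\notin V(G)$ the next vertex is forced; if $\sigma(i+1)\in V(G)$ and already balanced, there are at most $2k$ choices; if $\sigma(i+1)$ is unbalanced, then $(\sigma(i+1),\sigma(i+k+1))$ is blue, and one splits on whether $\sigma(i+k+1)$ acquires fewer than $k$ new blue edges to its $k$ predecessors (then it is a true neighbor of one of them, $\le 2k(k-1)$ choices) or all $k$ are blue (then $\sigma(i+k+1)\in V(G)$, $\le 2\Delta$ choices). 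The last case contributes $k$ fresh blue edges each time it occurs, so it can occur at most $\lfloor\Delta/k\rfloor$ times --- this, not block reversals, is the source of $\Delta^{\Delta/k}$. The $O(k)$ and $O(k^2)$ cases together occur at most $|V(G)|\le 2\Delta$ times, yielding the $(8k^2)^{2\Delta}$ factor.
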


The desired \prettyref{lmm:Xdeltacard} immediately follows from combining \prettyref{lmm:red} and \prettyref{lmm:blue}: 
\begin{align}
|\mathcal{X}_\Delta|
= \left|\bigcup_{E_{\Red}\in\mathcal{E}_{\Red}(\Delta)} \mathcal{X}(E_{\Red}) \right|
 \leq (96k^2)^{\Delta}{kn\choose \lfloor\Delta/2\rfloor}\cdot 2(32k^3)^{2\Delta}\Delta^{\Delta/k}\leq 2\left(C k^{17}n\right)^{\Delta/2} \label{eq:desired_Xdeltacard}
\end{align}
for a universal constant $C>0$, where the last inequality follows from $\binom{a}{b}\le (ea/b)^b$ and $k\ge 2$.
The factor of $\Delta^{\Delta/k}$ in \prettyref{eq:blue_factor} turns out to be crucial. 
To appreciate this subtlety, let us first derive a simple
bound $\left|\mathcal{X}(E_{\Red})\right| \le 4^\Delta \Delta!$. Note that there is a one-to-one correspondence
between $2k$-NN graph $x$ and the difference graph $G(x)$. Hence, it is equivalent to enumerating all possible difference
graphs with the given set of red edges. 
Following the similar alternating Eulerian-circuit based argument for proving \prettyref{eq:Xdeltacard_simple}, we can get that 
$$
\left|\mathcal{X}(E_{\Red})\right| \le  
\sum_{(\Delta_1,...,\Delta_m): \sum \Delta_i=\Delta} \left( 2^\Delta \Delta ! \right) \leq 4^\Delta\Delta!,
$$
where $2^\Delta \Delta !$ counts all the possible orderings of oriented red edges\footnote{To be more precise, 
to count the difference graphs with the given set of $\Delta$ red edges,
it suffices to enumerate all possible edge-disjoint unions of alternating Eulerian circuits with the given set of $\Delta$ red edges.
To this end, for a fixed $m$ and sequence $(\Delta_1, \ldots, \Delta_m)$ such that $\sum \Delta_i=\Delta$,
we enumerate all possible edge-disjoint unions of $m$ alternating Eulerian circuits consisting of $(\Delta_1, \ldots, \Delta_m)$ red edges, respectively. 
First, determine an ordering of oriented red edges, which has 
$2^\Delta \Delta !$ possibilities. Then we connect
the first $\Delta_1$ oriented red edges by blue edges to form the first alternating Eulerian circuit,
the next $\Delta_2$ oriented red edges by blue edges to form the second  alternating Eulerian circuit,
and proceed similarly to form the rest of alternating Eulerian circuits.}. However, this simple
bound  falls short of proving the desired \prettyref{eq:desired_Xdeltacard}, as ${kn\choose \lfloor\Delta/2\rfloor} \Delta !
\ge (c kn)^{\Delta/2} \Delta^{\Delta/2}$ for a universal constant $c>0$. 

\sloppy
\prettyref{lmm:blue} improves over this simple bound by further exploiting the structure in the difference graph. 
In particular, \prettyref{lmm:blue} counts $|\mathcal{X}_\Delta|$ by enumerating all Hamiltonian cycles
$(\sigma(1), \sigma(2), \ldots, \sigma(n),\sigma(1))$ such that $E_{\Red}(x(\sigma))=E_{\Red}$. A key idea is to 
sequentially determine each neighborhood $\mathcal{N}_x\left(\sigma(i)\right)$ starting from $i=1$. 
Suppose $\mathcal{N}_x\left(\sigma(j)\right)$ has been determined for all $1 \le j \le i$
and we are about to specify $\mathcal{N}_x\left(\sigma(i+1)\right)$, which reduces to enumerating $\sigma(i+k+1)$.
Roughly, there are three cases to consider:
\begin{enumerate}
    \item $\sigma(i+1)$ is not in the difference graph $G(x)$. In this case, $\mathcal{N}_x(\sigma(i+1))=\mathcal{N}_{x^*}(\sigma(i+1))$ and thus $\sigma(i+k+1)$ has already been fixed.
    \item $\sigma(i+1)$ is in the difference graph $G(x)$ and $\sigma(i+k+1)$ has fewer than $k$ blue edges
    connecting to $\{\sigma(j): i+1 \le j \le i+k\}$. In this case, at least one of $\{\sigma(j): i+1 \le j \le i+k\}$
    must be a true neighbor of $\sigma(i+k+1)$, which implies that $\sigma(i+k+1)$ has at most $2k^2$ possibilities.
    \item $\sigma(i+1) $ is in the difference graph $G(x)$ and $\sigma(i+k+1)$ has $k$ blue edges
    connecting to $\{\sigma(j): i+1 \le j \le i+k\}$. In this case, $\sigma(i+k+1)$ has at most $2\Delta$ possibilities,
    because the difference graph has at most $2\Delta$ different vertices.
    \end{enumerate}
Note that whenever the last case occur, it gives rise to $k$ new blue edges.
Since the total number of blue edges is $\Delta$,  
the last case can occur at most $\Delta/k$ times, which immediately yields the desired factor $\Delta^{\Delta/k}$ in \prettyref{eq:blue_factor}.

Next, building upon this intuition, we present the rigorous proof of \prettyref{lmm:blue}.

\begin{proof}[Proof of \prettyref{lmm:blue}]
For a given permutation $\sigma$, let $x(\sigma)$ denote the corresponding $2k$-NN graph. Hereafter the dependence on $\sigma$ is suppressed whenever it is clear from the context. These are some useful facts about the difference graph $G$:
\begin{enumerate}
\item Let $V$ denote the collection of the endpoints of edges in $E_\Red$. Then the difference graph $G=(V(G),E(G))$ is given by $V(G)=V$. Since $|E_\Red|=\Delta$, $|V(G)|\leq 2\Delta$.
\item For each $2k$-NN graph, $\sigma$ is determined up to cyclic shifts and a reversals.
\item For two vertices $j\neq j'$, $|\mathcal{N}_x(j)\cap \mathcal{N}_x(j')|$ is completely determined by $d_x(j,j')$ via the formula below.
\[
\left|\mathcal{N}_x\left(j\right)\cap \mathcal{N}_x\left(j'\right)\right|=
\begin{cases}
2k-1-d_x(j,j') & \;\;\;\text{if } d_x(j,j')\leq k;\\
2k+1-d_x(j,j') & \;\;\;\text{if } k<d_x(j,j')\leq 2k;\\
0 & \;\;\; \text{if }d_x(j,j')>2k.
\end{cases}
\]
\end{enumerate}

By fact 2, it suffices to enumerate all $\sigma$ such that $E_{\Red}(x(\sigma))=E_{\Red}$ and $\sigma(1)=1$.  WLOG assume that the ground truth $x^*$, $\sigma^*(i)= i$. The following is the outline of our enumeration scheme:

\begin{enumerate}
\item Enumerate all possibilities for the set $\mathcal{N}_x(1)=\{\sigma(n-k+1), \ldots, \sigma(n), \sigma(2), \ldots, \sigma(k+1)\}$. 
\item With $\mathcal{N}_x(1)$ determined, enumerate all possibilities for the (ordered) sequence $(\sigma(n-k+1), \ldots, \sigma(n), \sigma(2), \ldots, \sigma(k+1))$.
\item For $i$ from $1$ to $n-2k-1$, enumerate $\sigma(i+k+1)$ sequentially, assuming at step $i$ that $\sigma$ were determined from $\sigma(n-k+1)$ up to $\sigma(i+k)$.
\end{enumerate}

Now we give the details on how cardinality bounds are obtained for each step of the enumeration scheme.


{\bf Step 1:} Decompose $\mathcal{N}_x(1)$ according to the set of true neighbors and false neighbors. The set of true neighbors $\mathcal{N}_x(1)\cap \mathcal{N}_{x^*}(1)$ is determined by the set of red edges in $G$. Indeed, this set consists of all members $i\in\mathcal{N}_{x^*}(1)$ for which $(1,i)\notin E_\Red$.

The set $\mathcal{N}_x(1)\backslash \mathcal{N}_{x^*}(1)$ cannot be read directly from the set of red edges. However we know all members of this set must be connected to $1$ via a blue edge. Hence $\mathcal{N}_x(1)\backslash \mathcal{N}_{x^*}(1)$ is a subset of $V(G)$, the vertex set of $G$. Since $V(G)$ is determined by $E_\Red$ and $|V(G)|\leq 2\Delta$, the number of possibilities for $\mathcal{N}_x(1)\backslash \mathcal{N}_{x^*}(1)$ does not exceed the number of subsets of $V(G)$, which is at most $2^{2\Delta}$.

{\bf Step 2:} With the set $\mathcal{N}_x(1)$ determined, we next enumerate all ways to place the elements in $\mathcal{N}_x(1)$ on the Hamiltonian cycle specified by $\sigma$. That is, we specify the sequence $(\sigma(n-k+1), \ldots, \sigma(n), \sigma(2), \ldots, \sigma(k+1))$, or equivalently, specify $\sigma^{-1}(j)$ for all $j\in\mathcal{N}_x(1)$. 

We start with $\mathcal{N}_x(1)\cap V(G)^c$. A vertex in $V(G)^c$ is one whose neighborhood is preserved, {\it i.e.}, $V(G)^c=\{j\in [n]: \mathcal{N}_x(j)=\mathcal{N}_{x^*}(j)\}$. For each $j\in\mathcal{N}_x(1)\cap V(G)^c$, we have by fact 3,
\[
d_x(1,j)=2k-1-\left|\mathcal{N}_x(1)\cap \mathcal{N}_x(j)\right|.
\]
Since $d_x(1,j)$ is completely determined by $\mathcal{N}_x(1)$, there are only two possibilities for $\sigma^{-1}(j)$. 

Furthermore, for every pair $j, j'\in\mathcal{N}_x(1)\cap V(G)^c$, again by fact 3,
\[
d_x(j,j')=
\begin{cases}
2k-1-\left|\mathcal{N}_x\left(j\right)\cap \mathcal{N}_x\left(j'\right)\right| & \;\;\;\text{if }j'\in \mathcal{N}_x\left(j\right);\\
2k+1-\left|\mathcal{N}_x\left(j\right)\cap \mathcal{N}_x\left(j'\right)\right| & \;\;\;\text{otherwise}.
\end{cases}
\]
So $d_x(j,j')$ is also determined by $\mathcal{N}_x(j)$ and $\mathcal{N}_x(j')$. Therefore the entire sequence $(\sigma^{-1}(j): j\in\mathcal{N}_x(1)\cap V(G)^c)$ is determined up to a global reflection around 1.

Next we handle all $j\in \mathcal{N}_x(1)\cap V(G)$. Note that $\sigma^{-1}(j)\in\{n-k+1,\ldots, n, 2,\ldots,k+1\}$ because $j\in\mathcal{N}_x(1)$. Among those $2k$ possible values, some are already taken by $\{\sigma^{-1}(j):j\in\mathcal{N}_x(1)\cap V(G)^c\}$, leaving $\left|\mathcal{N}_x(1)\cap V(G)\right|$ values to which all $j\in \mathcal{N}_x(1)\cap V(G)$ are to be assigned. The number of possible assignments is bounded by $\left|\mathcal{N}_x(1)\cap V(G)\right|!$. Since $\left|\mathcal{N}_x(1)\cap V(G)\right|\leq \min \{2k,2\Delta\}$, $\left|\mathcal{N}_x(1)\cap V(G)\right|!\leq (2k)^{2\Delta}$.

Overall, the number of possible choices of the ordered tuple $(\sigma(n-k+1), \ldots, \sigma(n), \sigma(2), \ldots, \sigma(k+1))$ is at most
\[
2\cdot2^{2\Delta}\cdot(2k)^{2\Delta}=2(4k)^{2\Delta}.
\]

{\bf Step 3:} In the previous two steps the values of $(\sigma(n-k+1), \ldots, \sigma(k+1))$ have been determined, and so are the blue edges between members of $\{\sigma(n-k+1), \ldots, \sigma(k+1)\}$. That is because $(\sigma(j),\sigma(j'))$ is a blue edge if and only if $d_{x^*}(j,j')\leq k$ and $d_{x^*}(\sigma(j),\sigma(j'))> k$. 
Denote this set of blue edges as $E_{\Blue}^{(1)}$, which can be empty. Recall that, by balancedness, the total number of blue edges in $G$ is $\Delta$. If $|E_\Blue^{(1)}|$ is already $\Delta$, then the enumeration scheme is complete because $x$ is completely specified by the difference graph. Otherwise we determine the value of $\sigma(i+k+1)$ sequentially, starting from $i=1$. At the $i$'th iteration, we first assign the value of $\sigma(i+k+1)$, the only remaining undetermined neighbor of $\sigma(i+1)$ in $x$. Then we update the set of blue edges based on the value of $\sigma(i+k+1)$: let $E_\Blue^{(i+1)}=E_\Blue^{(i)}\cup E_{\rm{update}}^{(i)}$, where
\[
E_{\rm{update}}^{(i)} \triangleq \left\{(\sigma(j), \sigma(i+k+1)): d_{x^*}(\sigma(j), \sigma(i+k+1))>k, j=i+1,..., i+k\right\}.
\]
In other words, $E_\Blue^{(i)}$ stands for the set of blue edges that have been determined after the $i-1$'th iteration. We repeat this process until all $\Delta$ blue edges are determined, {\it i.e.,} $|E_\Blue^{(i)}|=\Delta$.

At the start of the $i$'th iteration, all of $\sigma(n-k+1), \ldots, \sigma(i+k)$ have been determined. Unless $|E_{\Blue}^{(i)}|=\Delta$, specify $\sigma(i+k+1)$ as follows. 

Consider three cases according to the red degree of $\sigma(i+1)$, {\it i.e.,} the number of red edges incident to $\sigma(i+1)$ in $E_\Red$. Note that after the value of $\sigma(i+k+1)$ is assigned, $\mathcal{N}_x(\sigma(i+1))$ would be completely specified and all blue edges in $G$ that are incident to $\sigma(i+1)$ would be determined. Therefore exactly one of the following three cases must occur (for otherwise there would be more red edges than blue edges incident to $\sigma(i+1)$ in $G$, contradicting the balancedness of $G$):

\begin{enumerate}
\item (\prettyref{fig:blue.1}) The red degree of $\sigma(i+1)$ is zero, meaning that $\mathcal{N}_x(\sigma(i+1))=\mathcal{N}_{x^*}(\sigma(i+1))$. 
We claim that the value of $\sigma(i+k+1)$ has already been uniquely determined. Indeed, at the $i$th iteration, all but one members of $\mathcal{N}_x(\sigma(i+1))$ are determined, and $\sigma(i+k+1)$ has to be the true neighbor of $\sigma(i+1)$ that is not in $\{\sigma(i-k+1), ..., \sigma(i), \sigma(i+2), ..., \sigma(i+k)\}$.
\begin{figure}[H]
\centering
\includegraphics[width = 5 in]{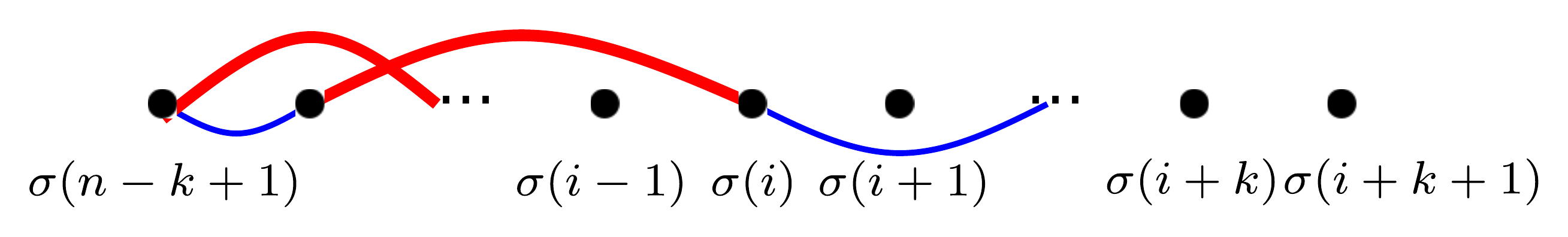}
\caption{Vertices arranged by their order on the Hamiltonian cycle corresponding to $\sigma$. At the $i$th iteration, the values of $\sigma(n-k+1)$ to $\sigma(i+k)$ are determined. The figure shows an example of case 1: the vertex $\sigma(i+1)$ is not attached to any red edges.}
\label{fig:blue.1}
\end{figure} 

\item (\prettyref{fig:blue.2}) The red degree of $\sigma(i+1)$ is nonzero and equals the number of blue edges in $E_\Blue^{(i)}$ incident to $\sigma(i+1)$. We claim that the number of possible values of $\sigma(i+k+1)$ is at most $2k$. In this case by balancedness all blue edges incident to $\sigma(i+1)$ are contained in $E_\Blue^{(i)}$ and therefore the edge $(\sigma(i+1), \sigma(i+k+1))$ does not appear in the difference graph $G$. That implies $\sigma(i+k+1)$ is connected to $\sigma(i+1)$ in $x^*$, limiting the number of choices for $\sigma(i+k+1)$ to at most $2k$.
\begin{figure}[H]
\centering
\includegraphics[width = 5 in]{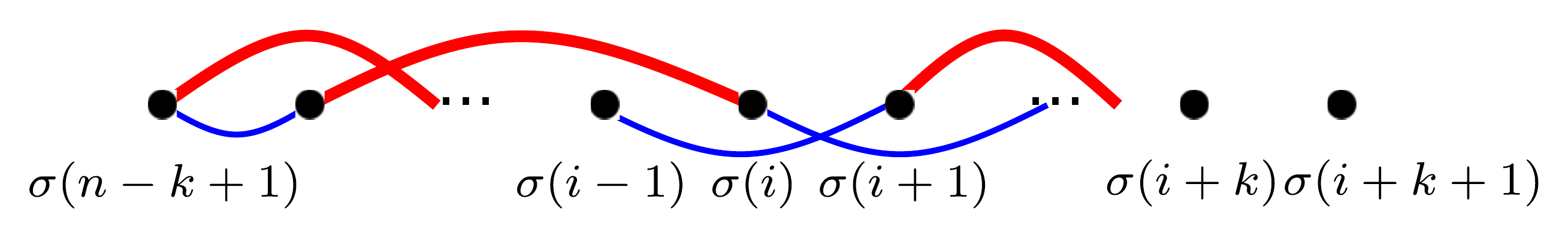}
\caption{Case 2: $\sigma(i+1)$ is attached to some red edge(s) and is already balanced at step $i$. In the figure the red degree and blue degree of $\sigma(i+1)$ are both 1, thus $(\sigma(i+1),\sigma(i+k+1))$ cannot be a blue edge in $G$. 
}
\label{fig:blue.2}
\end{figure}

\item (\prettyref{fig:blue.3}) The red degree of $\sigma(i+1)$ is nonzero and equals one plus the number of blue edges in $E_\Blue^{(i)}$ incident to $\sigma(i+1)$. By balancedness, $(\sigma(i+1), \sigma(i+k+1))$ is a blue edge in $G$. In this case, either $1\leq |E_{\rm{update}}^{(i)}|<k$ or $|E_{\rm{update}}^{(i)}|=k$. Suppose this is the $t$'th time case 3 happens. Let $\xi_t$ encode which of the two possibilities occurs and specify the value $\sigma(i+k+1)$ as follows:
\begin{enumerate}
\item Let $\xi_t=0$ and specify $\sigma(i+k+1)$ such that $1\leq |E_{\rm{update}}^{(i)}|<k$. That is, at least one of $\left\{(\sigma(j), \sigma(i+k+1)): i+2\leq j\leq i+k\right\}$ is not a blue edge in $G$. In this case $\sigma(i+k+1)$ is a true neighbor of at least one of $\{\sigma(i+2), \ldots, \sigma(i+k)\}$; in other words, $\sigma(i+k+1)\in \cup_{i+2\leq j\leq i+k}\mathcal{N}_{x^*}(j)$. Thus, the number of possibilities of $\sigma(i+k+1)$ is at most $2k(k-1)$.
\item Let $\xi_t=1$ and specify $\sigma(i+k+1)$ such that $|E_{\rm{update}}^{(i)}|=k$. That is, each one of $\left\{(\sigma(j), \sigma(i+k+1)):i+2\leq j\leq i+k\right\}$ is a blue edge in $G$. Here $\sigma(i+k+1)$ can choose from at most $|V(G)|\leq 2\Delta$ vertices.
\end{enumerate}

\begin{figure}[H]
\centering
\includegraphics[width = 5 in]{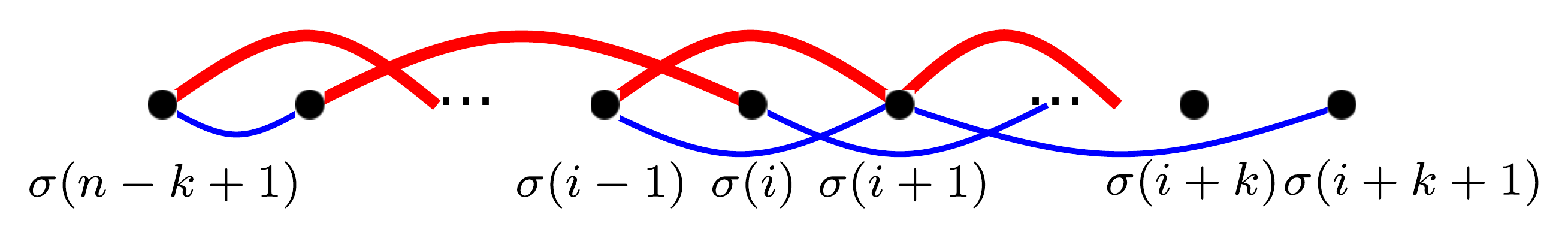}
\caption{Case 3: $\sigma(i+1)$ is attached to some red edge(s) and is not already balanced at step $i$. In the figure $\sigma(i+1)$ has red degree 2 and blue degree 1. Therefore $(\sigma(i+1),\sigma(i+k+1))$ must appear $G$ as a blue edge.}
\label{fig:blue.3}
\end{figure}
\end{enumerate}

The above process terminates until $|E_\Blue^{(i)}|=\Delta$, at which point the sequence $(\sigma(k+2),\ldots,\sigma(n-k))$ are determined. Note that each iteration, 
which one of case 1, 2 or 3 occurs is automatically determined. Therefore it suffices to enumerate (i) the value of $\sigma(i+k+1)$ at the $i$th iteration; (ii) the binary sequence $\xi$ which determines case 3a or case 3b whenever case 3 occurs.
Note that
\begin{itemize}
	\item In total, case 3b) can occur at most $\lfloor\Delta/k\rfloor$ times because $|E_\Blue^{(i)}|$ increases by $k$ each time. 
	\item Also, case 2) and case 3) combined can occur at most $2\Delta$ times, because they only occur when $\sigma(i+1)\in V(G)$.
	\item From the previous fact, the length of the $\xi$ sequence is at most $2\Delta$. 
\end{itemize}
Overall, the total number of possibilities is at most
\begin{align*}
\sum_{\xi\in \{0,1\}^{2\Delta}}(2k(k-1))^{2\Delta}(2\Delta)^{\Delta/k}
\leq \left(8k^2\right)^{2\Delta}\Delta^{\Delta/k}.
\end{align*}

Combined with the cardinality bounds from step 1 and step 2, we have
\[
\left|\mathcal{X}(E_{\Red})\right|\leq 2(4k)^{2\Delta}\cdot (8k^2)^{2\Delta}\Delta^{\Delta/k}=2(32k^3)^{2\Delta}\Delta^{\Delta/k}.
\]
\end{proof}

\section{Almost exact recovery}
\label{sec:weak}
In this section, we present our main results and proofs for almost exact recovery. 
As in \prettyref{sec:exact-suff} we let $X_i$'s and $Y_i$'s denote i.i.d.\ copies of the log-likelihood ratio $ \log \frac{\diff P_n}{\diff Q_n}$ under distribution $P_n$
and under distribution $Q_n$ respectively, with log MGFs $\psi_P(\lambda)$ and $\psi_Q(\lambda)$ defined in \prettyref{eq:psiP} and \prettyref{eq:psiQ}. Denote the Legendre transforms of the log MGFs as
\begin{align}
E_Q(\tau)  & = \psi_Q^*(\tau) \triangleq \sup_{\lambda \in \reals} \lambda \tau - \psi_Q(\lambda),  \label{eq:ratefunction}    \\
 E_P(\tau)  & =  \psi_P^*(\tau) \triangleq  \sup_{\lambda \in \reals} \lambda \tau - \psi_P(\lambda) =E_Q(\tau) -\tau.	 \nonumber
\end{align}
Then Chernoff's bound gives that for all $\tau \in [-D(Q_n\|P_n), D(P_n\|Q_n)]$ and $\ell \geq 1$, 
\begin{align}
 \prob{ \sum_{i=1}^\ell X_i \le \ell \tau } \le e^{- \ell E_P(\tau) }, \quad 
 \prob{ \sum_{i=1}^\ell Y_i \ge \ell \tau } \le  e^{- \ell E_Q(\tau)}.
\label{eq:chernoff}
\end{align}

Note that $E_P$ and $E_Q$ are convex and monotone functions, such that as $\tau$ increases from $-D(Q_n\|P_n)$ to $D(P_n\|Q_n)$, 
$E_Q(\tau)$ increases from $0$ to $D(P_n\|Q_n)$ and
$E_P(\tau)$ decreases from $D(Q_n\|P_n)$ to $0$.
The following assumption postulates a quadratic lower bound of $E_P$ at the boundary:
\begin{assumption}  \label{ass:weakest_regularity}
There exists an absolute constant $c>0$, such that for all $\eta \in [0,1]$,
\begin{align}
E_P((1-\eta) D(P_n\|Q_n)) &\geq c \eta^2 D(P_n\|Q_n).   \label{eq:divquadassump}
\end{align}
\end{assumption}
It is known that \prettyref{ass:weakest_regularity} holds whenever the distribution of $ \log \frac{\diff P_n}{\diff Q_n}$ under under $P_n$ 
 is sub-Gaussian with proxy variance $O(D(P_n\|Q_n))$ (cf.~\cite[Section III]{HajekWuXu_one_info_lim15}). In the Gaussian case where $P_n=\calN(\mu,1)$ and $Q_n=\calN(0,1)$, 
we have $E_P(\tau)= \frac{1}{8} \left( \mu+ \frac{2\tau}{\mu} \right)^2 - \tau$ and $D(P_n\|Q_n)=\mu^2/2$; thus \prettyref{eq:divquadassump} holds.

\begin{theorem}[Almost exact recovery]\label{thm:weak}
Suppose Assumption~\ref{ass:weakest_regularity} holds. 
If $k \log k = o(\log n)$ and 
\begin{align}
\liminf_{n\to \infty} \frac{k D(P_n\|Q_n) }{\log n} >1, \label{eq:weak-suff}
\end{align}
then the MLE \prettyref{eq:mle} achieves almost exact recovery. Conversely, assume that $k=O(\log n)$. If almost exact recovery is possible, then
\begin{align}
\liminf_{n\to \infty} \frac{ k D(P_n\|Q_n) }{\log n} \ge 1.\label{eq:weak-necc}
\end{align}
\end{theorem}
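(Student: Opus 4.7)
Following the stratification strategy from \prettyref{sec:exact-suff}, I would start with
\[
\Expect[d(x^*,\hat{x}_{\mathrm{ML}})] = \sum_{\Delta\ge 1} 2\Delta\, \Prob\{\hat{x}_{\mathrm{ML}}\in \calX_\Delta\} \le \sum_{\Delta\ge 1} 2\Delta\,|\calX_\Delta|\,p_\Delta,
\]
where $p_\Delta = \Prob\{\langle L, x-x^*\rangle \ge 0\}$ depends only on $\Delta$. The naive symmetric Chernoff bound $p_\Delta \le e^{-\alpha_n \Delta}$ combined with \prettyref{lmm:Xdeltacard} would only reproduce the exact-recovery threshold involving $\alpha_n$; under the weaker condition \prettyref{eq:weak-suff} it suffices to split the sum at some $\Delta_0\asymp k$. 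For $\Delta > \Delta_0$, the assumption $k\log k = o(\log n)$ should let the generic bound collapse into a geometric tail of order $o(nk)$. For $\Delta\le \Delta_0$, where $\alpha_n \Delta$ is too small, I would replace it by the non-symmetric decomposition
\[
p_\Delta \le \exp(-\Delta E_Q(\tau)) + \exp(-\Delta E_P(\tau))
\]
with $\tau$ close to $D(P_n\|Q_n)$. \prettyref{ass:weakest_regularity} then supplies $E_P((1-\eta)D)\ge c\eta^2 D$, keeping the exponent positive and of order $kD$ in the small-$\Delta$ regime. The principal obstacle is that even with this quadratic lower bound, the generic count $|\calX_\Delta|\le 2(Ck^{17}n)^{\Delta/2}$ is too coarse for $\Delta\asymp k$; a sharper enumeration exploiting that small-$\Delta$ difference graphs must have support concentrated on a few short arcs of $\sigma^*$ is needed, after which $\eta$ can be calibrated as a function of $\Delta$ to make the sum $o(nk)$.

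\textbf{Converse.} For the necessity I would work in the Bayesian setting where $x^*$ is uniform on $\calX$, analogous to \prettyref{sec:exact.lower}. The plan is to construct a family of local alternatives $\{x^{(v)}: v \in V\}$ indexed by $V\subseteq [n]$ with $|V|=\Omega(n)$, spaced along $\sigma^*$ so that the difference graphs $G(x^{(v)})$ have pairwise disjoint edge-supports. A natural choice is to obtain $x^{(v)}$ from $x^*$ by shifting $v$ by $\Theta(k)$ positions along the Hamiltonian cycle, producing a balanced difference graph with $\Theta(k)$ red and $\Theta(k)$ blue edges; the disjointness makes the events $\{\langle L, x^{(v)}-x^*\rangle\ge 0\}_{v\in V}$ mutually independent. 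The key ingredient is the elementary intersection inequality
\[
\Prob\left\{\textstyle\sum_i Y_i \ge \sum_i X_i\right\} \ge \Prob\left\{\textstyle\sum_i Y_i \ge \Theta(k)\, D(P_n\|Q_n)\right\} \cdot \Prob\left\{\textstyle\sum_i X_i \le \Theta(k)\, D(P_n\|Q_n)\right\},
\]
whose first factor is at least $\exp(-(1+o(1))\Theta(k)D(P_n\|Q_n))$ by the Cram\'er large-deviation \emph{lower} bound (noting $E_Q(D)=D$), and whose second factor is bounded away from zero by the central limit theorem since $\Expect_{P_n}[\sum X_i]=\Theta(k)D$. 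Thus, whenever $\liminf kD/\log n <1$, the expected number of winning alternatives is $\omega(1)$, so the MLE selects some $x^{(v)}\ne x^*$ with high probability; by the disjoint-support construction, each such failure contributes $\Omega(k)$ distinct edge errors at $v$'s neighborhood, and summing over $\Omega(n)$ winning alternatives yields $\Expect[d(x^*,\hat{x}_{\mathrm{ML}})] = \Omega(nk)$, contradicting almost exact recovery.

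The main technical obstacle in the converse is exposing the correct exponent $D(P_n\|Q_n)$ rather than $\alpha_n$: the asymmetric intersection inequality is essential because its threshold $\Theta(k) D$ coincides with the mean of $\sum X_i$ under $P_n$, making Cram\'er's rate function evaluate exactly to $D$ on the $Y$ side. A secondary subtlety is translating ``many winning alternatives'' into $\Omega(nk)$ Hamming distance, which requires that the alternatives' failed supports be disjoint so that the $\Theta(k)$-error contributions add without cancellation.
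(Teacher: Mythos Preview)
Your decomposition $p_\Delta \le e^{-\Delta E_Q(\tau)}+e^{-\Delta E_P(\tau)}$ is a bound on the probability for a \emph{single} $x\in\calX_\Delta$; once you apply the union bound you multiply both terms by $|\calX_\Delta|$. That is the step that fails. Under \prettyref{eq:weak-suff} we only have $D(P_n\|Q_n)\gtrsim (\log n)/k$, so $E_P(\tau)\ge c\eta^2 D\gtrsim (\log n)/k$ cannot beat the entropy $|\calX_\Delta|\approx (Ck^{17}n)^{\Delta/2}$ unless $k=O(1)$. The paper's cure is not a sharper enumeration for small $\Delta$; it is an \emph{asymmetric} union bound: the red--edge event $\{\sum_{e\in E_{\Red}(x)}L_e\le \ell\tau\}$ depends on $x$ only through $E_{\Red}(x)$, so it is union--bounded with the much smaller count $|\mathcal{E}_{\Red}(\ell)|$ from \prettyref{lmm:red} (which lacks the troublesome $(nk)^{\ell/k}$ factor), while only the blue term carries $|\calX_\ell|$ and is compensated by the large $E_Q(\tau)\approx D$. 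Your stratification point is also inverted: the contribution of $\Delta\le\Delta_0\asymp k$ to $\Expect[d]$ is at most $O(k^2)=o(nk)$ with no work; the entire difficulty lies at $\Delta\ge \epsilon_n nk$ with $\epsilon_n=1/(kD)$, and that is where the asymmetric union bound and \prettyref{ass:weakest_regularity} are deployed.

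\textbf{Converse.} Your plan proves a different and strictly weaker statement. First, showing that many $x^{(v)}$ beat $x^*$ in likelihood only shows $\hat x_{\rm ML}\ne x^*$; the MLE outputs a single graph, so ``summing over $\Omega(n)$ winning alternatives'' does not translate into Hamming error for the MLE, and in any case the converse must apply to \emph{all} estimators. Second, even if you upgrade to a Bayes argument with a product prior over disjoint local moves (so that each $v$ is an independent binary test with error $\ge \tfrac12 e^{-(1+o(1))kD}$), the resulting bound is
\[
\Expect[d]\;\ge\;\Theta(k)\cdot |V|\cdot e^{-(1+o(1))kD}\;\le\;\Theta(nk)\cdot e^{-(1+o(1))kD}.
\]
When $kD=(1-\eta)\log n$ this gives only $\Expect[d]\ge k\,n^{\eta-o(1)}=o(nk)$, which does \emph{not} preclude almost exact recovery. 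A testing/large--deviation argument of this type cannot reach $\Omega(nk)$ here. The paper instead uses a rate--distortion argument: it upper--bounds $I(w;x^*)\le nk\,D(P_n\|Q_n)$ directly, and lower--bounds $I(\hat x;x^*)$ by $H(x^*)-\sup\{H(x^*\mid\tilde x):\Expect[d(\tilde x,x^*)]\le 2nk\epsilon_n\}=(1+o(1))n\log n$, the conditional entropy being controlled via the counting bounds of \prettyref{lmm:red}--\prettyref{lmm:blue}. Comparing the two sides gives $kD\ge (1+o(1))\log n$.
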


\prettyref{thm:weak} should be compared with the exact recovery threshold $\liminf (2\alpha_n/\log n)>1$ for $2 \leq k \leq n^{o(1)}$; the latter is always stronger, since
\[
\alpha_n=-2\log \int \sqrt{dP_ndQ_n}= -2\log \mathbb{E}_{P_n}\sqrt{\frac{dQ_n}{dP_n}}\leq -2 \mathbb{E}_{P_n}\log \sqrt{\frac{dQ_n}{dP_n}}=D(P_n\|Q_n),
\]
by Jensen's inequality. 
Unlike exact recovery, the almost exact recovery threshold is inversely proportional to $k$. Intuitively, this is because
almost exact recovery only requires one
to distinguish the latent $2k$-NN graph $x^\ast$ 
from those $2k$-NN graphs 
that differ from $x^*$ by $\Omega(kn)$ edges; in contrast, as shown in \prettyref{sec:exact.lower}, the condition for exact recovery arises from eliminating those solutions differing from $x^\ast$ by four edges.




Similar to \prettyref{thm:exact}, 
\prettyref{thm:weak} is applicable to a wide class of weight distributions, for example:
\begin{itemize}
\item for $P_n=\mathcal{N}(\mu_n,1)$ and $Q_n=\mathcal{N}(\nu_n,1)$, the sharp threshold for almost exact recovery is
\[
\liminf_{n\rightarrow \infty}\frac{k(\mu_n-\nu_n)^2}{2\log n}>1;
\]
\item for $P_n= \Pois(\mu_n)$ and $Q_n=\Pois(\nu_n)$, the sharp threshold for almost exact recovery is
\[
\liminf_{n\rightarrow\infty}\frac{k\left(\mu_n\log(\mu_n/\nu_n)+\nu_n-\mu_n\right)}{\log n}>1.
\]
\end{itemize}

The proof of \prettyref{thm:weak} follows the same strategy as that in \cite{HajekWuXu_one_info_lim15}, which studies recovering a hidden community (densely-connected subgraph) in a large weighted graph; specifically, the sufficient condition for almost exact recovery is established by analyzing the (suboptimal) MLE\footnote{For almost exact recovery, the optimal estimator that minimizes the objective $\expect{d(\hat x,x^*)}$ 
is the bit-wise maximum a posterior (MAP) estimator: $\hat{x}_e(w)=1$
if $\prob{x^*_e=1|w} \ge \prob{x^*_e=0|w}$; and $\hat{x}_e(w)=0$ otherwise.} and the necessary condition follows from a mutual information and rate-distortion argument. 
Nevertheless, as our model differs significantly from the hidden community model, the proof here requires much more sophisticated techniques, involving a delicate union bound to separate the contributions of the red edges from blue
edges and crucially relying on a sequence of counting lemmas for NN graphs shown earlier in Lemmas \ref{lmm:Xdeltacard}--\ref{lmm:blue}.

\subsection{Proof of correctness of MLE for almost exact recovery}
We abbreviate the MLE $\hat{x}_{\rm ML}$ as $\hat{x}$ in the proof below.
 For any $2k$-NN graph $x\in\calX$, recall from \prettyref{sec:exact-suff}
the difference graph $G(x)$ defined by $x-x^*$. Let $E_\Red(x)$ and $E_\Blue(x)$ denote the 
 set of red and blue edges in $G(x)$, respectively. 
 Let $\Delta= | E_\Red(\hat{x})) | = d(\hat{x},x^*)/2$. Then $ 0\le \Delta \le nk $. To prove the sufficiency, it suffices to show that 
$\prob{ \Delta \ge \epsilon_n nk} =o(1)$ for some $\epsilon_n=o(1)$ to be chosen. 

Recall that $\calX_\ell$ is the set of all $x \in \calX$ such that $G(x)$ contains exactly 
 $\ell$ red edges, i.e., $d(x, x^*)=2\ell$.  For any $ 1 \le \Delta \le nk$ and any $\tau \in \reals$, we have that
\begin{align*}
\{ \Delta=\ell \}  & \subset \{ \exists x \in \calX_\ell:  \iprod{L}{ x-x^*} >0  \} \\
& \subset \sth{   \exists x \in \calX_\ell: \sum_{e \in E_\Red(x)} L_e < \sum_{e \in E_\Blue(x)  } L_e } \\
&  \subset \sth{\exists x \in \calX_\ell: \sum_{e \in E_\Red(x)} L_e  \le \ell \tau }
\cup \sth{ \exists x \in \calX_\ell:\sum_{e \in E_\Blue(x)  } L_e \ge \ell \tau}.
\end{align*}

For each $x \in \calX_\ell,$ we have that
\begin{align*}
\sum_{e \in E_\Red(x)} L_e  \overset{d}{=} \sum_{i=1}^\ell X_i,  \quad 
\sum_{e \in E_\Blue(x)  } L_e  \overset{d}{=} \sum_{i=1}^\ell Y_i.
\end{align*}
Recall that $\mathcal{E}_\Red(\ell)=\{ E_\Red(x): x\in X_\ell\}$ stands for the set of all possible $E_\Red(x)$ where $x$ ranges over all possible $2k$-NN graphs in $\calX$
with $d(x,x^*)=2\ell$. Note that 
$$
\sth{ \exists x \in \calX_\ell:  \sum_{e \in E_\Red(x)} L_e  \le \ell \tau }
= \sth{ \exists E \in \mathcal{E}_\Red(\ell): \sum_{e \in E} L_e \le \ell \tau }. 
$$
By the union bound and Chernoff's bound \prettyref{eq:chernoff}, we get that
\begin{align}
\nonumber\prob{\Delta=\ell} & \le \left|  \mathcal{E}_\Red(\ell) \right| \prob{ \sum_{i=1}^\ell X_i \le \ell \tau }
+ \left| \calX_\ell \right|  \prob{ \sum_{i=1}^\ell Y_i \ge \ell \tau } \\
\label{eq:tilt}& \le \left|  \mathcal{E}_\Red(\ell) \right|  e^{- \ell E_P(\tau) } + \left| \calX_\ell \right|  e^{- \ell E_Q(\tau) }.
\end{align}
Note that for $\ell \ge \epsilon_n nk$, it follows from \prettyref{lmm:red} that
$$
\left|  \mathcal{E}_\Red(\ell) \right|  \le (96k^2)^\ell \binom{kn}{\lfloor \ell/2\rfloor}
\le (96k^2)^\ell \left(  \frac{ 2e nk }{\ell} \right)^{\ell/2}
\le (96k^2)^\ell \left(  \frac{ 2e  }{\epsilon_n} \right)^{\ell/2},
$$
where we used $\binom{n}{m} \le (en/m)^m$ and $\ell \le \epsilon_n nk$.
Similarly, combining \prettyref{lmm:red} and \prettyref{lmm:blue}, we have for $ \epsilon_n nk \le \ell \le nk $, 
\begin{align*}
\left| \calX_\ell \right| \le & (96k^2)^{\ell}{kn\choose \lfloor\ell/2\rfloor} \cdot 2(32k^3)^{2\ell }\ell^{\ell/k}\\
\le &  (96k^2)^{\ell} \left(\frac{2ekn}{\ell}\right)^{\ell/2}2(32k^3)^{2\ell }\ell^{\ell/k}\\
\le & (96k^2)^\ell \left(  \frac{ 2e  }{\epsilon_n} \right)^{\ell/2}  2(32k^3)^{2\ell }(nk)^{\ell/k},
\end{align*}
where the last inequality comes from the range of $\ell$. Thus for any $\epsilon_n nk \le \ell \le nk$,
$$
\prob{\Delta=\ell} 
\le e^{-\ell E_1} + e^{-\ell E_2}
$$
with
\begin{align*}
E_1 \triangleq & E_P(\tau) - \frac{1}{2} \log \frac{1}{\epsilon_n} - 2 \log k -O(1)  ,\\
E_2 \triangleq &  E_Q(\tau) - \frac{1}{k} \log n -  \frac{1}{2} \log \frac{1}{\epsilon_n}  - 8 \log k - O(1).
\end{align*}
By \prettyref{eq:weak-suff}, we have $k D(P_n\|Q_n) (1-\eta) \geq \log n $ for some $\eta \in (0,1)$. 
Choose $\tau = (1-\eta) D(P_n\|Q_n)$.
By the assumption \prettyref{eq:divquadassump}, we have
\begin{align*}
E_1 \geq c  \eta^2 D(P_n\|Q_n)   - \frac{1}{2} \log \frac{1}{\epsilon_n} - 2 \log k -O(1).
\end{align*}
Using the fact that $E_P(\tau)=E_Q(\tau)-\tau$, we have
\begin{align*}
E_2 & \geq c \eta^2   D(P_n\|Q_n)   - \frac{1}{2}  \log \frac{1}{\epsilon_n}
 +  D(P_n\|Q_n) (1-\eta) - \frac{1}{k} \log n - 8 \log k - O(1) \ \\
 & \geq c \eta^2  D(P_n\|Q_n)    - \frac{1}{2}  \log \frac{1}{\epsilon_n}  - 8 \log k - O(1).
\end{align*}
Since $k \log k =o(\log n)$ and $k D(P_n\|Q_n) \geq \log n $, it follows that $ D(P_n\|Q_n) = \omega(  \log k)$. 
Therefore, setting $\epsilon_n = 1/ \left( k D( P_n \| Q_n) \right) $, it follows that 
$E \triangleq \min\{E_1,E_2\} = \Omega(D(P_n\|Q_n))$. 
Hence, 
\begin{align*}
\prob{\Delta \ge \epsilon_n nk }  & = \sum_{\ell \ge \epsilon_n nk }^{nk} \prob{\Delta = \ell}  \\
& \leq \sum_{\ell=\epsilon_n nk }^{\infty} \left( \eexp^{-\ell E_1} + \eexp^{-\ell E_2} \right)  \\
&  \leq \frac{2 \exp(-\epsilon_n n k  E)}{1 - \exp(-E)} = \exp\left( - \Omega(n) \right).
\end{align*}
In other words, the MLE achieves almost exact recovery.


\begin{remark}\label{rmk:measures}
This is a good place to explain how the two distance measures between $P_n$ and $Q_n$ show up for the conditions for exact recovery and almost exact recovery, respectively. The two types of recovery demand control of $\mathbb{P}\{\Delta=\ell\}$ for different ranges of $\ell$. For almost exact recovery, we need to control 
$\mathbb{P}\{\Delta=\ell\}$ for
$\ell\geq \epsilon nk$. In this range, there is a large difference between $|\mathcal{E}_\Red(\ell)|$ and $|\mathcal{X}_\ell|$. Indeed from \prettyref{lmm:blue}, there may be up to $(ck)^{2\ell}\ell^{\ell/k}$ members of $\mathcal{X}_\ell$ with the same set of red edges. Hence for large $\ell$, it is more advantageous to separate the contributions from the red edges and blue edges, as done in~\eqref{eq:tilt}. To balance out the two terms in~\eqref{eq:tilt}, the exponential tilting parameter $\tau$ is chosen so that $E_Q(\tau)$ is large. Given that $E_Q(\tau)$ is an increasing function on $[-D(Q_n\| P_n), D(P_n\| Q_n)]$, we choose $\tau$ close to $D(P_n\| Q_n)$ with $E_Q(\tau)\approx D(P_n\| Q_n)$. As a result, the condition for almost exact recovery emerges from a battle between $D(P_n\| Q_n)$ and $|\mathcal{X}_\ell|$.

Exact recovery, on the other hand, calls for upper bounds on $\mathbb{P}\{\Delta=\ell\}$ for all $\ell\geq 2$. In fact as seen from the lower bound proof of Theorem~\ref{thm:exact}, the bottleneck for exact recovery happens at $\ell=2$, where $|\mathcal{E}_\Red(\ell)|$ and $|\mathcal{X}_\ell|$ are around the same order. In this regime there is no longer any gain in separating the red edge and blue edge sets, and it is more favorable to directly invoke the following 
Chernoff bound:
\begin{align*}
\mathbb{P}\{\Delta=\ell\}\leq & \left|\mathcal{X}_\ell\right|\mathbb{P}\left\{\sum_{i\leq \ell}X_i\leq \sum_{i\leq \ell}Y_i\right\}\\
\leq & \left|\mathcal{X}_\ell\right|\mathbb{P}\left\{-\ell \inf_{\tau}(E_P(\tau)+E_Q(\tau))\right\} = \left|\mathcal{X}_\ell\right| e^{-\ell\alpha_n}.
\end{align*}
See~\cite[page~29]{bagaria2018hidden} for a derivation of the equality $\inf_{\tau}(E_P(\tau)+E_Q(\tau))=\alpha_n$. As a result, the condition for exact recovery is governed by the distance $\alpha_n$.
\end{remark}

\subsection{Information-theoretic lower bound for almost exact recovery}
Suppose that almost exact recovery of $x^*$ is achieved by some estimator $\hat{x}$, such that $\expect{d(\hat{x},x^*)}=2nk\epsilon_n$, for some $\epsilon_n\to 0$. 
We show that \prettyref{eq:weak-necc} must hold.
First, we can assume, WLOG, $\hat{x}$ takes value in $\mathcal{X}$, the set of all $2k$-NN graphs. Indeed, if we set 
\begin{equation}
\hat{x}'=\argmin_{x\in \mathcal{X}} d(x,\hat{x}),
\label{eq:hatx-mod}
\end{equation}
 then 
$d(\hat{x}',x^*)\leq d(\hat{x}',\hat{x})+ d(\hat{x},x^*) \leq 2 d(\hat{x},x^*)$ and hence 
$\expect{d(\hat{x}',x^*)}\leq 4nk\epsilon_n$;
in other words, $\hat{x}'$ also achieves almost exact recovery. 

Since $x^*\to w \to \hat x$ form a Markov chain, by the data processing inequality of mutual information, we have
\begin{align}
I(w; x ^*)\ge I(\hat{x}, x ^*)  & \ge \inf\{I(\tilde{x},x^*): \tilde x \in \calX, \expect{d(\tilde{ x } ,x^*)} \le 2nk  \epsilon_n  \} \label{eq:dpi1} \\
&= H(x^*) - \sup\{H(x^* | \tilde{x} ): \tilde x \in \calX, \expect{d(\tilde{ x } ,x^*)} \le 2nk  \epsilon_n  \} \label{eq:dpi2} 
\end{align}
where the infimum in \prettyref{eq:dpi1}, known as the rate-distortion function, is taken over all conditional distributions $P_{\tilde x|x^*}$ satisfying the constraints.
Note that $H(x^*)=\log (n!)= (1+o(1)) n \log n$. Moreover from~\prettyref{lmm:red} and~\prettyref{lmm:blue},
for any fixed $\tilde{x} \in \calX$, the number of possible  $x^* \in \calX$ such that 
$d(\tilde{x}, x^*)=2\ell$ is at most 
$$
 (96k^2)^{\ell}{kn\choose \lfloor\ell/2\rfloor}\cdot 2(32k^3)^{2\ell }\ell^{\ell/k}
\le 2 (c_k n/\ell )^{\ell/2} (nk) ^{ \ell/k} ,
$$
where $c_k=2e 96^2 2^{20} k^{17}$ and we used the fact that $\ell \le nk$. 
Therefore,
$$
H( x^* | \tilde{x }, d(\tilde{x },x ^*)=2\ell )  \le \frac{\ell}{2} \log ( c_k n)  -  \frac{\ell}{2} \log \ell + \frac{\ell}{k} \log (nk) +  \log 2.
$$
By the convexity of $x\mapsto x \log x$ and Jensen's inequality, it follows that 
$$
H(x^* | \tilde{x}, d(\tilde{x}, x^*)) \le  \frac{\expect{ d(\tilde{ x } ,x^*)} }{4} \sth{ \log ( c_k n) 
- \log \frac{ \expect{ d(\tilde{ x } ,x^*)}}{2} + \frac{2  }{k} \log (nk)}  +  \log 2.
$$
Furthermore, $d(x,x^*)$ takes values in $\{0,\ldots,2nk\}$. Thus from the chain rule, 
\[
H(x^* | \tilde{x } ) = H(d(\tilde{x}, x^*)| \tilde{x})+H(x^* | \tilde{x}, d(\tilde{x}, x^*)) \leq 
\log(1+2nk) + H(x^* | \tilde{x}, d(\tilde{x}, x^*)),
\] 
and hence
\begin{align*}
& \sup\{H(x^* | \tilde{x} ): \tilde x \in \calX, \expect{d(\tilde{ x } ,x^*)} \le 2nk  \epsilon_n  \} \\
\le & \frac{ 1}{2} \epsilon_n nk  \left[ \log ( c_k n ) - \log \left( \epsilon_n nk  \right) \right] +  \epsilon_n n   \log ( nk)  
 +  \log 2 + \log (1+2nk) = o(n \log n),
\end{align*}
where the last equality holds due to the assumption that $k\log k=o(\log n)$. 
Therefore, we get from \prettyref{eq:dpi2} 
that $I(w;x^* ) \ge (1+o(1)) n \log n$. On the other hand,
$$
I(w; x^*) = \min_{Q_w} \Expect_{x^*}[D( {P}_{w|x^*} \| Q_w)]
\le \Expect_{x^*}[D( {P}_{w|x^*}  \| Q_n^{\otimes \binom{n}{2} })]
 = nk D(P_n\| Q_n ),
$$
where the minimum is taken over all distribution $Q_w$, achieved at $Q_w=P_w$.
This yields the desired $k D(P_n\| Q_n ) \geq (1+o(1))\log n$.

\section{Discussion on efficient recovery algorithms}
\label{sec:discussion}

As shown in~\prettyref{sec:exact} and~\prettyref{sec:weak}, the sharp thresholds for exact and almost exact recovery can both be attained by the MLE \prettyref{eq:mle}, which, however, entails solving the computationally 
intractable max-weight $2k$-NN subgraph problem. 
So far no polynomial-time algorithm is known to achieve the sharp thresholds for exact or almost exact recovery except when $k=1$~\cite{bagaria2018hidden}. 
In Section~\ref{sec:alg.general}, we consider several computationally efficient algorithms to recover the hidden $2k$-NN graph and analyze their statistical properties.
In Section~\ref{sec:alg.bernoulli}, we focus on the special case of small-world graphs where the edge weights are distributed Bernoulli and give a polynomial time algorithm that achieves the sharp threshold for exact recovery.

\subsection{Efficient recovery algorithms under the general hidden $2k$-NN graph model}\label{sec:alg.general}
For simplicity we focus on the Gaussian model with weight distribution $P_n=\mathcal{N}(\mu_n,1)$ and $Q_n=\mathcal{N}(0,1)$ for $\mu_n>0$. Analysis in this section can be extended to general weight distributions.
From Theorems \ref{thm:exact} and~\ref{thm:weak}, under the Gaussian model, the sharp thresholds for exact recovery
(for $2 \le k \le n^{o(1)}$)
 and almost exact recovery (for $1 \le k \le o(\log n/\log \log n)$) are 
\begin{equation}
\liminf_{n\rightarrow\infty}\frac{\mu_n^2}{2\log n}>1
\label{eq:exact-gaussian}
\end{equation}
and
\begin{equation}
\liminf_{n\rightarrow\infty}\frac{k\mu_n^2}{2\log n}>1,
\label{eq:almostexact-gaussian}
\end{equation}
respectively.
Since the log likelihood ratio is given by $L_e=\log \frac{\diff P_n}{\diff Q_n}(w_e)=\mu_nw_e-\mu_n^2$, the MLE \prettyref{eq:mle} can be equivalently written as
\begin{align}
\widehat{x}_{\text{ML}} = \argmax_{x \in \calX} \langle w,x\rangle.
\label{eq:mle_gaussian}
\end{align}
When $k=1$, the formulation~\prettyref{eq:mle_gaussian} is known as \emph{max-weighted Hamiltonian cycle};
the previous work~\cite{bagaria2018hidden} analyzes its $2$-factor integer linear program (ILP) relaxation  and fractional $2$-factor linear program (LP) relaxation, and show that they achieve
the sharp exact recovery threshold $\liminf_{n\rightarrow\infty}\frac{\mu_n^2}{4\log n}>1$.
This motivates us to consider the ILP relaxation and LP relaxation for general $k$.

\paragraph*{$2k$-factor ILP relaxation} 
By relaxing the $2k$-NN graph constraint in the MLE~\prettyref{eq:mle_gaussian} 
to a degree constraint, 
we arrive at the following $2k$-factor ILP:
\begin{align}
\widehat{x}_{2k\text{F}}= \argmax_x & \; \langle w,x\rangle \label{eq:2kF}\\
\text{s.t.   } &  \sum_{v\sim u}x_{(u,v)}=2k, \;\; \forall u,\nonumber\\
& x_e\in \{0,1\},\;\; \forall e\nonumber
\end{align}
where the first constraint enforces that every vertex has degree $2k$. 
It is known that for constant $k$, the 
ILP \prettyref{eq:2kF} can be solvable in $O(n^4)$ time~\cite{letchford2008odd}.
 
To analyze $\widehat{x}_{2k\text{F}}$, note that each feasible solution $x$ to the ILP is a $2k$-regular graph. Therefore, the difference graph $x-x^*$ is still balanced and the simple bound~\eqref{eq:Xdeltacard_simple} continues hold:
\[
\left|\mathcal{X}_\Delta \right|\leq (4kn)^\Delta,
\]
where $\mathcal{X}_\Delta$ is the collection of $2k$-regular graphs $x$ such that the difference graph $x-x^*$ contains exactly $\Delta$ red edges.
Moreover, for $x\in \mathcal{X}_\Delta$, $\langle w,x-x^*\rangle\sim \mathcal{N}(-\Delta\mu_n, 2\Delta)$. Hence from the union bound
\[
\mathbb{P}\{\widehat{x}_{2k\text{F}}\neq x^*\}\leq \sum_{\Delta\geq 1}(4kn)^\Delta \exp\left(-\frac{\Delta\mu_n^2}{4}\right)=\sum_{\Delta\geq 1}\exp\left(-\Delta\left(\frac{\mu_n^2}{4}-\log (4kn)\right)\right),
\]
we conclude that when $2\leq k\leq n^{o(1)}$, $\widehat{x}_{2k\text{F}}$ achieves exact recovery if $\liminf_{n\rightarrow\infty}\mu_n^2/(4\log n)>1$, which is suboptimal by a multiplicative
factor of $2$ compared to the sharp threshold \prettyref{eq:exact-gaussian}.

In fact, $\widehat{x}_{2k\text{F}}$ provably fails to attain the sharp exact recovery threshold when $2\leq k\leq n^{o(1)}$.
To see this, assume that $x^*$ is associated with the identity permutation and consider its modifications of the following form: fix two vertices $i,j$ for which $d_{x^*}(i,j)>k$, remove the edges $(i,i+1)$ and $(j,j+1)$ in $x^*$ and add the edges $(i,j)$ and $(i+1,j+1)$, resulting in a $2k$-regular graph $x^{(i,j)}$ feasible to \prettyref{eq:2kF}. 
There are $O(n^2)$ such modified solutions and the difference in weights $\Iprod{w}{x^{(i,j)}-x^*}$
are close to being mutually independent. Each $x^{(i,j)}$ corresponds to a difference graph with $\Delta=2$ red edges. Hence $\langle w,x^{(i,j)}-x^*\rangle\sim \mathcal{N}(-2\mu_n, 4)$. 
By following the similar lower bound proof for exact recovery in~\prettyref{sec:exact.lower},  
we can conclude that if  $\liminf_{n\rightarrow\infty}\mu_n^2/(4\log n)<1$, 
then with high probability there is at least one feasible solution $x^{(i,j)}$ such that 
$\langle w,x^{(i,j)}-x^* \rangle>0$, yielding $\widehat{x}_{2k\text{F}} \neq x^*$.

\paragraph*{LP relaxation} By further relaxing the integer constraint in $\widehat{x}_{2k\text{F}}$, we arrive at the following LP:
\begin{align*}
\widehat{x}_{\text{LP}}= \argmax_x & \; \langle w,x\rangle\\
\text{s.t.   } &  \sum_{v\sim u}x_{(u,v)}=2k, \;\; \forall u,\\
& x_e\in [0,1],\;\; \forall e.
\end{align*}
Since $\widehat{x}_{\text{LP}}$ is a relaxation of $\widehat{x}_{2k\text{F}}$,
it follows from the above negative result of ILP that  
$\widehat{x}_{\text{LP}} \neq x^*$ when $\liminf_{n\rightarrow\infty}\mu_n^2/(4\log n)<1$. 
In the positive direction, one can show that $\widehat{x}_{\text{LP}}$ 
also achieves exact recovery for $1\leq k\leq n^{o(1)}$ 
when $\liminf_{n\rightarrow\infty}\mu_n^2/(4\log n)>1$.
That is because firstly, solutions to the LP must be half-integral ($x_e\in \{0,1/2,1\}$ for all $e$). Moreover, the difference graph $x-x^*$ can be represented by a balanced multigraph with edge multiplicity at most 2 (we refer the reader to~\cite{bagaria2018hidden} for details). The rest of the proof follows exactly from the proof of~\cite[Theorem~1]{bagaria2018hidden}.

To sum up, when it comes to exact recovery, the statistical performance for $\widehat{x}_{2k\text{F}}$ and $\widehat{x}_{\text{LP}}$ match in the asymptotics. They both require $\liminf_{n\rightarrow\infty}\mu_n^2/(4\log n)>1$, which is suboptimal by a factor of two. 
Whether they can achieve almost exact recovery under weaker conditions remains open.

\paragraph*{Simple thresholding} 
To partially address the problem of almost exact recovery, we consider a na\"ive thresholding estimator $\widehat{x}_{\text{TH}}$ given by
\[
\widehat{x}_{\text{TH}}(e)=\mathds{1}\left\{w_e>\sqrt{(2+\epsilon_n)\log n}\right\},
\]
where $\epsilon_n$ is any sequence such that $\epsilon_n=o(1)$ and $\epsilon_n = \omega(\frac{1}{\log n})$.
For each edge $e$ in the true $2k$-NN graph $x^*$, $w_e\sim \mathcal{N}(\mu_n,1)$ and thus
\[
\mathbb{P}\{\widehat{x}_{\text{TH}}(e)=0\}\leq \exp\left(-(\mu_n-\sqrt{(2+\epsilon_n)\log n})^2/2\right);
\]
Similarly for edge $e$ not in $x^*$,
\[
\mathbb{P}\{\widehat{x}_{\text{TH}}(e)=1\}\leq \exp(-(\sqrt{(2+\epsilon_n)\log n})^2/2)= n^{-(1+\epsilon_n/2)}.
\]
Recall that $d(\widehat{x}_{\text{TH}}, x^*)=\sum_e \mathds{1}\{\widehat{x}_{\text{TH}}(e)\neq x^*(e)\}$. We have
\[
\mathbb{E}\left[ d \left(\widehat{x}_{\text{TH}}, x^* \right) \right]
\leq kn\exp\left(-(\mu_n-\sqrt{(2+\epsilon_n)\log n})^2/2\right) + n^2\cdot n^{-(1+\epsilon_n/2)}.
\]
Since $\epsilon_n\rightarrow 0$, the first term is of order $o(nk)$ when $\liminf_{n\rightarrow\infty} \mu_n^2/(2\log n)>1$.
The second term is of order $o(n)$ as $\epsilon_n \log n \to +\infty$.
In other words, the thresholding estimator  $\widehat{x}_{\text{TH}}$ achieves almost exact recovery under the assumption $\liminf  \mu_n^2/(2\log n)>1$, which is optimal for $k=1$ in view of \prettyref{eq:almostexact-gaussian}. 

It is worth pointing out that $\widehat{x}_{\text{TH}}$ may not be a valid $2k$-NN graph. One can of course consider the modified estimator \prettyref{eq:hatx-mod} by projecting $\hat x$ to the set of $2k$-NN graphs; however, it is unclear whether this can be done in polynomial time. It is an interesting open problem whether a computationally efficient 
$2k$-NN graph estimator can be obtained from $\widehat{x}_{\text{TH}}$ and still inherits the almost exact recovery guarantee $\liminf_{n\rightarrow\infty} \mu_n^2/(2\log n)>1$.

\paragraph*{Spectral methods}
For a variety of problems such as clustering and community detection, spectral methods have been successfully used to recover the hidden structures
based on the principal eigenvectors of the observed graph.
In our model, for the weighted adjacency matrix $W$,  the principal eigenvectors of $\expect{W}$ contain perfect information about the hidden
$2k$-NN graph. Indeed,  to see this, in the Gaussian model the weighted adjacency matrix $W$ can be expressed as 
$$
W= \mu_n  \Pi B \Pi^\top + Z,
$$
where $\Pi$ is the permutation matrix associated with the
hidden $2k$-NN graph;  $B$ is the adjacency matrix of the basic $2k$-NN graph where 
$B_{ij}=1$ if $\min \{|i-j|, n- |i-j|\}\leq k$ and $B_{ij}=0$ otherwise; and $Z$ is a symmetric Gaussian
matrix with zero diagonal and $Z_{ij}=Z_{ji}$ independently drawn from $\calN(0,1)$ for $i<j$.
Since $B$ is a circulant matrix, its eigenvalues and eigenvectors 
can be determined by the discrete Fourier transform of the window function:
\[
\lambda_j = \sum_{\ell=-k}^k \exp\left(\frac{2\pi i j \ell}{n}\right) -1 =
\frac{\sin \frac{(2k+1)j\pi}{n}}{\sin \frac{j\pi}{n}} - 1, \quad j=0,\ldots, n-1
\]
where $i=\sqrt{-1}$ is the imaginary unit, $\lambda_0=2k$ (degree) and $\lambda_{n-j}=\lambda_j$ which decays similarly to the $\text{sinc}$ function. 
Furthermore, the eigenvector $v_1$ of $\Pi B \Pi^\top$ corresponding to $\lambda_1$ encodes the permutation matrix $\Pi$ perfectly
as
$
v_1 = \Pi (\omega^{0}, \ldots,\omega^{n-1})^\top,
$
where $\omega = \exp \left( i \frac{2 \pi  } {n} \right)$ is the $n^{\rm th}$ root of unity.
Thus in the noiseless case one can exactly recover the underlying permutation $\Pi$ and hence the hidden $2k$-NN graph from $v_1$.
Unfortunately,  it turns out that $v_1$ is very sensitive to the noise perturbation due to the small eigengap.
In particular, the eigengap of $\lambda_1-\lambda_2 \sim k^3/n^2$, while the 
spectral norm of the noise perturbation $\|Z\|_2$ is on the order of $\sqrt{n}$. 
Thus, by the Davis-Kahan theorem, the second eigenvector of $W$ is close to $v_1$ if
$\mu_n k^3 /n^2 \gtrsim \sqrt{n}$, \ie, $\mu_n \gtrsim n^{5/2}/k^3$.\footnote{In fact, following the analysis of spectral ordering in~\cite{cai2017detection}, one can show that the almost exact recovery can be efficiently achieved via sorting the entry-wise angles of the second eigenvector of $W$ under a slightly higher SNR: $\mu_n \gg n^{7/2}/k^4$.} 
This is highly suboptimal when $k=n^{o(1)}$. 

\subsection{Achieving the exact recovery threshold under the small-world model}\label{sec:alg.bernoulli}
Although designing efficient algorithms that achieve the sharp thresholds appears challenging under the general hidden $2k$-NN graph model, the task turns out to be more manageable for the special case of the Watts-Strogatz small-world graph model where the edges are unweighted. Recall the special case \prettyref{eq:smallworld} considered in the introduction with $P_n=\Bern(p)$ and $Q_n=\Bern(q)$, where
\begin{equation}
p=1-\epsilon + \frac{2\epsilon k}{n-1} \quad \text{and}  \quad q= \frac{2\epsilon k}{n-1}.
\label{eq:smallworld1}
\end{equation}
The observed graph $w\in\{0,1\}^{{n\choose 2}}$ can be viewed as a noisy version of the true $2k$-NN graph $x^*$. By~\prettyref{thm:exact}, for $2\leq k\leq n^{o(1)}$, the sharp threshold for exact recovery is $\lim\inf_{n\rightarrow\infty} (-2\log \epsilon/\log n)>1$, i.e., $\epsilon \leq n^{-\frac{1}{2} - \Omega(1)}$.
 We show that the greedy algorithm below succeeds under this condition. 
As mentioned in \prettyref{sec:intro}, to exactly recover
$x^*$, it suffices to recover the corresponding Hamiltonian cycle identified by a permutation $\sigma^*$. Similar to the enumeration scheme that lies at the heart of the proof of~\prettyref{lmm:blue}, the algorithm first determines the neighborhood of one vertex and their ordering on the Hamiltonian cycle, and then sequentially finds the remaining vertices to complete the cycle.

\begin{algorithm}[H]
Start from an arbitrary vertex $i_0$ and let $\hat{\sigma}(1)=i_0$\;
{\bf Step 1 (label the neighbors of $\hat{\sigma}(1)$):}\\
Let $\mathcal{N}\triangleq \mathcal{N}_w(\hat{\sigma}(1))$ be the set of vertices incident to $\hat{\sigma}(1)$ in $w$\;
\eIf{The subgraph of $w$ induced by $\mathcal{N}$ is isomorphic to the subgraph of $x^*$ induced by $\mathcal{N}_{x^*}(1)$}
{Use the subgraph of $w$ induced by $\mathcal{N}$ to determine (up to a reversal) the ordering $(\hat{\sigma}(n-k+1), ...,\hat{\sigma}(n),\hat{\sigma}(2),..., \hat{\sigma}(k+1))$}
{Report error and terminate.}
{\bf Step 2 (label the remaining vertices sequentially):}\\
$\mathcal{V}_{labeled}\triangleq  \{\hat{\sigma}(n-k+1), ..., \hat{\sigma}(n), \hat{\sigma}(1), ..., \hat{\sigma}(k+1)\}$\;
\For {i=1 to n-2k-1} {
$\mathcal{U}\triangleq \mathcal{N}_w(\hat{\sigma}(i+1))\backslash \mathcal{V}_{labeled}$\;
	\Switch{$|\mathcal{U}|$}{
            	\Case{$|\mathcal{U}|\geq 2$}{
			\eIf{exactly one member $u$ of $\mathcal{U}$ is incident to $\hat{\sigma}(i+2)$} {Set $\hat{\sigma}(i+k+1)=u$.} {Report error and terminate.}
            	}
            	\Case{$|\mathcal{U}|=1$}{
            		Set $\hat{\sigma}(i+k+1)$ be the vertex in $\mathcal{U}$\;
            	}
            	\Case{$|\mathcal{U}|=0$}{
			\eIf{exactly one member $v$ of $\mathcal{N}_w(\hat{\sigma}(i+2))$ is incident to exactly $k$ members of $\mathcal{V}_{labeled}^c$}
			{Set $\hat{\sigma}(i+k+1)=v$}
			{Report error and terminate.}
            	}
        }
        $\mathcal{V}_{labeled}\triangleq  \mathcal{V}_{labeled}\cup \{\hat{\sigma}(i+k+1)\}$\;
}
Output $x(\hat{\sigma})$, the $2k$-NN graph corresponding to $\hat{\sigma}$.
\caption{Greedy algorithm for exact recovery under the small-world model}
\label{alg:exact}
\end{algorithm}

Since it suffices to recover $\sigma^*$ up to cyclic shifts and reversals,
we can assume WLOG that $i_0=\sigma^*(1)$. In Step 1 of~\prettyref{alg:exact}, one needs to order the members of $\calN$ from the subgraph of $x^*$ induced by $\mathcal{N}$.
We will show, with high probability, $\mathcal{N}=\calN_w(i_0)$ coincides with the true neighborhood $\mathcal{N}_{x^*}(\hat{\sigma}(1))$ and the subgraph of $w$ induced by $\mathcal{N}$ 
is identical to that of $x^*$. Therefore, we can infer the ordering of members of $\mathcal{N}$
using the nearest-neighbor structure of $x^*$. In particular, $\sigma^*(n-k+1)$ and $\sigma^*(k+1)$ are the only two vertices in $\mathcal{N}$ that are incident to exactly $k-1$ vertices in $\mathcal{N}$; having determined $\sigma^*(n-k+1)$, $\sigma^*(n-k+2)$ is the only vertex in $\mathcal{N}$ that is incident to $\sigma^*(n-k+1)$ and exactly $k-1$ other vertices in $N$; similarly $\sigma^*(n-k+3)$ can be uniquely determined given $\sigma^*(n-k+1)$ and $\sigma^*(n-k+2)$, so on and so forth.

Step 1 of~\prettyref{alg:exact} relies on the fact that with high probability, $w$ and $x^*$ completely agree in the neighborhood near the fixed vertex $i_0$. This does not hold uniformly for all vertices. However we will show that uniformly for all $i\in [n]$, $w$ and $x^*$ differ by at most one edge in the neighborhood near $i$. This fact is crucial for the success of the second step. Now we present the exact recovery guarantee of the algorithm.

\begin{theorem}\label{thm:exact.alg}
Assume that $k=n^{o(1)}$ and $\lim\inf_{n\rightarrow\infty} (-2\log \epsilon/\log n)>1$, then with probability $1-o(1)$, \prettyref{alg:exact} runs successfully
and returns $x(\hat{\sigma})=x^*$. 
In other words, \prettyref{alg:exact} achieves exact recovery.
\end{theorem}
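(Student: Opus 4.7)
I plan to exhibit a high-probability event $\mathcal{G}$ under which \prettyref{alg:exact} never triggers ``report error'' and returns $x^*$. WLOG take $\sigma^*$ to be the identity permutation and the seed $i_0=1$. The two error probabilities are $1-p\leq \epsilon$ (a true edge missing) and $q=O(\epsilon k/n)$ (a spurious edge present). The hypothesis $\liminf(-2\log\epsilon/\log n)>1$ together with $k=n^{o(1)}$ yields $\epsilon\leq n^{-1/2-\Omega(1)}$, so every expression of the form $\epsilon k^c$ or $n\epsilon^2 k^c$ with $c=O(1)$ is $o(1)$; this is the budget that drives all the union bounds below.

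For Step 1 I would define $\mathcal{G}_1$ as the event that $w$ and $x^*$ agree on every edge incident to $i_0$ and on every edge of the induced subgraph on $\mathcal{N}_{x^*}(i_0)$. A union bound over the $n-1-2k$ potential spurious edges from $i_0$ and the $O(k^2)$ edges internal to $\mathcal{N}_{x^*}(i_0)$ gives $\mathbb{P}(\mathcal{G}_1^c)=O(\epsilon k+\epsilon k^2)=o(1)$. On $\mathcal{G}_1$ one has $\mathcal{N}_w(i_0)=\mathcal{N}_{x^*}(i_0)$ and the restriction of $w$ to this neighborhood matches that of $x^*$, so the circular order is recoverable up to a reversal: the two outermost vertices $\sigma^*(n-k+1)$ and $\sigma^*(k+1)$ are the only members of $\mathcal{N}_{x^*}(i_0)$ with exactly $k-1$ neighbors inside $\mathcal{N}_{x^*}(i_0)$, and from either endpoint the remaining labels are fixed one at a time by the analogous degree criterion. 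This yields $\hat{\sigma}(j)=\sigma^*(j)$ for $j\in\{n-k+1,\ldots,k+1\}$.

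For Step 2 I would proceed by induction on $i$: at iteration $i$ suppose $\hat{\sigma}(j)=\sigma^*(j)$ for $j\in\{n-k+1,\ldots,i+k\}$. Three families of bad events govern failure at this iteration: $\mathcal{B}_i^{(a)}$, at least two edges incident to $\sigma^*(i+1)$ are corrupted in $w$; $\mathcal{B}_i^{(b)}$, the true edge $(\sigma^*(i+1),\sigma^*(i+k+1))$ is missing and a corruption near $\sigma^*(i+2)$ spoils the uniqueness condition in the $|\mathcal{U}|=0$ branch; $\mathcal{B}_i^{(c)}$, there exists a spurious edge $(\sigma^*(i+1),v)$ with $v\in\mathcal{V}_{\text{labeled}}^c$ for which $(\sigma^*(i+2),v)$ is also present in $w$, so that the $|\mathcal{U}|\geq 2$ verification returns two matches. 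Each event requires two essentially independent edge corruptions in a localized region, yielding $\mathbb{P}(\mathcal{B}_i^{(a)}\cup\mathcal{B}_i^{(b)}\cup\mathcal{B}_i^{(c)})=O(\epsilon^2 k^{O(1)}+\epsilon k/n)$ and hence $o(1)$ after union-bounding over $i\leq n$. On the complement a short branch-by-branch check verifies the correct output in each of the algorithm's three cases: if $e(\sigma^*(i+1))=0$ or the lone local error is absorbed into $\mathcal{V}_{\text{labeled}}$ then $\mathcal{U}=\{\sigma^*(i+k+1)\}$ and the $|\mathcal{U}|=1$ branch succeeds; a spurious edge to $\mathcal{V}_{\text{labeled}}^c$ triggers the $|\mathcal{U}|\geq 2$ branch, whose verification is resolved by $\mathcal{B}_i^{(c)c}$; a missing edge to $\sigma^*(i+k+1)$ triggers the $|\mathcal{U}|=0$ branch, whose uniqueness is delivered by $\mathcal{B}_i^{(b)c}$.

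\textbf{Main obstacle.} The delicate step is the $|\mathcal{U}|=0$ analysis. A direct computation of unperturbed labeled-complement degrees inside $\mathcal{N}_{x^*}(\sigma^*(i+2))$ reveals a natural tie: both $\sigma^*(i+k)$ and $\sigma^*(i+k+1)$ have exactly $k$ neighbors in $\mathcal{V}_{\text{labeled}}^c$ once $k\geq 3$, so the ``unique vertex with $k$ incidences'' criterion only pins down $\sigma^*(i+k+1)$ after restricting attention to unlabeled candidates in $\mathcal{N}_w(\sigma^*(i+2))\setminus\mathcal{V}_{\text{labeled}}$. One must then enumerate the single-error perturbations that could either knock $\sigma^*(i+k+1)$ off degree $k$ or promote another unlabeled $v$ to exactly $k$, and verify that each such configuration forces a second independent edge error in a bounded-size neighborhood, contributing at most $O(\epsilon^2 k^{O(1)})$ to $\mathbb{P}(\mathcal{B}_i^{(b)})$. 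This degree-profile bookkeeping, which guarantees the correctness of Case 3 against all $O(k)$ potentially conflicting candidates, is where the combinatorial work concentrates.
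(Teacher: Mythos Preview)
Your strategy mirrors the paper's: define a high-probability good set, then verify the algorithm branch by branch. The paper packages this into three global events---$\mathcal{A}_n$ (all edges touching the seed neighborhood are correct), $\mathcal{B}_n$ (for every $j$, at most one edge touching $\mathcal{N}_{x^*}(\sigma^*(j))\cup\{\sigma^*(j)\}$ is corrupted), and $\mathcal{C}_n$ (no spurious edge $(\sigma^*(j+1),\sigma^*(j+k+2))$)---whereas your per-iteration events $\mathcal{B}_i^{(a)},\mathcal{B}_i^{(b)},\mathcal{B}_i^{(c)}$ effectively rebuild $\mathcal{B}_n\cap\mathcal{C}_n$ locally. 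The paper's decomposition is cleaner to state and avoids re-deriving the same probability bound at every $i$, but the arithmetic is the same and your version works too.

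Your ``main obstacle'' is more than bookkeeping: it exposes a genuine gap in the paper's own proof. When $k\geq 3$ and Case~$|\mathcal{U}|=0$ fires, the only corruption permitted by $\mathcal{B}_n$ near $\sigma^*(i+1)$ is the missing edge $(\sigma^*(i+1),\sigma^*(i+k+1))$; under this single error, a direct count shows that \emph{both} $\sigma^*(i+k)$ and $\sigma^*(i+k+1)$ lie in $\mathcal{N}_w(\sigma^*(i+2))$ and each has exactly $k$ $w$-neighbors in $\mathcal{V}_{\text{labeled}}^c$. So the algorithm as written reports error, and the paper's assertion that ``$\sigma^*(i+k+1)$ is the only one'' is false for $k\geq 3$. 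Your proposed repair---restricting the search in Case~3 to $\mathcal{N}_w(\hat\sigma(i+2))\setminus\mathcal{V}_{\text{labeled}}$---is exactly right: among unlabeled candidates only $\sigma^*(i+k+1)$ and $\sigma^*(i+k+2)$ remain, with counts $k$ and $k+1$ respectively, so uniqueness holds. With that one-line amendment to the algorithm, your per-iteration analysis (each $\mathcal{B}_i^{(\cdot)}$ requires two independent corruptions in a size-$O(k)$ neighborhood, except the $v=\sigma^*(i+k+2)$ configuration which costs only $q=O(\epsilon k/n)$ and is the paper's $\mathcal{C}_n$) goes through and yields the stated conclusion.
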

\begin{proof}
Let us start with some notation. Recall that the set of true neighbors of a vertex $j$ is denoted as $\mathcal{N}_{x^*}(j)$. Let $E_{x^*}(j)$ (resp.~$E_w(j)$) denote the set of edges in $x^*$ (resp.~$w$) whose endpoints contain at least one member of $\mathcal{N}_{x^*}(\sigma^*(j))\cup \{\sigma^*(j)\}$. 
We claim that under the assumption $\lim\inf_{n\rightarrow\infty} (-2\log \epsilon/\log n)>1$, the following events occur simultaneously with high probability:
\begin{itemize}
\item $\mathcal{A}_n = \{E_w(1) = E_{x^*}(1)\}$; 
\item $\mathcal{B}_n = \{|E_w(j)\Delta E_{x^*}(j)|\leq 1,\forall j\}$;
\item $\mathcal{C}_n = \{w(\sigma^*(j+1), \sigma^*(j+k+2))=0,\forall j = 1,..., n-2k-1\}$.
\end{itemize}
First we argue that on $\mathcal{A}_n\cap \mathcal{B}_n\cap \mathcal{C}_n$,~\prettyref{alg:exact} correctly recovers $x^*$. Under $\mathcal{A}_n$, the subgraphs of $w$ induced by $\mathcal{N}_{x^*}(\sigma^*(1))$ coincides with that of $x^*$. 
Hence~\prettyref{alg:exact} successfully recovers $(\sigma^*(n-k+1),..., \sigma^*(n),\sigma^*(2),...,\sigma^*(k+1))$ up to a reversal. WLOG say $\hat{\sigma}(i)=\sigma^*(i)$ for all $\sigma^*(i)\in \mathcal{N}_{x^*}(\sigma^*(1))$. 

Next we show inductively that the algorithm correctly labels all the remaining vertices. Start from the inductive hypothesis that $\hat{\sigma}(j)=\sigma^*(j)$ for all $j\leq i+k$. Recall that $\mathcal{U}=\mathcal{N}_w(\hat{\sigma}(i+1))\backslash \mathcal{V}_{labeled}$ and the algorithm considers the following three cases according to the size of $\mathcal{U}$:
\begin{enumerate}
\item $|\mathcal{U}|\geq 2$. Under $\mathcal{B}_n$, we must have $|\mathcal{U}|=2$ because otherwise $E_w(i+1)\backslash E_{x^*}(i+1)$
contains at least two edges, contradicting $\calB_n$. Write $\mathcal{U}=\{u,v\}$. One of $u,v$ must be $\sigma^*(i+k+1)$, because otherwise $E_w(i+1)\backslash E_{x^*}(i+1)$ contains at least the two edges $(\sigma^*(i+1),u)$ and $(\sigma^*(i+1),v)$, contradicting $\mathcal{B}_n$. Say $u=\sigma^*(i+k+1)$. Then $(\sigma^*(i+1),u)$ is the only member of $E_w(i+1)\Delta E_{x^*}(i+1)$.
Thus $u$ and $\sigma^*(i+2)$ must be neighbors in $w$.
Under $\mathcal{C}_n$, $v$ cannot be $\sigma^*(i+k+2)$. Hence $v$ and $\sigma^*(i+2)$ are not neighbors in $x^*$.
Consequently, they cannot be neighbors in $w$, because otherwise both $(\sigma^*(i+2),v)$ and $(\sigma^*(i+1),v)$ belong
to $E_w(v)\Delta E_{x^*}(v)$, contradicting $\calB_n$. Under the induction hypothesis, 
$\hat{\sigma}(i+2)=\sigma^*(i+2)$. Hence, 
$u$ is the only member of $\mathcal{U}$ that is incident to $\hat{\sigma}(i+2)$ in $w$
and thus the algorithm successfully identifies $u$ as $\sigma^*(i+k+1)$.  
\item $|\mathcal{U}|=1$. In this case the element $u$ in $\mathcal{U}$ must be $\sigma^*(i+k+1)$ because otherwise both $(\sigma^*(i+1), \sigma^*(i+k+1))$ and $(\sigma^*(i+1), u)$ are contained in $E_w(i+1)\Delta E_{x^*}(i+1)$, contradicting $\mathcal{B}_n$.
\item $|\mathcal{U}|=0$. Under $\mathcal{B}_n$, $(\sigma^*(i+1), \sigma^*(i+k+1))$ is the only member of $E_w(i+1)\Delta E_{x^*}(i+1)$. As a result, $\sigma^*(i+2)$ must be incident to all of its $2k$ true neighbors. These $2k$ neighbors contain $\sigma^*(i+k+1)$, and under $\mathcal{B}_n$, $\sigma^*(i+k+1)$ is the only one that is incident to exactly $k$ members of $\mathcal{V}_{labeled}^c$. Thus the algorithm can always identify $\sigma^*(i+k+1)$.
\end{enumerate}

It remains to show that all the following three events occur with high probability. 
Under the assumption $\lim\inf_{n\rightarrow\infty} (-2\log \epsilon/\log n)>1$, there exists some positive constant $\eta$ such that $\epsilon<n^{-1/2-\eta}$ for large enough $n$. By \prettyref{eq:smallworld1}, $1-p<n^{-1/2-\eta}$ and $q<n^{-3/2-\eta+o(1)}$ for $k\leq n^{o(1)}$. 

\begin{itemize}
	
\item The event $\mathcal{A}_n$: To show $\prob{\mathcal{A}_n^c}=o(1)$, note that there are $O(k^2)$ edges in $E_{x^*}(1)$. The probability that one of them does not appear in $w$ is upper bounded by $O(k^2)\cdot (1-p)\leq n^{-1/2-\eta+o(1)}=o(1)$. Similarly the probability that an false edge shows up in $E_w(1)$ is at most $O(nk)\cdot q\leq n^{-1/2-\eta+o(1)}=o(1)$. Thus $E_{x^*}(1)=E_w(1)$ with high probability. 

\item The event $\mathcal{B}_n$:
Similar as above, $|E_{x^*}(j)\Delta E_w(j)|$ equals in distribution to $X+Y$ with $X\sim \text{Binom}(n_1, 1-p)$, $Y\sim \text{Binom}(n_2, q)$ with $n_1=O(k^2)$, $n_2=O(nk)$ and $X,Y$ independent. Thus
\[
\mathbb{P}\left\{|E_{x^*}(j)\Delta E_w(j)|>1\right\} = \mathbb{P}\{X+Y>1\}
\leq \mathbb{P}\{X>1\}+\mathbb{P}\{Y>1\} + \mathbb{P}\{X=Y=1\}.
\]
Using the Binomial distributions of $X,Y$, the above can be further bounded by 
\[
{n_1\choose 2}(1-p)^2+{n_2\choose 2}q^2+n_1n_2(1-p)q=o(1/n).
\]
By the union bound,
\[
\prob{\mathcal{B}_n^c} \leq \sum_{j\leq n}\mathbb{P}\left\{|E_{x^*}(j)\Delta E_w(j)|>1\right\} = o(1).
\]

\item The event $\mathcal{C}_n$: The edge $(\sigma^*(j+1), \sigma^*(j+k+2))$ is not in $x^*$. Therefore $\mathbb{P}\{w(\sigma^*(j+1), \sigma^*(j+k+2))=1\}=q=o(n^{-3/2})$. Thus $\prob{\mathcal{C}_n^c}=o(1)$ follows from the union bound.
\end{itemize}
\end{proof}


\section*{Acknowledgment}
J.~Ding is supported in part by the NSF Grant DMS-1757479.
Y.~Wu is supported in part by the NSF Grant CCF-1900507, an NSF CAREER award CCF-1651588, and an Alfred Sloan fellowship.
J.~Xu is supported by the NSF Grants IIS-1838124, CCF-1850743, and CCF-1856424. 
D.~Yang is supported by the NSF Grant CCF-1850743.
We thank David Pollard for his comments and suggestions.
 
\bibliographystyle{alpha}
\bibliography{kNN}

\end{document}